\def \T {\top}
\def \M {\mathbf{M}}
\def \Y {\mathbf{Y}}
\newcommand{\wt}[1]{\widetilde{#1}}
\newcommand{\wh}[1]{\widehat{#1}}
\def \iid {\text{iid}}
\def \ind {\text{ind}}
\newtheorem{theorem}{Theorem}[section]
\newtheorem{lemma}[theorem]{Lemma}
\newtheorem{assumption}[theorem]{Assumption}
\newtheorem{proposition}[theorem]{Proposition}
\date{}
\title{Blessing of dimension in Bayesian inference on covariance matrices}
\author[1]{Shounak Chattopadhyay\thanks{shounakch@virginia.edu}}
\author[2,3,4]{Anru R. Zhang
}
\author[4,5]{David Dunson 
}
\affil[1]{Department of Statistics, University of Virginia}
\affil[2]{Department of Biostatistics and Bioinformatics, Duke University} 
\affil[3]{Department of Computer Science, Duke University} 
\affil[4]{Department of Statistical Science, Duke University}
\affil[5]{Department of Mathematics, Duke University}
\begin{document}
\maketitle
\begin{abstract}
Bayesian factor analysis is routinely used for dimensionality reduction in modeling of high-dimensional covariance matrices. 
Factor analytic decompositions express the covariance as a sum of low-rank and diagonal matrices. 
In practice, Gibbs sampling algorithms are typically used for posterior computation, alternating between updating the latent factors, loadings, and residual variances. 
In this article, we exploit a blessing of dimensionality to develop a provably accurate posterior approximation for the covariance matrix that bypasses the need for Gibbs or other variants of Markov chain Monte Carlo sampling. 
Our proposed Factor Analysis with BLEssing of dimensionality (FABLE) approach relies on a first-stage singular value decomposition (SVD) to estimate the latent factors, and then defines a jointly conjugate prior for the loadings and residual variances. 
The accuracy of the resulting posterior approximation for the covariance improves with increasing samples as well as increasing dimensionality. 
We show that FABLE has excellent performance in high-dimensional covariance matrix estimation, including producing well-calibrated credible intervals, both theoretically and through simulation experiments. 
We also demonstrate the strength of our approach in terms of accurate inference and computational efficiency by applying it to a gene expression dataset.
\end{abstract}

{\small \textsc{Keywords:} {\em Bayes, Factor analysis, High-dimensional, Large p,
  Posterior approximation, Scalable, Singular value decomposition}}

\section{Introduction}
\label{sec:intro}

Inference on covariance matrices in high-dimensional data is a key focus in many application areas, motivating a rich literature on associated statistical methods. 
One thread of this literature focuses on frequentist estimators of high-dimensional covariance matrices.
Such approaches avoid direct modeling of the data and simply estimate the covariance under various assumptions on its inherent low-dimensional structure, including (but not limited to) banded covariance \citep{bickel2008regularized}, low-rank structure \citep{shikhaliev2019low}, low-rank with sparsity \citep{richard2012estimation}, sparse covariance \citep{bien2011sparse}, and sparse inverse precision matrix estimation \citep{zhang2014sparse}.
Our interest is instead in model-based Bayesian approaches, which have advantages in terms of naturally accommodating complexities in the data and quantifying uncertainty, while having disadvantages in terms of computational efficiency, particularly in implementing Markov chain Monte Carlo (MCMC) algorithms to sample from the posterior distribution.
 
Our particular focus is on factor models, which express the covariance matrix as a sum of low-rank and diagonal matrices, and represent a popular probabilistic model-based alternative to principal components analysis (PCA). 
Factor analysis introduces a lower-dimensional latent factor $\eta_i \in \mathbb{R}^k$ corresponding to each observation $y_i \in \mathbb{R}^p$ with $k \ll p$, such that $y_i = \Lambda \eta_i + \epsilon_i$ for a factor loadings matrix $\Lambda \in \mathbb{R}^{p \times k}$ and independent idiosyncratic errors $\epsilon_i \sim N_p(0, \Sigma)$ with diagonal $\Sigma$. 
Following typical practice, we let the latent factors have independent Gaussian priors $\eta_i \sim N_k(0, \mathbb{I}_k)$.
Marginalizing out the latent factors leads to $y_i \sim N_p(0, \Psi)$, with the covariance matrix $\Psi = \Lambda \Lambda^{\top} + \Sigma$. 
This provides a convenient decomposition of the covariance, which dramatically reduces the number of free parameters required to model the $p \times p$ covariance matrix from $\mathcal{O}(p^2)$ to $\mathcal{O}(pk)$, since $\Lambda \in \mathbb{R}^{p \times k}$ and $\Sigma \in \mathbb{R}^{p \times p}$ is diagonal.

There is a vibrant recent literature improving upon and expanding the scope of Bayesian factor analysis  \citep{schiavon2022generalized, de2021bayesian, fruhwirth2023generalized, roy2021perturbed, ma2022posterior, bolfarine2022decoupling, xie2022bayesian}. 
Even with increasingly rich classes of priors and data types such as massive binary/count data, the canonical approach for posterior computation remains Gibbs samplers that iterate between updating the latent factors, factor loadings, residual variances, hyperparameters controlling the hierarchical prior, and other model parameters. 
This approach is simple to implement in broad model classes, but it commonly faces problems with slow mixing, particularly as data dimensionality and complexity increase. 
Issues with slow mixing and bottlenecks with large datasets have motivated alternative posterior sampling algorithms that apply Hamiltonian Monte Carlo (HMC) after marginalizing out the latent factors, exploiting sufficient statistics and parallel computation to facilitate implementation for large sample sizes \citep{chandra2023inferring, man2022mode}. 
Nonetheless, all current MCMC sampling algorithms for large covariance matrices face substantial computational hurdles as the number of dimensions increases.



The lack of sufficiently scalable MCMC algorithms has motivated a rich literature on  developing more computationally efficient algorithms for Bayesian inference in factor models. 
Variational Bayes approaches \citep{hansen2023fast, wang2021empirical} attempt to approximate the posterior, but often the accuracy of the approximation is poor, with a tendency to dramatically underestimate posterior uncertainty. 
A concise review comparing variational approximations and posterior sampling approaches for factor analysis is provided in \cite{foo2021comparison}. 
Maximum {\em a posteriori} (MAP) estimation under sparsity priors \citep{srivastava2017expandable} and expectation-maximization (EM) algorithms \citep{rovckova2016fast,  avalos2022heterogeneous, zhao2016bayesian} provide computationally efficient point estimates of the covariance matrix. 
\cite{sabnis2016divide} develop a distributed computing framework aggregating local estimates of the covariance across different cores.
\cite{srivastava2017expandable}, \cite{rovckova2016fast}, and \cite{sabnis2016divide} provide theoretical results regarding estimation accuracy and posterior concentration, without providing any guarantees on uncertainty quantification (UQ). 
A key open problem in the literature remains how to obtain an efficient posterior approximation for a high-dimensional covariance that has guarantees in terms of valid frequentist coverage of credible intervals.

Fast algorithms for Bayesian factor analysis capable of scaling efficiently to high-dimensional data typically sacrifice the ability to provide an accurate characterization of uncertainty. 
The focus of this article is on proposing a simple approach for overcoming this limitation, providing computationally efficient inference on a high-dimensional covariance matrix.
Assuming the rank $k$ remains fixed, we have more and more variables loading on the same small number $k$ of latent factors with increasing dimensions.
As a consequence, we obtain a blessing of dimensionality phenomenon allowing us to first pre-estimate the latent factors and then leverage this point estimate to obtain an approximate posterior distribution of the factor loadings $\Lambda$ and error variances $\Sigma$ under conjugate priors on these parameters. 
This in turn induces a posterior approximation for the covariance matrix $\Psi = \Lambda \Lambda^{\top} + \Sigma$.
The error in estimating the covariance based on this posterior approximation decreases as both the number of samples and number of dimensions increases. 
Our proposed Factor Analysis with BLEssing of dimensionality (FABLE) approach completely bypasses MCMC. 
Furthermore, we develop a data-adaptive coverage-correction methodology which ensures that the credible intervals for the entries of the covariance have guarantees regarding valid frequentist coverage.

\cite{fan2023latent} also pre-estimate latent factors, but in the context of regression with high-dimensional correlated covariates. 
In their approach, a first-stage principal components analysis is carried out for the high-dimensional covariates. 
The results are then used to obtain a modified covariate vector consisting of principal component scores and de-correlated versions of the original covariates. 
Using this new covariate vector in the second stage high-dimensional regression avoids multicollinearity. 
This focus, and the corresponding methodology, are substantially different from ours.
A blessing of dimensionality phenomenon is also explored in \cite{li2018embracing}, where estimation of the covariance between a subset of the variables of interest is improved by incorporating auxiliary information from other variables. 
Their methodology, which focuses on optimizing a constrained weighted least-squares problem, is also considerably different from ours, although we demonstrate similar gains with FABLE in a real-world gene expression data application when estimating covariance submatrices of relevant subsets of genes.


In Section \ref{sec:method}, we develop our FABLE methodology and provide an approach to choose key hyperparameter values. In Section \ref{sec:theory}, we provide theoretical results on posterior contraction rates and UQ, leveraging a blessing of dimensionality phenomenon. In Section \ref{sec:simulation}, we validate our approach through comparison with other state-of-the-art approaches from the viewpoint of estimation error, frequentist coverage, and computational runtime in various numerical experiments. In Section \ref{sec:application}, we present a relevant application of our approach to a large gene expression dataset and highlight distinct advantages over competitors. In Section \ref{sec:discussion}, we provide a discussion on future research directions.

\section{Proposed Methodology}
\label{sec:method}

\subsection{Overview}
\label{subsec:genapproach}

The observed data consist of $\mathbf{Y} = \left[y_1, \ldots, y_n \right]^{\T} \in \mathbb{R}^{n \times p}$, where $y_i \in \mathbb{R}^{p}$ for $i=1,\ldots,n$. 
We consider a factor model for $y_i$ given by
\begin{equation}
\label{eq:LFM}
y_i = \Lambda \eta_i + \epsilon_i,\qquad \epsilon_i \overset{\iid}{\sim} N_p \left(0, \Sigma \right),
\end{equation}
where we have omitted the intercept term, assuming the data have been centered prior to analysis. 
Here, $\Lambda \in \mathbb{R}^{p \times k}$ is an unknown matrix of factor loadings, $\eta_i \overset{\iid}{\sim} N_k \left(0, \mathbb{I}_k \right)$ is the $i$th latent factor, and $\epsilon_i$ is a zero-mean idiosyncratic error having diagonal covariance $\Sigma = \mbox{diag} \left(\sigma_1^2, \ldots, \sigma_p^2 \right)$ for $i=1,\ldots,n$. 
Integrating out the latent factors leads to 
\begin{equation}
\label{eq:LFMmarginal}
    y_i \mid \Lambda, \Sigma \, \overset{\iid}{\sim} \, N_p \left(0, \Lambda \Lambda^{\T} + \Sigma \right).
\end{equation}
In this paper, our goal is to estimate $L = \Lambda \Lambda^{\top}, \Sigma,$ and
$\Psi = \Lambda \Lambda^{\T} + \Sigma$. 
It is typical to assume that $k \ll p$.
This facilitates dimension reduction of the high-dimensional $y_i \in \mathbb{R}^p$ to the lower-dimensional $\eta_i \in \mathbb{R}^k$, along with ensuring that the high-dimensional covariance matrix $\Psi \in \mathbb{R}^{p \times p}$ of the $y_i$s may be interpreted as the sum of low-rank and diagonal matrices.
We first illustrate our methodology assuming $k$ is known and later discuss an approach to estimate $k$ in Section \ref{subsubsec:tunek}.
To ease exposition, we suppress the dependence of relevant quantities on $k$ unless mentioned otherwise.

Let $\mathbf{M} = \left[\eta_1,\ldots,\eta_n \right]^{\T} \in \mathbb{R}^{n \times k}$ and $\Lambda = \left[\lambda_1, \ldots, \lambda_p \right]^{\T} \in \mathbb{R}^{p \times k}$, with $\lambda_j^{\T}$ denoting the $j$th row of $\Lambda$. 
We also denote the $j$th column of $\mathbf{Y}$ as $y^{(j)}$, so that $\mathbf{Y} = [y^{(1)}, \ldots, y^{(p)} ].$ The latent factor model \eqref{eq:LFM} can be alternatively expressed as
\begin{equation}
\label{eq:LFM2}
    y^{(j)} = \mathbf{M} \lambda_j + \epsilon^{(j)},
\end{equation}
where $\epsilon^{(j)}$ is the $j$th column of the matrix $E = \left[\epsilon_1, \ldots, \epsilon_n \right]^{\T} = [\epsilon^{(1)}, \ldots, \epsilon^{(p)} ],$ with $\epsilon^{(j)} \overset{\ind}{\sim} N_n \left(0, \sigma_j^2 \mathbb{I}_n \right)$ for $j=1,\ldots,p$. Writing in matrix form, we obtain
\begin{equation}
    \label{eq:LFM-matrix}
    \mathbf{Y} = \mathbf{M} \Lambda^{\T} + E.
\end{equation}

We now outline our strategy to obtain an approximation for the posterior of $\Psi$ from which computationally efficient samples can be drawn. 
First, we estimate $\mathbf{M}$ by a plug-in estimator $\widehat{\mathbf{M}}$
based on the singular value decomposition (SVD) of $\mathbf{Y}$ and then consider a surrogate model to \eqref{eq:LFM2} where $\mathbf{M}$ is replaced by $\widehat{\mathbf{M}}$. Conditional on the observed data, $\widehat{\mathbf{M}}$ is fixed. 
Thus, the surrogate model is equivalent to $p$ parallel regression problems. 
Next, in the surrogate model, we endow the regression coefficient and the error variance for the $j$th regression with conjugate normal-inverse gamma prior distributions for $j=1,\ldots,p$.  
We then sample the regression coefficients and error variances jointly from the posterior approximations, across the $p$ different regressions in parallel. 
Finally, we combine the obtained samples of the regression coefficients and the error variances to obtain a sample of $\Psi$ drawn from the posterior approximation. 

To illustrate our choice for $\widehat{\mathbf{M}}$, we start with the SVD of $\mathbf{Y}$, given by
\begin{equation}
    \label{eq:SVD}
    \mathbf{Y} = U D V^{\T} + U_{\perp} D_{\perp} V_{\perp}^{\T},
\end{equation}
where $U \in \mathbb{R}^{n \times k}, D \in \mathbb{R}^{k \times k}, V \in \mathbb{R}^{p \times k}, U_{\perp} \in \mathbb{R}^{n \times (r-k)}, D_{\perp} \in \mathbb{R}^{(r-k) \times (r-k)},$ and $V_{\perp} \in \mathbb{R}^{p \times (r-k)}$, with $r = n \wedge p.$ 
The columns of $U, U_{\perp}, V, V_{\perp}$ consist of orthonormal vectors, with $U^{\T} U_{\perp} = V^{\T} V_{\perp} = \mathbf{O}_{k \times (r-k)}$, where $\mathbf{O}_{p_1 \times p_2}$ denotes the $p_1 \times p_2$ matrix with all entries equal to $0$. 
Let $\mathbf{A} = \mathbf{Y}V / \sqrt{p} = UD / {\sqrt{p}}$ and  $\widehat{C} \in \mathbb{R}^{k \times k}$ be an invertible matrix satisfying 
\begin{equation}
    \label{eq:Chat}
    \widehat{C} \widehat{C}^{\T} = \dfrac{1}{n} \mathbf{A}^{\T} \mathbf{A} = \dfrac{D^2}{np}.
\end{equation}
An immediate choice of $\widehat{C}$ satisfying \eqref{eq:Chat} is given by $\widehat{C} = D / \sqrt{np}$. 
Then, we let 
\begin{equation}
    \label{eq:Mhat}
    \widehat{\mathbf{M}} = \mathbf{A} \left(\widehat{C}^{\T} \right)^{-1}.
\end{equation}
For the choice of $\widehat{C} = D / \sqrt{np},$ it is straightforward from \eqref{eq:Mhat} that $\widehat{\mathbf{M}} = \sqrt{n} U.$ 
We will denote this particular choice of $\widehat{\M} = \sqrt{n} U$ as the {\em canonical choice} of $\widehat{\mathbf{M}}$.
However, other choices of $\wh{\mathbf{M}}$ are also possible for different choices of $\widehat{C}$ satisfying \eqref{eq:Chat}, such as $\wh C = DQ / \sqrt{np}$ for any $k \times k$ rotation matrix $Q$ satisfying $QQ^\top = \mathbb{I}_k$.
If either $n$ or $p$ is very large, approximate SVD approaches \citep{halko2011finding} could be implemented instead of \eqref{eq:SVD} to improve computational efficiency.
We provide a heuristic leading to \eqref{eq:Mhat} in Section \ref{subsec:preest}.

Given a particular choice of $\widehat{\mathbf{M}}$, we consider the following surrogate model:
\begin{equation}
    \label{eq:LFM-surrogate}
    y^{(j)} = \widehat{\mathbf{M}} \widetilde{\lambda}_j + \widetilde{\epsilon}^{ (j)},\quad 
    \widetilde{\epsilon}^{(j) } \overset{\ind}{\sim} N_n \left(0, \widetilde{\sigma}_j^{ 2} \mathbb{I}_n \right)
\end{equation}
for $j=1,\ldots,p$. 
The model \eqref{eq:LFM-surrogate} could be interpreted as a version of \eqref{eq:LFM2} with $\widehat{\mathbf{M}}$ substituted for the original matrix of latent factors $\mathbf{M}$ and surrogate model parameters $\widetilde{\lambda}_j \in \mathbb{R}^{k}$, $\widetilde{\sigma}_j^{2} > 0$ for $j=1,\ldots,p$, endowed with independent normal-inverse gamma (NIG) priors $\left(\widetilde{\lambda}_j, \widetilde{\sigma}_j^2 \right) \overset{\iid}{\sim} \mbox{NIG}\left(0_k, \tau^2 \mathbb{I}_k, \gamma_0/2, \gamma_0 \delta_0^2 / 2 \right)$. 
That is, we let
\begin{align}
\label{eq:prior}
    \widetilde{\lambda}_j \mid \widetilde{\sigma}_j^{2} & \sim N_k \left(0_k, \widetilde{\sigma}_j^2 \, \tau^2 \, \mathbb{I}_k \right),\quad 
    \widetilde{\sigma}_j^{2} \sim \mbox{IG}\left(\dfrac{\gamma_0}{2}, \dfrac{\gamma_0 \delta_0^2}{2}\right).
\end{align}
The global shrinkage parameter $\tau^2$ allows us to \textit{a priori} shrink the factor loadings towards zero, regularizing $\widetilde{\Lambda} = \left[\widetilde{\lambda}_{1}, \ldots, \widetilde{\lambda}_{p} \right]^{\T}$. We discuss a data-driven strategy to estimate $\tau^2$ in Section \ref{subsubsec:tunetau}.

The surrogate model \eqref{eq:LFM-surrogate} and the conjugate prior specification \eqref{eq:prior} motivate the following updating scheme.
We first define for $j = 1, \ldots, p$:
\begin{align}
    \label{eq:hyper}
    \begin{split}
        \mu_j & = \left(\widehat{\mathbf{M}}^{\T} \widehat{\mathbf{M}} + \dfrac{\mathbb{I}_k}{\tau^2} \right)^{-1}\widehat{\mathbf{M}}^{\T} y^{(j)} = \dfrac{\wh{\mathbf{M}}^\top y^{(j)}}{n + \tau^{-2}} ,\\
    \mathbf{K} & = \left(\widehat{\mathbf{M}}^{\T} \widehat{\mathbf{M}} + \dfrac{\mathbb{I}_k}{\tau^2} \right)^{-1} = \dfrac{\mathbb{I}_k}{n + \tau^{-2}},\\ 
    \gamma_n & = \gamma_0 + n,\\
    \gamma_n \delta_j^2 &= \gamma_0 \delta_0^2 +  \left(y^{(j) \T}y^{(j)} - \mu_j^{\T} \mathbf{K}^{-1} \mu_j\right),
    \end{split}
\end{align} 
and proceed to sample
\begin{equation}
\label{eq:FABLESampling}
    \widetilde{\lambda}_j \mid \widetilde{\sigma}_j^2, \Y \sim N_k\left(\mu_j, \rho^2 \widetilde{\sigma}_j^2 \mathbf{K} \right), \quad \widetilde{\sigma}_j^2 \mid \Y \sim \mbox{IG}\left(\gamma_n/2, \gamma_n \delta_j^2 / 2 \right), \, \, \text{for $1 \leq j \leq p.$}
\end{equation}
The sampling scheme in \eqref{eq:FABLESampling} is obtained by treating $\widehat{\mathbf{M}}$ as a fixed design matrix for the $j$th surrogate regression and then applying Bayes' rule, along with an additional coverage-correction coefficient $\rho^2$ when sampling the loadings $\widetilde{\lambda}_j$ conditional on the sampled $\widetilde{\sigma}_j^2$. 
The coverage-correction coefficient $\rho^2$ is chosen to ensure 
that the average asymptotic frequentist coverage of the $100(1-\alpha)\%$ credible intervals across the $p(p+1)/2$ distinct entries of the resulting covariance matrix also equals $1-\alpha$, for a given $\alpha \in (0,1)$. 
We describe the strategy to appropriately choose $\rho^2$ in Section \ref{subsec:cov_correction}.
The matrix inverse in the expressions for $\mu_j$ and $\mathbf{K}$ in \eqref{eq:hyper} is easily obtained by observing that $\wh{\mathbf{M}}^\top \wh{\mathbf{M}} = n\, \mathbb{I}_k$, 
from Proposition \ref{prop:mhat}.
We denote the posterior approximation in \eqref{eq:FABLESampling} by $\widetilde{\Pi}_j$, defined as
\begin{align}
\label{eq:pseudo-posterior}
    \begin{split}
\widetilde{\Pi}_j \left(\widetilde{\lambda}_j, \widetilde{\sigma}_j^{2} \right) & = \mbox{NIG}\left(\widetilde{\lambda}_j,  \widetilde{\sigma}_j^{2} \mid \mu_j, \rho^2 \mathbf{K}, \gamma_n / 2, \gamma_n \delta_j^2 / 2 \right)\\
& = N_k\left(\widetilde{\lambda}_j \mid \mu_j, \rho^2 \widetilde{\sigma}_j^2 \mathbf{K}\right) \mbox{IG}\left(\widetilde{\sigma}_j^2 \, \middle\vert \, \dfrac{\gamma_n}{2} , \dfrac{\gamma_n \delta_j^2}{2} \right), \quad 1 \leq j \leq p.
    \end{split}
\end{align}

To obtain independent posterior samples of $\Psi$, we first draw independent samples $(\widetilde{\lambda}_j, \widetilde{\sigma}_j^{2} ) \overset{\ind}{\sim} \widetilde{\Pi}_j$ for $j=1,\ldots,p$. We then let $\widetilde{\Lambda} = [\widetilde{\lambda}_1, \ldots, \widetilde{\lambda}_p]^{\T}, \widetilde{L} = \widetilde{\Lambda} \widetilde{\Lambda}^{\top}, \widetilde{\Sigma} = \mbox{diag}(\widetilde{\sigma}_1^{2}, \ldots, \widetilde{\sigma}_p^{2}),$ and $\widetilde{\Psi} = \widetilde{L} + \widetilde{\Sigma}$ denote the posterior samples of $\Lambda, L = \Lambda \Lambda^{\top}$, $\Sigma$, and $\Psi$, respectively. 
Our proposed Factor Analysis with BLEssing of dimensionality (FABLE) approach obtains independent posterior draws of the covariance matrix $\Psi$ in an embarrassingly parallel fashion, entirely bypassing the need to carry out MCMC. 
We refer to the posterior distribution arising from the sampling scheme in \eqref{eq:FABLESampling} as the FABLE-posterior throughout this paper.

The FABLE-posterior mean  $\widehat{\Psi}$ is available explicitly,
with its $(u,v)$th entry for $1 \leq u,v \leq p$ given by
\begin{align}
\label{eq:pseudoPostMean}
\begin{split}
\widehat{\Psi}_{uv} & = \left\{\begin{array}{ll}
\mu_u^{\top} \mu_v, & \text{if $u \neq v,$}\\
\|\mu_u\|_2^2 + \left(1 + \dfrac{k\rho^2}{n + \tau^{-2}}\right)\dfrac{\gamma_n \delta_u^2}{\gamma_n - 2}, & \text{if $u = v,$}
\end{array}
\right.
\end{split}
\end{align}
provided $\gamma_n > 2$. For finite $k, \rho^2,$ and $\tau^2$, the FABLE-posterior mean $\widehat{\Psi}$ can be approximated with $\widehat{\Psi} \approx G_0 G_0^{\top} + \Delta,$ where $G_0 = [\mu_1, \ldots, \mu_p]^{\top}$, $\Delta = \mbox{diag}(\delta_1^2, \ldots, \delta_p^2)$, and accuracy improves as $n$ increases. For simplicity, we refer to $\widehat{\Psi} = G_0 G_0^{\top} + \Delta$ as the FABLE-posterior mean of $\Psi$ in what follows.

\subsection{Pre-estimating the Latent Factors}
\label{subsec:preest}




We now describe the heuristic leading to \eqref{eq:Mhat}. 
From \eqref{eq:LFM-matrix}, the matrix $\mathbf{A} = \mathbf{Y}V / \sqrt{p}$ satisfies
$$\mathbf{A} = \mathbf{M} \, \left(\dfrac{\Lambda^{\T}V}{\sqrt{p}} \right) + \dfrac{EV}{\sqrt{p}} = \mathbf{M} \, C^\top + \dfrac{EV}{\sqrt{p}},$$
where $C = V^{\top} \Lambda / \sqrt{p} \in \mathbb{R}^{k \times k}$. 
Based on the consistency of spectral estimates \citep{chen2021spectral}, we expect $E$ to be approximately independent of $V$ as both $n, p$ grow. 
As a result, we expect $EV / \sqrt{p} \approx 0$ for increasing $p$. This leads us to
\begin{equation}
    \label{eq:A}
    \mathbf{A} \approx \mathbf{M} \, C^{\top}
\end{equation}
or equivalently, $a_i \approx C\eta_i$, where $a_i$ is the $i$th row of $\mathbf{A}$, for $i = 1, \ldots, n$. 
Since $\eta_i \sim N_k(0, \mathbb{I}_k),$ the marginal density of $a_i$ is approximately given by $a_i \overset{\iid}{\sim} N_k(0, CC^{\top})$, for $i = 1, \ldots, n$. 
This motivates the following estimator $\widehat{C}\widehat{C}^{\top}$ of $CC^{\top}$:
$$\widehat{C}\widehat{C}^{\top} = \dfrac{1}{n} \mathbf{A}^{\top} \mathbf{A} = \dfrac{D^2}{np},$$
leading us to \eqref{eq:Chat}. 
Given any invertible $\widehat{C}$ satisfying \eqref{eq:Chat}, we use \eqref{eq:A} to propose $\widehat{\M} = \mathbf{A} (\widehat{C}^{\top})^{-1}$ as an estimator of $\mathbf{M}.$ 
 In particular, the canonical choice $\widehat{\M} = \sqrt{n} U$ obtained with $\wh{C} = D / \sqrt{np}$ is the same as the estimator obtained by carrying out principal components analysis (PCA) on the matrix $\Y \Y^{\top}$ \citep{fan2023latent}. 
However, \eqref{eq:Mhat} permits other choices corresponding to different choices of $\widehat{C}$ satisfying \eqref{eq:Chat}, providing a general framework for obtaining estimates of the latent factors. 

A natural concern is whether different choices of $\widehat{\M}$ satisfying \eqref{eq:Mhat} would affect the FABLE-posterior distribution of $\widetilde{\Psi}$. The following result, with proof in Section \ref{app:propIdentify} of the Supplementary Material, ensures that this is not the case.
\begin{proposition}
\label{prop:mhat}
    (i) For any $\widehat{\mathbf{M}}$ satisfying \eqref{eq:Mhat}, we have $\widehat{\mathbf{M}}^{\T} \widehat{\mathbf{M}} = n \mathbb{I}_k$ and $\widehat{\mathbf{M}} \widehat{\mathbf{M}}^{\T} = n UU^{\T}.$
(ii) The FABLE-posterior distribution of $\widetilde{L}, \widetilde{\Sigma},$ and $\widetilde{\Psi}$ obtained from the {\em FABLE} approach only depends on $\widehat{\M}$ through  $\widehat{\M}^{\top}\widehat{\M}$ and $\widehat{\M}\widehat{\M}^{\top}$.
\end{proposition}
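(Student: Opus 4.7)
For part (i), I would proceed by direct algebraic substitution using the definitions \eqref{eq:Chat} and \eqref{eq:Mhat}. Writing $\widehat{\mathbf{M}}^{\T}\widehat{\mathbf{M}} = \widehat{C}^{-1}(\mathbf{A}^{\T}\mathbf{A})(\widehat{C}^{\T})^{-1}$ and plugging in $\mathbf{A}^{\T}\mathbf{A} = n\widehat{C}\widehat{C}^{\T}$ collapses the product to $n\mathbb{I}_k$. Similarly, $\widehat{\mathbf{M}}\widehat{\mathbf{M}}^{\T} = \mathbf{A}(\widehat{C}\widehat{C}^{\T})^{-1}\mathbf{A}^{\T}$, and substituting $\mathbf{A} = UD/\sqrt{p}$ together with $\widehat{C}\widehat{C}^{\T} = D^2/(np)$ yields $n UU^{\T}$.

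For part (ii), the plan is to first characterize the valid $\widehat{\mathbf{M}}$'s up to a right orthogonal factor, and then verify that the pseudo-posterior is invariant under that rotation. If $\widehat{C}_1$ and $\widehat{C}_2$ both satisfy \eqref{eq:Chat}, they are two square roots of the same positive definite matrix $D^2/(np)$, and since $\widehat{C}_1$ is invertible, $Q := \widehat{C}_1^{-1}\widehat{C}_2$ satisfies $QQ^{\T} = \mathbb{I}_k$; hence $\widehat{C}_2 = \widehat{C}_1 Q$ for some orthogonal $Q$, which in turn gives $\widehat{\mathbf{M}}_2 = \mathbf{A}(\widehat{C}_2^{\T})^{-1} = \widehat{\mathbf{M}}_1 Q$. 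Combined with part (i), which pins down $\widehat{\mathbf{M}}^{\T}\widehat{\mathbf{M}}$ and $\widehat{\mathbf{M}}\widehat{\mathbf{M}}^{\T}$ to the same values for every valid $\widehat{\mathbf{M}}$, the target claim reduces to showing that the joint pseudo-posterior of $(\widetilde{L},\widetilde{\Sigma},\widetilde{\Psi})$ is invariant under $\widehat{\mathbf{M}} \mapsto \widehat{\mathbf{M}} Q$.

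The invariance check is short because part (i) forces $\mathbf{K} = (n+1/\tau^2)^{-1}\mathbb{I}_k$ to be a scalar multiple of the identity, so $\mathbf{K}$ commutes with every orthogonal $Q$. Under $\widehat{\mathbf{M}} \mapsto \widehat{\mathbf{M}} Q$, I would read off from \eqref{eq:hyper} that $\mu_j \mapsto Q^{\T}\mu_j$ while $\mu_j^{\T}\mathbf{K}^{-1}\mu_j$, and hence $\delta_j^2$, are unchanged; since the pseudo-posterior of $\widetilde{\sigma}_j^2$ depends on $\widehat{\mathbf{M}}$ only through $\delta_j^2$, the distribution of $\widetilde{\Sigma}$ is invariant. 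Conditional on $\widetilde{\sigma}_j^2$, the new $\widetilde{\lambda}_j$ follows $N_k(Q^{\T}\mu_j,\widetilde{\sigma}_j^2\mathbf{K})$, which is exactly the distribution of $Q^{\T}\widetilde{\lambda}_j^{\mathrm{old}}$ (using $Q^{\T}\mathbf{K} Q = \mathbf{K}$), so jointly $\widetilde{\Lambda}^{\mathrm{new}} \stackrel{d}{=} \widetilde{\Lambda}^{\mathrm{old}} Q$. Therefore $\widetilde{L}^{\mathrm{new}} = \widetilde{\Lambda}^{\mathrm{new}}(\widetilde{\Lambda}^{\mathrm{new}})^{\T} \stackrel{d}{=} \widetilde{\Lambda}^{\mathrm{old}} Q Q^{\T}(\widetilde{\Lambda}^{\mathrm{old}})^{\T} = \widetilde{L}^{\mathrm{old}}$, and $\widetilde{\Psi} = \widetilde{L}+\widetilde{\Sigma}$ inherits the invariance. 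I expect the only mildly delicate point to be the first step, namely verifying that the ambiguity in $\widehat{C}$ is exactly a right orthogonal factor; after that, everything is a direct exploitation of the scalar-matrix form of $\mathbf{K}$ that part (i) hands us.
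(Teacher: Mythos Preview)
Your proof of part (i) is identical to the paper's. For part (ii), your argument is correct but follows a genuinely different route from the paper. The paper proceeds entrywise: it writes a pseudo-posterior draw as $\widetilde{\lambda}_u = \mu_u + \widetilde{\sigma}_u \mathbf{K}^{1/2} Z_u$ with $Z_u \sim N_k(0,\mathbb{I}_k)$, expands the product $\widetilde{\lambda}_u^{\top}\widetilde{\lambda}_v$ into four terms, and checks directly that the law of each term involves $\widehat{\mathbf{M}}$ only via $\mu_u^{\top}\mu_v = (n+\tau^{-2})^{-2}\, y^{(u)\top}\widehat{\mathbf{M}}\widehat{\mathbf{M}}^{\top} y^{(v)}$ and $\|\mu_j\|_2^2$, which are functions of $\widehat{\mathbf{M}}\widehat{\mathbf{M}}^{\top}$ alone (with $\mathbf{K}$ already fixed by $\widehat{\mathbf{M}}^{\top}\widehat{\mathbf{M}}$). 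You instead work at the level of the symmetry group: you characterize the ambiguity in $\widehat{\mathbf{M}}$ as a right orthogonal factor $Q$ and show the joint law of $(\widetilde{\Lambda},\widetilde{\Sigma})$ is $Q$-equivariant in a way that cancels in $\widetilde{L}=\widetilde{\Lambda}\widetilde{\Lambda}^{\top}$. Your approach is cleaner and more conceptual---it pinpoints exactly why the invariance holds (rotational indeterminacy of square roots plus the scalar form of $\mathbf{K}$) and avoids any entrywise bookkeeping; the paper's approach, on the other hand, yields as a by-product explicit expressions for the law of each entry $\widetilde{L}_{uv}$, which are useful elsewhere in the paper (e.g.\ in the coverage analysis). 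Both rely on part (i) to reduce $\mathbf{K}$ to a scalar multiple of the identity.
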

\noindent Unless mentioned otherwise, we let $\widehat{\mathbf{M}} = \sqrt{n} U$ from here on.

\subsection{Coverage Correction}
\label{subsec:cov_correction}

Based on numerous simulations, we observed that after sampling
$$\widetilde{\lambda}_j \mid \widetilde{\sigma}_j^2 \sim N_k \left(\mu_j,  \widetilde{\sigma}_j^2 \mathbf{K} \right), \quad \widetilde{\sigma}_j^2 \sim \mbox{IG}\left(\gamma_n/2, \gamma_n \delta_j^2 / 2 \right),$$
the credible intervals of the entries of $\widetilde{\Psi}$ usually underestimated the uncertainty associated with estimating the entrywise elements of the true covariance matrix, under repeated sampling. 
To alleviate this issue, the FABLE algorithm introduces a coverage-correction coefficient $\rho^2 \geq 1$ in the conditional variance when sampling the factor loadings $\widetilde{\lambda}_j$ conditional on $\wt \sigma_j^2$s, as described in \eqref{eq:FABLESampling}. 
This coefficient $\rho^2$ is adaptively chosen to restore nominal average frequentist coverage over all the entries.
The FABLE-posterior mean $\wh \Psi$ also depends on $\rho$, but the dependence diminishes with growing $n$, as highlighted in \eqref{eq:pseudoPostMean}.
 In contrast, the effect of choosing $\rho$ on UQ remains even with large $n, p$.
 Thus, we 
 focus on guaranteeing valid UQ as the metric when choosing $\rho$.

We choose $\rho$ to ensure that a summary statistic, such as the average or minimum, of the $\binom{p}{2}$ entrywise asymptotic coverages equals $1-\alpha$ and describe the methodology as follows.
First, for $1 \leq u \leq v \leq p$, let $\wh{q}_{uv}(\rho)$ represent the estimated asymptotic coverage as a function of $\rho$ for the $(u,v)$th entry of the true covariance when estimated by $\wt \Psi_{uv} = \wt \lambda_u^\top \wt \lambda_v + \wt{\sigma}_u^2 \, \mathbbm{1}(u = v)$. 
We derive the form of $\wh{q}_{uv}(\rho)$ and show that it is monotonically increasing for $\rho \geq 1$ in Section \ref{subsec:theoryUQ}.
Next, we let
\begin{align}
\label{eq:b_uv}
\begin{split}
b_{uv} & = \left\{\begin{array}{ll}
\left(1 + \dfrac{\|\mu_u\|_2^2 \|\mu_v\|_2^2 + (\mu_u^{\top} \mu_v)^2}{\mathcal{V}_u^2 \|\mu_v\|^2 + \mathcal{V}_v^2 \|\mu_u\|_2^2}\right)^{1/2}, & \text{if $u \neq v,$}\vspace{1.3em}\\ 
\left(1 + \dfrac{\|\mu_u\|_2^2}{2 \mathcal{V}_u^2}\right)^{1/2}, & \text{if $u = v,$}
\end{array}
\right.
\end{split}
\end{align}
for $1 \leq u, v \leq p$, with $\mu_j = \sqrt{n} \, U^{\T} y^{(j)} / (n + \tau^{-2})$ and $\mathcal{V}_j^2 = \|(\mathbb{I}_n - UU^{\top})y^{(j)}\|_2^2 / n$ for $j =1,\ldots, p$. 
The $b_{uv} \geq 1$ are defined so that $\wh{q}_{uv}(b_{uv}) = 1 - \alpha.$ 
That is, $b_{uv}$ represents the coefficient for correcting the coverage of the $(u,v)$th entry of the true covariance with $\wt \Psi_{uv}$.
A computationally convenient expression for $\wh{q}_{uv}(\rho)$ entirely in terms of $b_{uv}$ is available in Section \ref{suppsec:algorithm} of the Supplementary Material.
As a consequence of the monotonicity of $\wh{q}_{uv}(\rho)$ for $\rho \geq 1$, we can ensure that all the entrywise asymptotic coverages are at least $1-\alpha$ by letting $\rho = \sup_{u,v} b_{uv}$, as $\wh{q}_{uv}(\sup_{j,j'} b_{jj'}) \geq \wh{q}_{uv}(b_{uv}) = 1 - \alpha$ for any given $1 \leq u \leq v \leq p.$
Alternatively, we can choose $\rho$ to ensure that the average coverage over all the $p(p+1)/2$ distinct entries of the covariance equals $1 - \alpha$. This is obtained by solving $\rho$ such that
\begin{equation}
    \label{eq:rhoChoiceMean}
    \dfrac{1}{p(p+1)/2} \sum_{1 \leq u \leq v \leq p} \wh{q}_{uv}(\rho) = 1  - \alpha.
\end{equation}
In general, we recommend solving \eqref{eq:rhoChoiceMean} to obtain $\rho$.
We provide provably accurate guarantees on frequentist coverage of the proposed approach in Section \ref{subsec:theoryUQ}.

\subsection{Hyperparameter Choice}
\label{subsec:hyperchoice}

\subsubsection{Tuning $k$}
\label{subsubsec:tunek}

In this Subsection, we explicitly indicate the dependence of all relevant quantities on $k$.
To estimate $k$, we implement the approach described in \cite{chen2022determining}, based on minimizing a joint-likelihood-based information criterion $\mbox{JIC}(k)$ by adding the penalty $\nu(k, n, p) = k (n \vee p) \log(n \wedge p)$ to twice the fitted negative log-likelihood, obtained as follows.
From \eqref{eq:LFM-matrix}, the joint likelihood function of $\left(\mathbf{M}^{(k)}, \Lambda^{(k)}, \Sigma\right)$ for a given $k$ is 
   $$\mathcal{L}\left(\mathbf{M}^{(k)}, \Lambda^{(k)}, \Sigma\right) = \prod_{i=1}^{n} \prod_{j=1}^{p} \dfrac{1}{\sqrt{2\pi} \sigma_j} \exp\left\{-\dfrac{1}{2\sigma_j^2} \left(y_{ij} - \mu_{ij}^{(k)}\right)^2\right\},$$ 
   where $\mu_{ij}^{(k)}$ is the $(i, j)$th entry of the signal matrix $\mathbf{M}^{(k)} \Lambda^{(k)\top}.$
Although $\mathbf{M}^{(k)} \in \mathbb{R}^{n \times k}$ and $\Lambda^{(k)} \in \mathbb{R}^{p \times k}$ have different dimensions as $k$ changes, the signal matrix $\mathbf{M}^{(k)} \Lambda^{(k)\top} \in \mathbb{R}^{n \times p}.$
Motivated from the SVD of $\mathbf{Y}$ for a given $k$ as in \eqref{eq:SVD}, we estimate $\mathbf{M}^{(k)}\Lambda^{(k)\top}$ with  $\wh{\mathbf{M}^{(k)}\Lambda^{(k)\top}} = U^{(k)} D^{(k)} V^{(k)\top}$.
Next, we consider the quantity $\mathcal{L}\left(\widehat{\mathbf{M}^{(k)}}, \wh{\Lambda^{(k)}}, \Sigma\right)$ as a function of $\Sigma = \mbox{diag}\left(\sigma_1^2, \ldots, \sigma_p^2\right)$, obtained by plugging the signal estimator into the joint likelihood:
$$\mathcal{L}\left(\widehat{\mathbf{M}^{(k)}}, \wh{\Lambda^{(k)}}, \Sigma\right) = \prod_{i=1}^{n} \prod_{j=1}^{p} \dfrac{1}{\sqrt{2\pi}\sigma_j} \exp\left\{-\dfrac{1}{2\sigma_j^2} \left(y_{ij} - \hat{\mu}_{ij}^{(k)}\right)^2 \right\},$$ 
   where $\hat{\mu}_{ij}^{(k)}$ is the $(i,j)$th entry of $\widehat{\mathbf{M}^{(k)} \Lambda^{(k)\top}} = U^{(k)} D^{(k)} V^{(k)\top}$. 
   Maximizing $\mathcal{L}(\wh{\mathbf{M}^{(k)}}, \wh{\Lambda^{(k)}}, \Sigma)$ with respect to $\sigma_1^2, \ldots, \sigma_p^2$, 
   we estimate $\sigma_j^2$ with
   $\hat{\sigma}_j^{(k)2} =  \sum_{i=1}^{n} \left(y_{ij} - \hat{\mu}_{ij}^{(k)}\right)^2 / n$ for $j=1,\ldots,p$.
   Finally, with $\wh{\Sigma^{(k)}} = \mbox{diag}\left(\hat{\sigma}_1^{(k)2}, \ldots, \hat{\sigma}_p^{(k)2}\right)$, the fitted likelihood is given by 
   $$\mathcal{L}\left(\widehat{\mathbf{M}^{(k)}}, \wh{\Lambda^{(k)}}, \wh{\Sigma^{(k)}}\right) = \prod_{i=1}^{n} \prod_{j=1}^{p} \dfrac{1}{\sqrt{2\pi}\hat{\sigma}_j^{(k)}} \exp\left\{-\dfrac{\left(y_{ij} - \hat{\mu}_{ij}^{(k)}\right)^2}{2\hat{\sigma}_j^{(k)2}}  \right\} = \dfrac{(2\pi e)^{-np/2}}{\prod_{j=1}^{p} \hat{\sigma}_j^{(k)n}}.$$
   Let $\wh {\mathcal{L}}_k := \mathcal{L}\left(\widehat{\mathbf{M}^{(k)}}, \wh{\Lambda^{(k)}}, \wh{\Sigma^{(k)}}\right)$. 
   The JIC as a function of $k$ is given by 
   $$\mbox{JIC}(k) = -2 \log \wh{\mathcal{L}}_k + k (n \vee p) \log (n \wedge p).$$
   Given a plausible upper bound $\mathcal{K}_0$ for the possible values of $k$, we estimate $\widehat{k} = \min_{1 \leq k \leq \mathcal{K}_0} \mbox{JIC}(k).$ 

To choose 
$\mathcal{K}_0$, we consider the singular values $s_1 \geq s_2 \geq \ldots \geq s_{n \wedge p}$ of $\mathbf{Y}$ and let $\mathcal{K}_0$ be the smallest integer $\mathcal{K}$ such that $\left(\sum_{j=1}^{\mathcal{K}} s_{j} \right) / \left(\sum_{j=1}^{n \wedge p} s_j\right) \geq S_0$ for a given $S_0 \in (0,1).$
Based on our experience with numerous simulation experiments, setting $S_0 = 0.75$ as a default simultaneously allows
the search space $\{1, 2, \ldots, \mathcal{K}_0\}$ to be sufficiently large while avoiding the boundary, and maintains computational efficiency of the procedure, leading to a good performance of $\wh k.$
In general, we would not recommend choosing $S_0$ to be as large as possible, as candidate values of $k$ near the boundary can cause near-singular fits \citep{ichikawa1988empirical}.

Although the JIC is similar in spirit to the Akaike information criterion (AIC) or the Bayesian information criterion (BIC) \citep{akaike1974new, schwarz1978estimating}, there are some key differences.
The AIC/BIC are routinely used in existing literature on rank selection in factor models based on the marginal likelihood
$y_i \mid \Lambda^{(k)}, \Sigma \overset{\iid}{\sim} N_p(0, \Lambda^{(k)} \Lambda^{(k)\top} + \Sigma),$
obtained by marginalizing out $\eta_i^{(k)} \overset{\iid}{\sim} N_k(0, \mathbb{I}_k)$ \citep{akaike1987factor}. 
In contrast, the JIC is based on the joint likelihood of $\left(\mathbf{M}^{(k)}, \Lambda^{(k)}, \Sigma\right)$ and bypasses the inversion of $\Psi^{(k)}.$
Additionally, the JIC generalizes elegantly to non-Gaussian errors or non-linear outcome measurement structures where marginal likelihoods may not be explicitly available but joint likelihoods are, thus facilitating extensions of the proposed method.
We provide a thorough comparison of the JIC with the AIC and BIC, including simulation studies, in Section \ref{suppsec:JICvsAICBIC} of the Supplementary Material.
\subsubsection{Tuning Prior Hyperparameters}
\label{subsubsec:tunetau}

After estimating $k$, we next estimate $\tau^2$, which can be interpreted as the global shrinkage parameter of the loadings. For this, we employ an empirical Bayes (EB) approach. Since $\widetilde{\lambda}_j \mid \widetilde{\sigma}_j^2, \tau^2 \sim N_k\left(0, \tau^2 \widetilde{\sigma}_j^2 \mathbb{I}_k\right)$ for $j=1,\ldots,p$ {\it a priori}, we obtain an estimate of $\tau^2$ by conditioning on $(\widetilde{\lambda}_j, \widetilde{\sigma}_j^2 )$ and then maximizing the conditional likelihood, leading to
$\widehat{\tau}_0^2 =  \sum_{j=1}^{p} \left({\|\widetilde{\lambda}_j\|_2^2}/{\widetilde{\sigma}_j^2}\right) / (kp).$
Following from  \eqref{eq:hyper} and \eqref{eq:pseudo-posterior}, we simply estimate $k$ by $\widehat{k}$ as before, $\|\widetilde{\lambda}_j\|_2^2$ by $ \mathcal{L}_j^2 = \|U^{\top} y^{(j)}\|_2^2 / n$, and $\widetilde{\sigma}_j^2$ by $\mathcal{V}_j^2 = \|(\mathbb{I}_n - UU^{\top}) y^{(j)}\|_2^2 / n$ for $j=1,\ldots,p$, where $U$ is dependent on $\widehat{k}$.
This leads us to the plug-in estimate of $\tau^2$, given by
\begin{equation}
    \label{eq:tausq-est}
    \widehat{\tau}^2 = \dfrac{1}{p\widehat{k}} \sum_{j=1}^{p} \dfrac{\mathcal{L}_j^2}{\mathcal{V}_j^2}.
\end{equation}
The consistency of the point estimators $\mathcal{L}_j^2$ and $\mathcal{V}_j^2$ for $1 \leq j \leq p$ is proven in Section \ref{proof:lemmaPostCon}, Lemma \ref{lemma:post-mean-conv} of the  Supplementary Material. 
As long as $\wh{k} \to k$ as $n \to \infty$ and we assume that $(1/p) \sum_{j=1}^{p} \mathcal{L}_j^2 / \mathcal{V}_j^2$ converges to a positive and finite quantity $\|\lambda_0\|_2^2 / \sigma_0^2$ as $p \to \infty$, we obtain $\wh{\tau}^2 \to (1/k) \left(\|\lambda_0\|_2^2 / \sigma_0^2\right)$. 
Here, $\|\lambda_0\|_2^2 / \sigma_0^2$ may be interpreted as the true average signal-to-noise ratio (SNR). 
Thus, the EB estimate $\wh{\tau}^2$ automatically adapts to the appropriate SNR underlying the data, with lower SNR settings producing smaller $\wh{\tau}^2$ that lead to smaller loadings estimates. 
In simulations, we obtained better estimation accuracy using the EB estimate $\wh{\tau}^2$, instead of assuming a vague prior for the loadings $\wt \lambda_j$ by letting $\tau^2 \to \infty$.

Lastly, we observed that the FABLE procedure is not sensitive to the hyperparameters of the prior for the variances as in \eqref{eq:prior}, namely $\left(\gamma_0, \delta_0^2\right)$. 
Results for a sensitivity analysis by varying $\left(\gamma_0, \delta_0^2 \right)$ are available in Section \ref{suppsec:invgamma} of the Supplementary Material.
Thus, we use the default values $\gamma_0 = \delta_0^2 = 1$ in practice.
Once the number of factors $k$ and the common variance $\tau^2$ have been estimated using $\widehat{k}$ and $\widehat{\tau}^2$, respectively, we then proceed to obtain FABLE-posterior samples of $\Psi$ as described earlier. We provide a summary of implementing the proposed method in Algorithm \ref{algorithm:fable}, available in Section \ref{suppsec:algorithm} of the Supplementary Material.

\section{Theoretical Support}
\label{sec:theory}

\subsection{Setup and Assumptions}
\label{subsec:theorySetup}
In this Section, we provide theoretical guarantees on the FABLE-posterior for high-dimensional covariance matrices. Most of the existing literature \citep{pati2014posterior, bhattacharya2011sparse, rovckova2016fast, srivastava2017expandable} on
frequentist asymptotic guarantees for Bayesian factor models focuses on posterior contraction rates. 
Theorems on frequentist coverage of Bayesian credible sets for such models are lacking.
\cite{xie2023eigenvector} consider a signal-plus-noise model and derive theoretical properties of a quasi-posterior based approach, with guarantees on asymptotic frequentist coverage of the quasi-posterior intervals. 
However, their model only considers a square signal matrix, with a different underlying focus from high-dimensional covariance estimation. Due to the (innovative) form of the FABLE-posterior, we cannot use established tools for showing Bernstein-von Mises-type results in Bayesian models.
We overcome this challenge by leveraging on a blessing of dimensionality phenomenon which requires both $n$ and $p = p_n$ to grow, providing results on both  contraction and UQ of the FABLE-posterior. The blessing of dimensionality is key for accurate estimation of the latent factor subspace up to rotational ambiguity. 
The proofs of all the theorems in this Section can be found in Sections \ref{proof:post-con}-\ref{proof:asymp-law} of the Supplementary Material.

We assume the following data-generating model:
\begin{equation}
    \label{eq:data-gen}
    y_i = \Lambda_0 \eta_{0i} + \epsilon_i,
\end{equation}
where $\epsilon_i \overset{\iid}{\sim} N_p(0, \Sigma_0)$ for $i=1,\ldots,n$ with $\Sigma_0 = \mbox{diag}(\sigma_{01}^2, \ldots, \sigma_{0p}^2),$ and $\eta_{0i} \overset{\iid}{\sim} N_k(0, \mathbb{I}_k)$ for $i=1,\ldots,n$. 
Here, $\Lambda_0 = [\lambda_{01}, \ldots, \lambda_{0p}]^{\top}$ is the true factor loadings matrix while $\eta_{0i}$ are the true latent factors; integrating them out provides $y_i \overset{\iid}{\sim} N_p(0, \Lambda_0 \Lambda_0^{\top} + \Sigma_0)$ for $i=1,\ldots,n$ as the marginal distribution of the data. 
Let $\Y = [y_1,\ldots,y_n]^{\top} \in \mathbb{R}^{n \times p}$ and $M_0 = [\eta_{01},\ldots,\eta_{0n}]^{\top} \in \mathbb{R}^{n \times k}$ be the data matrix and the true matrix of latent factors, respectively, and $E = [\epsilon_1, \ldots, \epsilon_n]^\top$, so that the true data generating model may be written as 
\begin{equation}
    \label{eq:data-gen-mat}
    \Y = M_0 \Lambda_0^{\top} + E.
\end{equation}
Our primary goal is inference for the covariance matrix $\Psi_0 = \Lambda_0 \Lambda_0^{\T} + \Sigma_0.$ For a matrix $A \in \mathbb{R}^{n_1 \times n_2}$, we denote its singular values by $s_1(A) \geq \ldots \geq s_{n_1 \wedge n_2}(A)$. 
Let $\|A\| = \underset{\|x\|_2 = 1}{\sup} \|Ax\|_{2} = s_1(A)$ denote the operator norm of $A$ and $\|A\|_{\infty} = \max_{ij} |A_{ij}|$ denote the max-norm of $A$, with $A_{ij}$ denoting the $(i,j)$th entry of $A$. 
For two sequences $a_m, b_m \geq 0$, we say $a_m \asymp b_m$ if $a_m = \mathcal{O}(b_m)$ and $b_m = \mathcal{O}(a_m)$ as $m \to \infty$. 
We assume the following conditions on the true data-generating model:
\begin{assumption}
\label{assumption:dim}
    $p_n \to \infty$ and $(\log p_n) / n = o(1)$
    as $n \to \infty$. 
    
\end{assumption}
\begin{assumption}
\label{assumption:norm}
    $s_k(\Lambda_0) \asymp \|\Lambda_0\| \asymp \sqrt{p_n}$ as $n \to \infty$, $\|\Lambda_0\|_{\infty} = \mathcal{O}(1)$, and $\underset{1 \leq j \leq p_n} {\min}\|\lambda_{0j}\|_2 \geq c_1$ for some finite constant $c_1 > 0$.
\end{assumption}
\begin{assumption}
\label{assumption:var}
    The true error variances satisfy $\underset{1 \leq j \leq p_n}{\max} \sigma_{0j}^2 = \mathcal{O}(1)$ and $\underset{1 \leq j \leq p_n}{\min} \sigma_{0j}^2 \geq c_2$ for some finite constant $c_2 > 0$.
\end{assumption}
\begin{assumption}
\label{assumption:hyper}
    The hyperparameters $k, \tau^2, \gamma_0, \delta_0^2$, and $\rho$ are fixed constants.
\end{assumption}
\begin{assumption}
\label{assumption:uq}
    $\sqrt{n} / p_n \to 0$ as $n \to \infty$.
\end{assumption}

Such assumptions are standard in the literature on the asymptotic properties of latent factor models \citep{pati2014posterior, bhattacharya2011sparse, xie2022bayesian, rovckova2016fast}.  
Assumption \ref{assumption:dim} allows the number of dimensions $p_n$ to scale as any polynomial function of $n$. 
Assumption \ref{assumption:norm} ensures that the true loadings matrix $\Lambda_0$ is well-conditioned, with the low-rank portion $\Lambda_0 \Lambda_0^{\top}$ identifiable from noise in the asymptotic regime, and also ensures that none of the columns of $\mathbf{Y}$ consist purely of noise. 
Assumptions \ref{assumption:var} and \ref{assumption:hyper} assume the scalar error variances and model hyperparameters are finite. 
Assumption \ref{assumption:uq} imposes a lower bound on the rate at which $p_n$ increases and is necessary for the ``blessing of dimensionality'' to take effect rapidly enough so that asymptotically accurate UQ can be obtained.
For our theoretical requirements, we will assume that $k, \tau^2,$ and $\rho$ are known and fixed.
To ease exposition, we will often suppress the dependence of $p = p_n$ on $n$ and express it when needed.

Although Assumptions \ref{assumption:dim}-\ref{assumption:hyper} are sufficient to establish the posterior concentration of relative estimation errors with FABLE, we additionally require the stronger Assumption \ref{assumption:uq} to obtain accurate UQ guarantees for entrywise FABLE credible intervals. 
This assumption is standard in existing literature on UQ.
For example, Theorem 2 in \cite{bai2003inferential} utilizes Assumption \ref{assumption:uq} to obtain the limiting distribution of estimated factor loadings.
Compared to assumptions in the literature on matrix denoising theory, such as $n / p_n \to c \in (0,1) \text{ or } (0, \infty)$ \citep{ledoit2004well, wang2017asymptotics,hong2023optimally, benaych2012singular}, Assumption \ref{assumption:uq} is substantially weaker.
In general, it is common to assume stronger conditions to obtain UQ guarantees than those needed for consistency or rate results for high-dimensional covariance or precision estimation \citep{van2014asymptotically, jankova2015confidence, ning2017general, yao2023rates}.

Let $M_0 \Lambda_0^{\top} = U_0 D_0 V_0^{\top}$ be the singular value decomposition of the signal, with $U_0 \in \mathbb{R}^{n \times k}, V_0 \in \mathbb{R}^{p \times k}$ having orthonormal columns and $D_0 \in \mathbb{R}^{k \times k}$ a diagonal matrix of positive singular values. 
Suppose the singular value decomposition of $\Y$ is
$$\Y = UDV^{\top} + U_{\perp} D_{\perp} V_{\perp}^{\top},$$
where $U \in \mathbb{R}^{n \times k}, V \in \mathbb{R}^{p \times k}$ have orthonormal columns, and $D \in \mathbb{R}^{k \times k}$ contains the $k$ largest singular values of $\Y$. 
We first provide a result showcasing the blessing of dimensionality when estimating $U_0 U_0^{\top}$ by $UU^{\top}$ along with a few other key results, which form the basis of the theorems that follow.
Let us denote the induced FABLE-posterior measure, the true data-generating measure, and the expectation under the true data-generating measure by $\widetilde{{\Pi}}$, $P_0$, and $E_0$, respectively.
\begin{proposition}
\label{proposition:bod}
    Suppose Assumptions \ref{assumption:dim} -- \ref{assumption:hyper} hold. 
    Fix $u$ such that $1 \leq u \leq p_n$.
    Then, there exist constants $G_1, G_2, G_3 > 0$ such that 
    \begin{enumerate}
        \item[(a)] $\underset{n \to \infty}{\lim}P_0\left\{\|UU^{\top} - U_0U_0^{\top}\| > G_1\left(\dfrac{1}{\sqrt n} + \dfrac{1}{\sqrt{p_n}} \right)\right\} = 0.$
    \item[(b)]  $\underset{n \to \infty}{\lim}P_0\left\{\|M_0^\top(UU^{\top} - U_0U_0^{\top})M_0\| > G_2\left(1 + \dfrac{n}{p_n} \right)\right\} = 0.$
    \vspace*{0.2cm}
    \item[(c)] $\underset{n \to \infty}{\lim}P_0\left[ \|(UU^{\top} - U_0U_0^{\top})\, \epsilon^{(u)}\|_2 > G_3\left\{\sqrt{\log n}\left(\dfrac{1}{\sqrt{n}} + \dfrac{1}{\sqrt{p_n}}\right) + \dfrac{\sqrt{n}}{p_n} \right\}\right] = 0$, where $\epsilon^{(u)}$ is the $u$th column of the error matrix $E$.
    \end{enumerate}
\end{proposition}

The proof of Proposition \ref{proposition:bod} is available in Section \ref{proofSubSec:bod} of the Supplementary Material. 
First, in Theorem \ref{theorem:post-con}, we provide FABLE-posterior contraction rates when estimating $\Lambda_0 \Lambda_0^{\T}, \Sigma_0, $ and $\Psi_0$ using $\wt{L} = \wt \Lambda \wt{\Lambda}^{\top}, \wt{\Sigma}$, and $\wt{\Psi} = \wt{L} + \wt\Sigma$, respectively. 
Later, using Theorems \ref{theorem:bvm} and \ref{theorem:asymp-law}, we justify the UQ of the entrywise elements of $\Psi_0$ with the FABLE-posterior credible intervals. 

\subsection{FABLE-posterior Contraction Rates}
\label{subsec:theoryContraction}

\begin{theorem}
\label{theorem:post-con}
Suppose Assumptions \ref{assumption:dim} -- \ref{assumption:hyper} hold. Then, as $n \to \infty$, there exist finite constants $D_1, D_2, D > 0$ such that
\begin{enumerate}
\item[(a)] $E_0\left[\wt{\Pi}\left\{\dfrac{\|\wt L - \Lambda_0 \Lambda_0^{\top}\|}{\|\Lambda_0 \Lambda_0^{\top}\|} > D_1 \left(\sqrt{\dfrac{\log n}{n}} + \dfrac{1}{\sqrt{p_n}}\right) \right\}\right] \to 0.$
\item[(b)] 
\vspace*{0.2cm}
$E_0\left[\wt{\Pi}\left\{\|\wt \Sigma - \Sigma_0\| > D_2 \left(\left(\dfrac{\log p_n}{n}\right)^{1/3} + \dfrac{1}{\sqrt{p_n}}\right) \right\}\right]\to 0.$
\vspace{0.2cm}
\item[(c)] $E_0\left[\wt{\Pi}\left\{\dfrac{\|\wt \Psi - \Psi_0\|}{\|\Psi_0\|} > D \left(\sqrt{\dfrac{\log n}{n}} + \dfrac{1}{\sqrt{p_n}}\right)\right\}\right] \to 0.$
\end{enumerate}
\end{theorem}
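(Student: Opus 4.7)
My plan is to handle each of the three parts by decomposing the pseudo-posterior draw into its (approximate) pseudo-posterior mean plus a mean-zero fluctuation, then reducing each statement to a concentration bound for the fluctuation together with a ``bias'' bound that leverages Proposition \ref{proposition:bod}.

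For part (a), I would write $\Lambda_C = G_0 + \Delta_\Lambda$, where $G_0 = [\mu_1, \ldots, \mu_{p_n}]^{\top}$ is the pseudo-posterior mean matrix and, conditional on $\Sigma_C$, the rows of $\Delta_\Lambda$ are independent $N_k(0, \rho^2 \sigma_{C,j}^2/(n + \tau^{-2})\,\mathbb{I}_k)$. Expanding $L_C - G_0 G_0^{\top} = \Delta_\Lambda G_0^{\top} + G_0 \Delta_\Lambda^{\top} + \Delta_\Lambda \Delta_\Lambda^{\top}$, a standard Gaussian random-matrix bound combined with the uniform boundedness of $\sigma_{C,j}^2$ yields $\|\Delta_\Lambda\| = O_P(\sqrt{p_n/n})$ with high pseudo-posterior probability. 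Together with $\|G_0\| = O_P(\sqrt{p_n})$ (obtainable from $G_0 = \sqrt{n}VD/(n+\tau^{-2})$ and $\|Y\| = O_P(\sqrt{np_n})$), each fluctuation term is then of order $p_n/\sqrt{n}$ in operator norm. For the remaining bias $\|G_0 G_0^{\top} - \Lambda_0 \Lambda_0^{\top}\|$, I use the identity $G_0 G_0^{\top} = n(n+\tau^{-2})^{-2} V D^2 V^{\top}$ and the fact that $V D^2 V^{\top}$ is the best rank-$k$ approximation to $Y^{\top}Y$; Proposition \ref{proposition:bod} replaces the projection onto the empirical top-$k$ subspace by its population analogue, while $M_0^{\top} M_0/n = \mathbb{I}_k + O_P(1/\sqrt{n})$ controls the signal term $\Lambda_0 M_0^{\top} M_0 \Lambda_0^{\top}/n$. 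A Davis--Kahan-type argument exploiting the signal eigengap $\Theta(p_n)$ then gives $\|G_0 G_0^{\top} - \Lambda_0 \Lambda_0^{\top}\| = O_P(p_n/\sqrt{n} + \sqrt{p_n})$, and dividing by $\|\Lambda_0 \Lambda_0^{\top}\| \asymp p_n$ produces the stated relative rate.

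For part (b), the target splits as $\max_{1 \leq j \leq p_n}|\sigma_{C,j}^2 - \sigma_{0j}^2| \leq \max_j |\sigma_{C,j}^2 - \delta_j^2| + \max_j|\delta_j^2 - \sigma_{0j}^2|$. For the first (posterior) piece, since $\sigma_{C,j}^2 \sim \mathrm{IG}(\gamma_n/2, \gamma_n \delta_j^2/2)$ with $\gamma_n/2 \asymp n$, I would apply sub-exponential tail bounds for the reciprocal Gamma-distributed $1/\sigma_{C,j}^2$ combined with a union bound over $p_n$; the exponent $1/3$ in the stated rate reflects balancing the Gaussian and exponential regimes of the inverse-gamma tails under this union bound. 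For the data-driven gap, I start from $\gamma_n \delta_j^2 = \gamma_0 \delta_0^2 + y^{(j)\top}\bigl(\mathbb{I}_n - nUU^{\top}/(n+\tau^{-2})\bigr) y^{(j)}$, replace $UU^{\top}$ by $U_0 U_0^{\top}$ at a uniform cost of $\|y^{(j)}\|_2^2/n \cdot \|UU^{\top} - U_0 U_0^{\top}\| = O_P(1/n + 1/p_n)$ via Proposition \ref{proposition:bod}, and then appeal to $\chi^2$ concentration for the Gaussian residual $\|(\mathbb{I}_n - U_0 U_0^{\top})\epsilon^{(j)}\|_2^2$, uniformly in $j$.

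Part (c) then follows from the triangle inequality $\|\Psi_C - \Psi_0\| \leq \|L_C - \Lambda_0 \Lambda_0^{\top}\| + \|\Sigma_C - \Sigma_0\|$ combined with $\|\Psi_0\| \asymp p_n$: the absolute rate from part (b) is $o(\sqrt{p_n})$ under Assumption \ref{assumption:dim}, so the rate from part (a) dominates. I expect the chief technical obstacle to be the operator-norm bias bound in part (a), $\|G_0 G_0^{\top} - \Lambda_0 \Lambda_0^{\top}\| = O_P(p_n/\sqrt{n} + \sqrt{p_n})$: a naive use of Weyl against $\|Y^{\top}Y - n\Psi_0\|$ is too lossy, so the proof must carefully combine Davis--Kahan on the top-$k$ subspace of $Y^{\top}Y$, the approximate near-orthogonality of $UU^{\top}$ and $E$ supplied by the blessing-of-dimensionality regime, and a bound on the cross-term $\Lambda_0 M_0^{\top} UU^{\top} E$, all at a uniform operator-norm scale.
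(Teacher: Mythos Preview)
Your proposal is correct and structurally very close to the paper's argument: the same mean-plus-fluctuation split $\Lambda_C = G_0 + \Delta_\Lambda$ for part~(a), the same two-step decomposition $\max_j|\sigma_{C,j}^2-\delta_j^2| + \max_j|\delta_j^2-\sigma_{0j}^2|$ with inverse-gamma and $\chi^2$ concentration for part~(b), and the same triangle inequality for part~(c). The one substantive difference is how the bias $\|G_0G_0^\top - \Lambda_0\Lambda_0^\top\|$ is handled in part~(a). You propose to write $G_0G_0^\top = n(n+\tau^{-2})^{-2}\,\Y^\top UU^\top \Y$, swap $UU^\top$ for $U_0U_0^\top$ via Proposition~\ref{proposition:bod}, and then expand using $U_0U_0^\top M_0 = M_0$. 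The paper instead writes $\Y^\top UU^\top\Y = \Y^\top\Y - \Y^\top U_\perp U_\perp^\top \Y$, expands $\Y^\top\Y = (M_0\Lambda_0^\top + E)^\top(M_0\Lambda_0^\top + E)$ directly, and kills the $U_\perp$ piece by the Eckart--Young identity $\|\Y^\top U_\perp U_\perp^\top \Y\| = s_{k+1}^2(\Y) \le \|E\|^2$. Both routes give the same rate $p_n/\sqrt{n} + \sqrt{p_n}$; yours makes the blessing-of-dimensionality Proposition~\ref{proposition:bod} the visible engine, while the paper's Eckart--Young step is more self-contained and avoids needing that proposition for part~(a) at all. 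Two small sharpening points: your reference to a separate ``Davis--Kahan-type argument'' is redundant once you invoke Proposition~\ref{proposition:bod}, since that proposition already delivers the subspace bound at the needed precision; and the ``uniform boundedness of $\sigma_{C,j}^2$'' you assume for $\|\Delta_\Lambda\|$ must itself be established via an inverse-gamma tail bound plus a union bound over $j$ (the paper does this explicitly as its first step).
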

\noindent Theorem \ref{theorem:post-con} shows the FABLE-posterior concentration of the relative errors around $0$ when estimating $\Lambda_0 \Lambda_0^\top, \Sigma_0,$ and $\Psi_0$.
The result in part (b) is also a statement on relative error when estimating $\Sigma_0$, as $0 < c_2 \leq \| \Sigma_0 \| = \max_{1 \leq j \leq p_n}\,  \sigma_{0j}^2 = \mathcal{O}(1)$ from Assumption \ref{assumption:var}.
From part (c), the relative error in estimating the true covariance matrix converges to $0$ at the rate of $n^{-1/2} + p_n^{-1/2}$, up to logarithmic factors. 
This showcases the blessing of dimensionality, as concentration of the FABLE-posterior is obtained if and only if both the number of samples $n$ and the number of dimensions $p_n$ increase. 
The proof of Theorem \ref{theorem:post-con} is available in Section \ref{proof:post-con} of the Supplementary Material.


\subsection{Uncertainty Quantification}
\label{subsec:theoryUQ}

We now consider UQ of entrywise elements of the covariance matrix. 
For fixed indices $(u,v)$ such that $1 \leq u, v \leq p$, let $\wt{\Psi}_{uv}$ and $\Psi_{0,uv}$ denote the $(u,v)$th element of $\wt{\Psi}$ and $\Psi_0$, respectively.
Then, we have $\Psi_{0,uv} = \lambda_{0u}^{\top} \lambda_{0v} + \sigma_{0u}^2 \mathbbm{1}(u=v)$, where $\mathbbm{1}$ denotes the indicator function. 
We first discuss theoretical results for a fixed value of the coverage-correction coefficient $\rho$ and then provide an approach 
to choose an appropriate $\rho$ that guarantees valid UQ for FABLE.
We let $T_{uv} = \mu_u^{\top} \mu_v + \delta_u^2 \mathbbm{1}(u=v)$ be an estimator of $\Psi_{0,uv}$, with $\mu_u, \delta_u^2$ as defined in \eqref{eq:hyper}.
This is the $(u,v)$-th entry of $\wh \Psi = G_0G_0^\top + \Delta$ as in \eqref{eq:pseudoPostMean}.
Then, for any fixed $\rho \geq 1$, the following result approximates the FABLE-posterior of $\wt{\Psi}_{uv}$ with a suitable Gaussian distribution centered at $T_{uv}$, as both $n$ and $p_n$ increase. Let $\Phi$ denote the cumulative distribution function of the $N(0,1)$ distribution.
\begin{theorem}
\label{theorem:bvm}
Suppose Assumptions \ref{assumption:dim} -- \ref{assumption:hyper} hold. Fix $\rho \geq 1$ and $1 \leq u, v \leq p_n$.
Let
\begin{align*}
\begin{split}
l_{0,uv}^2 (\rho) & = \left\{\begin{array}{ll}
      \rho^2(\sigma_{0v}^2 \|\lambda_{0u}\|_2^2 + \sigma_{0u}^2 \|\lambda_{0v}\|_2^2), \quad & \text{for $u \neq v$},\\
     2\sigma_{0u}^4 + 4\rho^2 \, \sigma_{0u}^2 \|\lambda_{0u}\|_2^2, \quad & \text{for $u = v$.}
\end{array}
\right.
\end{split}
\end{align*}
Then, as $n \to \infty$,
    $$\sup_{x \in \mathbb{R}} \, \left\vert \wt{\Pi}\left\{\dfrac{\sqrt{n}(\wt{\Psi}_{uv} - T_{uv})}{ l_{0,uv}(\rho)} \leq x\right\} - \Phi(x) \right\vert \overset{P_0}{\to} 0.$$ 
\end{theorem}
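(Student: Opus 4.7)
The plan is to compute the conditional characteristic function (CF) of $\sqrt{n}(\Psi_{C,uv}-T_{uv})/l_{0,uv}(\rho)$ given $\mathbf{Y}$, show its pointwise convergence to $e^{-t^2/2}$ in $P_0$-probability, and invoke P\'olya's theorem to upgrade this to uniform convergence of CDFs. Using the sampling representation from Step 7 of Algorithm \ref{algorithm:fable}, write $\lambda_{C,j} = \mu_j + \bigl(\rho\, \sigma_{C,j}/\sqrt{n+\tau^{-2}}\bigr) Z_j$ with $Z_j \overset{iid}{\sim} N_k(0,\mathbb{I}_k)$ independent of $\sigma_{C,j}^2 \sim \mbox{IG}(\gamma_n/2, \gamma_n \delta_j^2/2)$, and the $(Z_j, \sigma_{C,j}^2)$ pairs independent across $j$ under $\Pi_C$ conditional on $\mathbf{Y}$. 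This renders $\Psi_{C,uv} = \lambda_{C,u}^\top \lambda_{C,v} + \sigma_{C,u}^2 \mathbbm{1}(u = v)$ a low-order polynomial in $(Z_u, Z_v, \sigma_{C,u}, \sigma_{C,v})$ that is amenable to direct CF calculation.

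\textbf{Off-diagonal case ($u \neq v$).} Expanding $\lambda_{C,u}^\top \lambda_{C,v}$ around $\mu_u^\top \mu_v$ yields two linear-in-$Z$ terms plus a quadratic cross term $\rho^2 \sigma_{C,u}\sigma_{C,v} Z_u^\top Z_v/(n+\tau^{-2})$; after multiplication by $\sqrt{n}$ the cross term is $O_{\Pi_C}(\sqrt{k}/\sqrt{n}) = o_{\Pi_C}(1)$ since $k$ is fixed. Conditioning on $(\sigma_{C,u}^2, \sigma_{C,v}^2)$ and using independence of $Z_u$ and $Z_v$, the leading portion is a centered Gaussian with conditional variance $\rho^2 n\bigl(\sigma_{C,u}^2 \|\mu_v\|_2^2 + \sigma_{C,v}^2 \|\mu_u\|_2^2\bigr)/(n+\tau^{-2})$. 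Combining Proposition \ref{proposition:bod} with standard concentration arguments to deduce $\|\mu_u\|_2^2 \overset{P_0}{\to} \|\lambda_{0u}\|_2^2$, $\delta_j^2 \overset{P_0}{\to} \sigma_{0j}^2$, and $\sigma_{C,j}^2 - \delta_j^2 \overset{\Pi_C}{\to} 0$ in $P_0$-probability, the conditional variance converges to $l_{0,uv}^2(\rho)$, and after normalization the conditional CF converges in $P_0$-probability to $e^{-t^2/2}$.

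\textbf{Diagonal case ($u = v$).} The decomposition now reads $\sqrt{n}(\Psi_{C,uu}-T_{uu}) = A_n + B_n + o_{\Pi_C}(1)$, with $A_n = 2\sqrt{n}\rho\sigma_{C,u}\mu_u^\top Z_u/\sqrt{n+\tau^{-2}}$, $B_n = \sqrt{n}(\sigma_{C,u}^2 - \delta_u^2)$, and a negligible quadratic remainder $\sqrt{n}\rho^2 \sigma_{C,u}^2 \|Z_u\|_2^2/(n+\tau^{-2}) = O_{\Pi_C}(1/\sqrt{n})$. Since $Z_u$ is independent of $\sigma_{C,u}^2$ under $\Pi_C$, conditioning on $\sigma_{C,u}^2$ makes $A_n$ Gaussian with variance approaching $4\rho^2 \sigma_{0u}^2 \|\lambda_{0u}\|_2^2$, while the classical CLT for an inverse-gamma of shape $\gamma_n/2 \to \infty$ (noting $E[\sigma_{C,u}^2] - \delta_u^2 = O(1/n)$ is a negligible centering error) gives $B_n \mid \mathbf{Y} \overset{d}{\to} N(0, 2\sigma_{0u}^4)$ in $P_0$-probability. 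Because $\sigma_{C,u}^2$ concentrates around $\delta_u^2$, the conditional CF of $A_n + B_n$ factors asymptotically into the product of the two marginal Gaussian CFs, yielding total limiting variance $l_{0,uu}^2(\rho)$.

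\textbf{Main obstacle.} The principal technical challenge is ensuring that the plug-in centering quantities $\mu_u^\top \mu_v$ and $\delta_u^2$ are accurate enough at the $\sqrt{n}$-scale to serve as valid centerings for the Gaussian limit. The estimators $\mu_u$ are ridge-type functionals of the SVD factor $U$, which is consistent for $U_0$ only up to a $k \times k$ orthogonal rotation; fortunately, the quantities of interest ($\mu_u^\top \mu_v$, $\|\mu_u\|_2^2$, $\delta_u^2$) are rotation-invariant, so the rotational ambiguity cancels. The quantitative subspace bound $\|UU^\top - U_0 U_0^\top\| = O_{P_0}(1/n + 1/p_n)$ from Proposition \ref{proposition:bod}, combined with the strengthened rate hypothesis $\sqrt{n}/p_n = o(1)$, is precisely what forces $\sqrt{n}\,(T_{uv} - \Psi_{0,uv}) = o_{P_0}(1)$ so that the centering is compatible with a $\sqrt{n}$-scale Gaussian limit. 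Finally, P\'olya's theorem upgrades pointwise CF convergence to uniform CDF convergence against the continuous normal limit, establishing the claim.
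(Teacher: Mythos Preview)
Your off-diagonal and diagonal arguments are correct and track the paper's proof closely: the paper writes the linear-in-$Z$ piece as $l_{0,uv}(\rho)\,R_{uv} + d_{uv}(\rho)\,R_{uv}$ with $R_{uv}\sim N(0,1)$ and $d_{uv}(\rho)$ concentrating at zero via the same consistency facts you cite ($\sigma_{C,j}^2\to\sigma_{0j}^2$, $\|\mu_j\|_2^2\to\|\lambda_{0j}\|_2^2$), then invokes a direct Slutsky--P\'olya lemma (Lemma~\ref{lemma:bvm-approx}) in place of your characteristic-function route; for $u=v$ it likewise splits off $\sqrt{n}(\sigma_{C,u}^2-\delta_u^2)$ and uses the Gaussian approximation to the gamma law together with independence of $Z_u$ and $\sigma_{C,u}^2$.

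Your ``Main obstacle'' paragraph, however, misidentifies the difficulty and contains a false claim. You assert that the hypothesis $\sqrt{n}/p_n=o(1)$ is what forces $\sqrt{n}(T_{uv}-\Psi_{0,uv})=o_{P_0}(1)$; this is wrong on two counts. First, by Theorem~\ref{theorem:asymp-law}, $\sqrt{n}(T_{uv}-\Psi_{0,uv})$ converges to a \emph{non-degenerate} Gaussian, not to zero. Second, and more importantly, the claim is irrelevant to Theorem~\ref{theorem:bvm}: the centering is at the data-dependent $T_{uv}$, which is a fixed constant under $\Pi_C$ given $\mathbf{Y}$, so no approximation of $\Psi_{0,uv}$ is involved at any stage of this proof. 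What actually has to converge are the pseudo-posterior \emph{variance} components, and for those mere consistency suffices. In the paper, the extra hypothesis $\sqrt{n}/p_n=o(1)$ enters only in the diagonal case, to bound a remainder $\mathcal{G}_j$ of order $1/\sqrt{n}+\sqrt{n}/p_n$ that appears when $\sqrt{n}(\sigma_{C,u}^2-\delta_u^2)$ is routed through the decomposition of Theorem~\ref{theorem:post-con}(b); your direct inverse-gamma CLT sidesteps that remainder and, as written, does not appear to use the hypothesis at all.
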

\noindent Theorem \ref{theorem:bvm} allows us to approximate the asymptotic FABLE-posterior distribution of each element of the covariance matrix, after suitable centering and scaling, using a Gaussian distribution with mean $0$ and variance $l_{0,uv}^2(\rho)$. The proof of Theorem \ref{theorem:bvm} is in Section \ref{proof:bvm} of the Supplementary Material. 

We next state a result regarding the asymptotic law of the quantity $\sqrt{n}\,(T_{uv} - \Psi_{0, uv})$ and then illustrate how this result can be used to show asymptotic frequentist validity of entrywise  credible intervals for $ \Psi_{0, uv}$. We will require Assumption \ref{assumption:uq} together with Assumptions \ref{assumption:dim}-\ref{assumption:hyper} that were used to establish Theorems \ref{theorem:post-con} and \ref{theorem:bvm}.
For generic random variables $X_n$ and $X$, we denote $X_n$ converging in distribution to $X$ by $X_n \implies X.$
\begin{theorem}
    \label{theorem:asymp-law}
    Suppose Assumptions \ref{assumption:dim}--\ref{assumption:uq} hold. Fix $1 \leq u, v \leq p_n$. Let 
    \begin{align*}
\begin{split}
\mathcal{S}_{0,uv}^2 & = \left\{\begin{array}{ll}
      \sigma_{0v}^2 \|\lambda_{0u}\|_2^2 + \sigma_{0u}^2 \|\lambda_{0v}\|_2^2 + \|\lambda_{0u}\|_2^2 \|\lambda_{0v}\|_2^2 + (\lambda_{0u}^{\top} \lambda_{0v})^2, \quad & \text{for $u \neq v$},\\
     2(\|\lambda_{0u}\|_2^2 + \sigma_{0u}^2)^2, \quad & \text{for $u = v$.}
\end{array}
\right.
\end{split}
\end{align*}
Then, as $n \to \infty$, one has 
    $\sqrt{n}(T_{uv} - \Psi_{0, uv}) / \mathcal{S}_{0,uv} \implies N(0, 1).$ 
\end{theorem}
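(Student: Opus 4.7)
The plan is to expand $T_{uv}$ explicitly under the canonical choice $\widehat{\M} = \sqrt{n}\,U$, swap the empirical projector $UU^{\top}$ for the true factor projector $U_0 U_0^{\top}$ at negligible cost via Proposition \ref{proposition:bod}, and reduce $\sqrt{n}(T_{uv} - \Psi_{0,uv})$ to $n^{-1/2}\sum_{i=1}^n W_i$ for an i.i.d.\ mean-zero sequence $W_i$ to which the classical central limit theorem applies.

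Since $\widehat{\M}^{\top}\widehat{\M} = n\mathbb{I}_k$, equation \eqref{eq:hyper} gives $\mu_j = \sqrt{n}\,U^{\top}y^{(j)}/(n+\tau^{-2})$ and $\gamma_n \delta_j^2 = \gamma_0\delta_0^2 + \|y^{(j)}\|_2^2 - n\|U^{\top}y^{(j)}\|_2^2/(n+\tau^{-2})$. Taylor-expanding the denominators with $\|U^{\top}y^{(j)}\|_2^2/n = O_P(1)$ leaves $O_P(n^{-1})$ remainders, which contribute $o_P(n^{-1/2})$ after multiplication by $\sqrt{n}$, so it is enough to work with the leading forms $\mu_u^{\top}\mu_v \approx y^{(u)\top}UU^{\top}y^{(v)}/n$ and $\delta_u^2 \approx \|(\mathbb{I}_n - UU^{\top})y^{(u)}\|_2^2/n$. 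Proposition \ref{proposition:bod} bounds $\|UU^{\top} - U_0 U_0^{\top}\|$ by $O_P(n^{-1} + p_n^{-1})$ while $\|y^{(j)}\|_2 = O_P(\sqrt{n})$, so the replacement $UU^{\top} \rightsquigarrow U_0 U_0^{\top}$ inside these bilinear and quadratic forms costs $O_P(n^{-1} + p_n^{-1})$, which becomes $o_P(n^{-1/2})$ after multiplication by $\sqrt{n}$ under the hypothesis $\sqrt{n}/p_n = o(1)$.

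From $M_0 \Lambda_0^{\top} = U_0 D_0 V_0^{\top}$ together with Assumption \ref{assumption:norm}, the column spaces of $M_0$ and $U_0$ coincide, so $U_0 U_0^{\top} M_0 = M_0$. Substituting $y^{(j)} = M_0 \lambda_{0j} + \epsilon^{(j)}$ then decomposes $y^{(u)\top}U_0 U_0^{\top}y^{(v)}$ into a signal piece $\lambda_{0u}^{\top}M_0^{\top}M_0\lambda_{0v}$, two bilinear pieces $\lambda_{0u}^{\top}M_0^{\top}\epsilon^{(v)}$ and $\epsilon^{(u)\top}M_0\lambda_{0v}$, and a quadratic-noise piece $\epsilon^{(u)\top}U_0U_0^{\top}\epsilon^{(v)}$ whose second moment is $O(k/n)$ and is therefore $o_P(\sqrt{n})$; likewise $\|(\mathbb{I}_n - U_0 U_0^{\top})y^{(u)}\|_2^2 = \|\epsilon^{(u)}\|_2^2 - \epsilon^{(u)\top}U_0 U_0^{\top}\epsilon^{(u)}$, and the second term again contributes $o_P(\sqrt{n})$. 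Collecting everything yields
\begin{equation*}
\sqrt{n}\bigl(T_{uv} - \Psi_{0,uv}\bigr) = \frac{1}{\sqrt{n}}\sum_{i=1}^{n} W_i + o_P(1),
\end{equation*}
where $W_i$ is a mean-zero function of $(\eta_{0i}, \epsilon^{(u)}_i, \epsilon^{(v)}_i)$: for $u \neq v$, $W_i = (\lambda_{0u}^{\top}\eta_{0i})(\lambda_{0v}^{\top}\eta_{0i}) - \lambda_{0u}^{\top}\lambda_{0v} + (\lambda_{0u}^{\top}\eta_{0i})\epsilon^{(v)}_i + (\lambda_{0v}^{\top}\eta_{0i})\epsilon^{(u)}_i$, and for $u = v$, $W_i = (\lambda_{0u}^{\top}\eta_{0i})^2 - \|\lambda_{0u}\|_2^2 + 2(\lambda_{0u}^{\top}\eta_{0i})\epsilon^{(u)}_i + (\epsilon^{(u)}_i)^2 - \sigma_{0u}^2$.

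Because the rows $(\eta_{0i}, \epsilon_i)$ are i.i.d.\ Gaussian with finite fourth moments, the classical CLT together with Slutsky's lemma gives $\sqrt{n}(T_{uv} - \Psi_{0,uv}) \implies N(0, \mathrm{Var}(W_1))$. A short Isserlis-type calculation shows that every cross-covariance between the summands of $W_i$ vanishes, using independence of $\eta_{0i}$ from $(\epsilon^{(u)}_i, \epsilon^{(v)}_i)$ together with the fact that odd moments of centered Gaussians are zero. Summing the term-wise variances via $\mathrm{Var}\bigl((\lambda_{0u}^{\top}\eta)(\lambda_{0v}^{\top}\eta)\bigr) = \|\lambda_{0u}\|_2^2\|\lambda_{0v}\|_2^2 + (\lambda_{0u}^{\top}\lambda_{0v})^2$ and $\mathrm{Var}\bigl((\lambda_{0u}^{\top}\eta)\epsilon\bigr) = \sigma^2\|\lambda_{0u}\|_2^2$ reproduces $\mathcal{S}_{0,uv}^2$ in both the $u \neq v$ and $u = v$ cases. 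The main technical hurdle is justifying the projector swap $UU^{\top} \rightsquigarrow U_0 U_0^{\top}$ to $o_P(n^{-1/2})$ precision: the straightforward bound from Proposition \ref{proposition:bod} yields $O_P(n^{-1/2} + \sqrt{n}/p_n)$, which is precisely why the extra hypothesis $\sqrt{n}/p_n = o(1)$ inherited from Theorem \ref{theorem:bvm} is required. Once this step is in place, the remainder is routine Gaussian-moment bookkeeping.
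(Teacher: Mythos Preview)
Your argument is correct and follows the paper's overall strategy: swap the empirical projector $UU^{\top}$ for the signal projector $U_0U_0^{\top}$ at cost $O_P(n^{-1/2}+\sqrt{n}/p_n)=o_P(1)$ via Proposition~\ref{proposition:bod}, then invoke a central limit theorem for the remaining i.i.d.\ structure. Where you differ is in how the CLT step is packaged. The paper, for $u\neq v$, splits $D_{uv}$ into a ``pure-factor'' piece $\sqrt{n}\,\lambda_{0u}^{\top}(n^{-1}M_0^{\top}M_0-\mathbb{I}_k)\lambda_{0v}$ handled by the scalar CLT and a ``cross'' piece $n^{-1}(\lambda_{0u}^{\top}M_0^{\top}\epsilon^{(v)}+\lambda_{0v}^{\top}M_0^{\top}\epsilon^{(u)})$ that is exactly Gaussian conditional on $M_0$, and then stitches the two limits together using the auxiliary independence Lemma~\ref{lemma:slutsky-extension}; for $u=v$ it additionally argues that $\delta_u^2$ and $\|\mu_u\|_2^2$ are approximately independent after the projector swap. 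Your route is more direct: you collapse everything into a single i.i.d.\ sum $n^{-1/2}\sum_i W_i$ (for $u=v$ this amounts to recognizing $T_{uu}\approx n^{-1}\|y^{(u)}\|_2^2$) and read off the variance by an Isserlis computation, which sidesteps both the conditional-distribution step and the independence lemma. Two small slips do not affect the argument: the phrase ``becomes $o_P(n^{-1/2})$ after multiplication by $\sqrt{n}$'' should read $o_P(1)$ (as you yourself note at the end), and the second moment of the raw term $\epsilon^{(u)\top}U_0U_0^{\top}\epsilon^{(v)}$ is $O(k)$, not $O(k/n)$, though either suffices for the needed $o_P(\sqrt{n})$ bound.
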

\noindent The proof of Theorem \ref{theorem:asymp-law} is available in Section \ref{proof:asymp-law} of the Supplementary Material. 
Under Theorem \ref{theorem:bvm}, the $100(1-\alpha) \%$ asymptotic credible interval of $\wt{\Psi}_{uv}$ is given by $$\mathcal{C}_{uv}(\rho) = \left[T_{uv} - z_{1 - (\alpha/2)}\, \dfrac{ l_{0,uv}(\rho)}{\sqrt{n}}, \, T_{uv} + z_{1 - (\alpha/2)} \,\dfrac{ l_{0,uv}(\rho)}{\sqrt{n}} \right],$$
where $z_{1 - (\alpha/2)} = \Phi^{-1}\left\{1 - (\alpha/2)\right\}.$
Thus, the probability of $\mathcal{C}_{uv}(\rho)$ covering $\Psi_{0, uv}$ under repeated sampling is given by
\begin{align}
\label{eq:asympcoverage}
    P_0\left\{\Psi_{0,uv} \in \mathcal{C}_{uv}(\rho)\right\} & = P_0\left\{\dfrac{\sqrt{n} \, \left| T_{uv} - \Psi_{0,uv} \right|}{\mathcal{S}_{0,uv}} \leq  z_{1 - (\alpha/2)} \dfrac{l_{0,uv}(\rho)}{\mathcal{S}_{0,uv}}\right\} \notag \\
    & \to q_{uv}(\rho) := 2\Phi\left\{z_{1 - (\alpha/2)} \dfrac{l_{0,uv}(\rho)}{\mathcal{S}_{0,uv}}\right\} - 1,
\end{align}
as $n \to \infty,$ using Theorem \ref{theorem:asymp-law}. 

We next obtain a consistent estimator of the asymptotic coverage $q_{uv}(\rho)$ for any $\rho > 0.$ 
This will allow us to estimate $\rho$ as outlined in Section \ref{subsec:cov_correction}.
First, we define the quantities $\wh{l}_{0,uv}(\cdot)$ and $\wh{\mathcal{S}}_{0,uv}$ as plug-in estimators of $l_{0,uv}(\cdot)$ and $\mathcal{S}_{0,uv}$, respectively. 
These estimators are obtained by replacing $\lambda_{0u}$ and $\sigma_{0u}^2$ by the
estimators $\mu_u$ and $\mathcal{V}_u^2$, respectively, as defined in Section \ref{subsec:cov_correction}. 
Lemma \ref{lemma:post-mean-conv} in Section \ref{proof:lemmaPostCon} of the Supplementary Material provides a proof of their consistency. 
Next, we define $\wh{q}_{uv}(\rho)$ by replacing $l_{0,uv}(\rho)$ and $\mathcal{S}_{0,uv}$ with $\wh{l}_{0,uv}(\cdot)$ and $\wh{\mathcal{S}}_{0,uv}$, respectively, in \eqref{eq:asympcoverage}.
Since $\wh{l}_{0,uv}(\rho)$ is monotonically increasing in $\rho$ for $\rho \geq 1$, so is $\wh{q}_{uv}(\rho).$
The function $\wh{q}_{uv}(\rho)$ may now be utilized to obtain a data-adaptive estimate of $\rho$, starting from the fact that $\wh{q}_{uv}(b_{uv}) = 1 - \alpha$ and then following the discussion in Section \ref{subsec:cov_correction}.
The expression of $b_{uv}$ as in \eqref{eq:b_uv} is obtained by solving the equation $\wh{l}_{0,uv}(\rho) = \wh{S}_{0, uv}$ for $\rho$, for $1 \leq u \leq v \leq p$.
Since Lemma \ref{lemma:Bmat} in Section \ref{proof:lemmaPostCon} of the Supplementary Material ensures that $\underset{1 \leq u \leq v \leq p_n}{\max} b_{uv} = O_{P_0}(1)$, all the choices of $\rho$ as described in Section \ref{subsec:cov_correction} are bounded.

\section{Simulation Results}
\label{sec:simulation}

\subsection{Setup}
\label{subsec:preliminaries}

In this Section, we compare the performances of FABLE with competitors in terms of  estimation error, UQ, and computational efficiency.
The competitors for each category are as follows:
\begin{enumerate}
    \item[(a)] For estimation error, we compare with: 
    (i) the multiplicative gamma shrinkage prior approach of \cite{bhattacharya2011sparse} denoted by MGSP, 
    (ii) the automatic rotation to sparsity approach of \cite{rovckova2016fast} denoted by ROTATE, 
    (iii) the hard thresholding approach of \cite{bickel2008covariance} denoted by HT, 
    (iv) the SCAD penalty \citep{fan2001variable} applied to entries of the sample covariance and then following a compromise between hard and soft thresholding as in  \cite{rothman2009generalized}, and 
    (v) the \cite{ledoit2004well} linear shrinkage estimator denoted by LW.
    \item[(b)] For UQ, we consider the Bayesian approaches FABLE and MGSP, as the rest of the approaches only provide point estimates.
    \item[(c)] 
    Runtime analysis includes simulations comparing empirical runtimes for the different approaches, with a theoretical computational complexity analysis for FABLE available in Section \ref{suppsec:runtimeTheory} of the Supplementary Material.
\end{enumerate}


Next, we describe the true data generating models for each of the numerical experiments described above.
Suppose $\Psi_0 = \Lambda_0 \Lambda_0^\top + \Sigma_0$ denotes the true covariance matrix with $\Lambda_0 = (\Lambda_{0, jl})_{1 \leq j \leq p, 1 \leq l \leq k}$ the true loadings matrix and $\Sigma_0 = \mbox{diag}(\sigma_{01}^2, \ldots, \sigma_{0p}^2)$ the true idiosyncratic error variance matrix. 
We consider two different schemes of generating $(\Lambda_0, \Sigma_0)$:
\begin{enumerate}
    \item[(a)] \textbf{Spike-and-slab:} We generate the $(j,l)$th entry of factor loadings matrix $\Lambda_0$ as $\Lambda_{0,jl}  \overset{\ind}{\sim} \pi_0 \, \widetilde{\delta}_0 + (1 - \pi_0) \, N(0, 0.5^2)$ for all $1 \leq j \leq p, 1 \leq l \leq k$, and $\sigma_{0j}^2 \overset{\ind}{\sim} \mathcal{U}(0.5, 5),$ for all $1 \leq j \leq p,$
with $\widetilde{\delta}_0$ denoting a point mass at $0$. 
Thus, we simulate exact sparsity in the loadings, representing a challenging misspecified case for FABLE and MGSP, which use continuous shrinkage priors. 
We consider the following settings of $(n, p, k, \pi_0)$:
\begin{itemize}
    \item[(i)] $\pi_0 = 0.5$, $(n,p) \in \{500, 1000\} \times \{1000, 5000\}$, and $k = 10$.
    \item[(ii)] $\pi_0 = 0.85$ and $(n,p,k)$ varied as in (i).
    \item[(iii)] $n = 100$, $k = 10$, and $(\pi_0, p) \in \{0.5, 0.85\} \times \{100, 500\}.$
    \item[(iv)] $k = 50$ and $(n, p, \pi_0)$ varied as in (i).
\end{itemize}
Settings (i) and (ii) assess the performance of the methods when both $n$ and $p$ are large, with moderate ($\pi_0 = 0.5$) and high ($\pi_0 = 0.85$) sparsity in the loadings. 
Settings (iii) and (iv) modify (i) by considering smaller $(n,p)$ and larger $k$, respectively.

    \item[(b)] \textbf{Block-diagonal:} 
This example is taken from Section 4 of \cite{rovckova2016fast}, with $\Lambda_0$ having a block-diagonal structure with $85\%$ sparsity.
For a given $p$, the true loading matrix has $n_1(p) = \lfloor (1 - 0.85) \, p \rfloor = \lfloor 0.15 \, p \rfloor$ non-zero elements in each column and an overlap of $n_2(p) = \lceil 0.37 \, n_1(p) \rceil$ non-zero entries with the successive column, with all non-zero entries of $\Lambda_0$ set to $1$.
As earlier, we generate $\sigma_{0j}^2 \overset{\ind}{\sim} \mathcal{U}(0.5, 5)$ for $1 \leq j \leq p.$
\end{enumerate}

For each combination $(n,p)$, we use the generated $\Psi_0 = \Lambda_0 \Lambda_0^{\top} + \Sigma_0$ and replicate the data generating process $R$ times. 
For the estimation error experiments, we take $R = 50$ and consider both the spike-and-slab and block-diagonal setups. 
For the sake of exposition, we only present the results for setting (i) here for the spike-and-slab setup (a), with results for the rest of the settings (ii)-(iv) deferred to Section \ref{suppsec:sim} of the Supplementary Material.
For the estimation error experiments with the block-diagonal setup (b), we let $k = 10$ and vary $(n,p) \in \{500, 1000\} \times \{1000, 5000\}$.
For the UQ experiments, we consider the spike-and-slab setup (a) with setting (i) and consider $R = 100$ replicates.


We assess the difficulty of each simulation setup with the average proportion of variance explained by the signal expressed in percentage, computed as $$P_{av} = \dfrac{1}{p}\sum_{j=1}^{p}\left\{ \dfrac{E^*\|\lambda_{0j}\|_2^2}{E^*\|\lambda_{0j}\|_2^2 + E^*(\sigma_{0j}^2)} \right\},$$ 
with lower $P_{av}$ indicating more difficult scenarios.
Here, $\lambda_{0j}^\top$ is the $j$th row of $\Lambda_0$, and $E^*$ denotes the expectation with respect to the true generating measure of $(\Lambda_0, \Sigma_0)$.
For instance, the spike-and-slab setup (a) with setting (i) yields $P_{av} \approx 31\%$.
For the block-diagonal case (b), we have $P_{av} \approx 35\%$ for both $p = 1000$ and $p = 5000$.
The derivations of $P_{av}$ for different setups are provided in Section \ref{suppsec:sim} of the Supplementary Material.

Next, we discuss the performance metrics for judging the estimation error and UQ experiments. 
Given an estimator $\widehat{\Psi}_0$ of $\Psi_0$ obtained from a particular implementation, we assess its efficacy with the relative spectral error, defined as
$$\mathcal{L}(\Psi_0, \widehat{\Psi}_0) = \dfrac{\|\Psi_0 - \widehat{\Psi}_0\|}{\|\Psi_0\|}.$$
When using a Bayesian approach, we let $\widehat{\Psi}_0$ be the corresponding posterior mean. 
For a given $(n,p)$ and a particular replicate $r = 1, \ldots, R$ yielding the estimate $\wh{\Psi}_0^{(r)}$, we obtain $\mathcal{L}^{(r)}(\Psi_0, \widehat{\Psi}_0^{(r)})$ and proceed to report the average, $2.5\%$ quantile, and $97.5\%$ quantile of this quantity over the $R = 50$ replicates. 
For the UQ experiments, we ease computational burden by considering the coverage of a randomly chosen $100 \times 100$ submatrix of $\Psi_0$, corresponding to the covariance of $100$ randomly chosen variables.
For each $(n,p)$, these variable indices are held fixed across replicates. 
We report the average, $2.5\%$ quantile, and $97.5\%$ quantile of the frequentist coverages of the corresponding $95\%$ entrywise credible intervals across the $R = 100$ replicates.
For a given entry, its $95\%$ credible interval is given by $[L, U]$, where $L$ and $U$ are the $2.5\%$ and $97.5\%$-th empirical quantiles of its posterior samples, respectively.
All quantiles reported are based on linear interpolation between the corresponding adjacent order statistics of the replicate values, using the \texttt{quantile} function in \texttt{R}.

All the methods are implemented in the \texttt{R} programming language \citep{rcoreteam2021}. 
Open-source code for implementing FABLE is available at \url{https://github.com/shounakch/FABLE}.
We implement MGSP with the \href{https://github.com/cran/infinitefactor}{\texttt{infinitefactor}} package \citep{poworoznek2021efficiently}.
Code used to implement the ROTATE approach was obtained from \url{http://veronikarock.com/FACTOR\textunderscore ANALYSIS.zip}.
The methods HT, SCAD, and LW, were implemented with the \texttt{thresholdingEst}, \texttt{scadEst}, and \texttt{linearShrinkLWEst} functions, respectively, from the \texttt{cvCovEst} package \citep{boileau2022cvcovest}.
For FABLE, we collect $1000$ Monte Carlo samples; for estimation error experiments, we use the FABLE-posterior mean, which is explicitly available without sampling. 
For MGSP, we obtain $3000$ MCMC iterates, discard the first $1000$ as burn-in, and carry out inference based on the remaining $2000$ samples.

Throughout all simulations, the rank $k$ of the signal when fitting FABLE, MGSP, and ROTATE is treated as an unknown parameter and is thus estimated from the data.
For the ROTATE approach, $k$ is estimated with an Indian buffet process (IBP) prior as described in \cite{rovckova2016fast}.
For the MGSP approach, $k$ is estimated within the MCMC algorithm as described in \cite{bhattacharya2011sparse}.
For FABLE, we estimate $k$ following Section \ref{subsubsec:tunek}.

The ROTATE code requires the choice of a hyperparameter \texttt{lambda0} that crucially affects its performance. 
As recommended in Section 7 of \cite{rovckova2016fast}, we treat \texttt{lambda0} as an inverse temperature parameter, and  gradually increase it from $0.001$ to $18.001$ in increments of $2$. 
As described in their paper, the fit for a sufficiently large \texttt{lambda0} approximates the maximum \textit{a posteriori} (MAP) estimator of the covariance matrix.
For HT, SCAD, and LW, we obtain improved performance by standardizing the data before analysis and transforming the estimated covariance back to the original units in a post-hoc step. 
HT, SCAD, and LW are implemented with 5-fold cross-validation by default.
Further details regarding the tuning of HT, SCAD, and LW are available in Section \ref{suppsec:sim} of the Supplementary Material. 


\subsection{Estimation Performance}
\label{subsec:esterror}

\begin{table}[]
\centering
\caption{\normalsize Comparison of estimation error between multiple approaches, with $50\%$ sparsity in spike-and-slab factor loadings. The ``Mean'' and ``Range'' columns show the average and $2.5\%-97.5\%$ quantiles across replicates, respectively.}
\label{tab:simError}
\begin{tabular}{lcccccccc}
\hline
\multicolumn{1}{c}{$(n,p)$} & \multicolumn{2}{c}{$(500,1000)$} & \multicolumn{2}{c}{$(1000,1000)$} & \multicolumn{2}{c}{$(500,5000)$} & \multicolumn{2}{c}{$(1000,5000)$} \\
       Method & Mean & Range & Mean & Range & Mean & Range & Mean & Range \\
\hline
FABLE & 0.32 & 0.28 -- 0.38
& 0.23 & 0.20 -- 0.27 
& 0.33 & 0.30 -- 0.38 
& 0.24 & 0.21 -- 0.28 \\
MGSP     & 0.33 & 0.27 -- 0.42 
& 0.22 & 0.18 -- 0.27 
& 0.38 & 0.28 -- 0.50 
& 0.24 & 0.19 -- 0.31 \\ 
ROTATE   & 0.46 & 0.41 -- 0.51  
& 0.30 & 0.27 -- 0.35           
& 0.48 & 0.44 -- 0.56           
& 0.32 & 0.28 -- 0.35           \\
HT       & 0.32    & 0.27 -- 0.39           & 0.22    & 0.20 -- 0.25           
& 0.35    & 0.30 -- 0.40           
& 0.23    & 0.21 -- 0.27           \\
SCAD     & 0.34    & 0.29 -- 0.42           & 0.23    & 0.20 -- 0.28           
& 0.37    & 0.31 -- 0.43           
& 0.25    & 0.22 -- 0.30           \\
LW       & 0.35    & 0.31 -- 0.39           & 0.25    & 0.21 -- 0.29           
& 0.37    & 0.34 -- 0.42           
& 0.26    & 0.23 -- 0.30           \\
\hline
\end{tabular}
\end{table}


\begin{table}[]
\centering
\caption{\normalsize Comparison of estimation error between multiple approaches, with $85\%$ sparsity in block-diagonal factor loadings. The ``Mean'' and ``Range'' columns show the average and $2.5\%-97.5\%$ quantiles across replicates, respectively.}
\label{tab:simErrorSparsity}
\begin{tabular}{lcccccccc}
\hline
\multicolumn{1}{c}{$(n,p)$} & \multicolumn{2}{c}{$(500,1000)$} & \multicolumn{2}{c}{$(1000,1000)$} & \multicolumn{2}{c}{$(500,5000)$} & \multicolumn{2}{c}{$(1000,5000)$} \\
       Method & Mean & Range & Mean & Range & Mean & Range & Mean & Range \\
\hline
FABLE & 0.24 & 0.21 -- 0.30 
& 0.17 & 0.14 -- 0.22 
& 0.24 & 0.21 -- 0.31 
& 0.17 & 0.14 -- 0.22 \\
ROTATE   & 0.28 & 0.20 -- 0.38           & 0.12 & 0.08 -- 0.18           & 0.23 & 0.11 -- 0.35           & 0.09 & 0.07 -- 0.13           \\
HT       & 0.17    & 0.13 -- 0.25           & 0.09    & 0.07 -- 0.13           & 0.17    & 0.13 -- 0.24           & 0.09    & 0.07 -- 0.13            \\
SCAD     & 0.17    & 0.14 -- 0.24           & 0.13    & 0.10 -- 0.17           & 0.17    & 0.13 -- 0.24           & 0.13    & 0.09 -- 0.17           \\
LW       & 0.26    & 0.19 -- 0.34           & 0.18    & 0.15 -- 0.21           & 0.26    & 0.20 -- 0.34           & 0.18    & 0.15 -- 0.22           \\
\hline
\end{tabular}
\end{table}
Results of simulation experiments for the spike-and-slab setup with setting (i) are available in Table \ref{tab:simError}.
FABLE performs competitively for all four choices of $(n, p)$ compared to approaches that induce exact sparsity, such as ROTATE, HT, and SCAD. 
The performance of MGSP declines with increasing dimension more than the competitors, possibly due to mismatch between the continuous shrinkage prior  and the exact sparsity in the simulated loadings. 
The MGSP MCMC algorithm exhibited efficient mixing, with the average effective sample size (ESS) of the $p$ idiosyncratic error variances $\sim 95\%$ of the post-burn-in samples.
We also manually tuned the \texttt{lambda0} parameter when implementing ROTATE and obtained estimation errors almost identical to that of FABLE for \texttt{lambda0 = 2.001}. 
However, manual tuning requires knowledge of the true covariance matrix and thus is not feasible in practice. 

For the block-diagonal setup, the results are in Table \ref{tab:simErrorSparsity}, excluding MGSP which is not competitive with the other approaches. 
ROTATE, HT, and SCAD outperform FABLE for most of the choices of $(n,p)$. 
However, FABLE remains competitive in terms of estimation error for both choices of $p$ when $n = 500$. 
It is not surprising that ROTATE and the other approaches enforcing exact sparsity perform particularly well in the highly sparse regime; we expect FABLE and MGSP to outperform competitors when loadings are simulated under continuous light-tailed priors. 
However, it is notable that FABLE badly outperforms MGSP in this highly sparse case; it may be that pre-estimating the factors conveys greater robustness to a variety of true loadings structures.

\subsection{Frequentist Coverage}
\label{subsec:covsim}


\begin{table}[]
\centering 
\caption{\normalsize Comparison of frequentist coverages and interval widths across entrywise credible intervals obtained from FABLE and MGSP, with $50\%$ sparsity in spike-and-slab factor loadings. 
The ``Mean'' and ``Range'' columns show the mean and $2.5 \% - 97.5 \%$ quantiles of average coverages (over all entries) across replicates, respectively.
} 
\label{tab:coverage}

\begin{tabular}{c cc cc cc cc}
\hline
Method & \multicolumn{4}{c}{FABLE} & \multicolumn{4}{c}{MGSP} \\
Metric & \multicolumn{2}{c}{Coverage} & \multicolumn{2}{c}{Width}
 & \multicolumn{2}{c}{Coverage} & \multicolumn{2}{c}{Width} \\
$(n,p)$ & Mean & Range & Mean & Range & Mean & Range & Mean & Range \\
\hline
$(500, 1000)$ 
& 0.95 & 0.94 -- 0.97 
& 0.46 & 0.45 -- 0.47
& 0.85 & 0.83 -- 0.86
& 0.45 & 0.44 -- 0.46 \\
$(1000, 1000)$ 
& 0.95 & 0.94 -- 0.96
& 0.32 & 0.32 -- 0.33
& 0.80 & 0.79 -- 0.81
& 0.32 & 0.31 -- 0.33 \\
$(500, 5000)$ 
& 0.96 & 0.94 -- 0.97
& 0.48 & 0.46 -- 0.49
& 0.83 & 0.81 -- 0.84
& 0.44 & 0.43 -- 0.45 \\
$(1000, 5000)$ 
& 0.95 & 0.94 -- 0.97
& 0.34 & 0.33 -- 0.34
& 0.77 & 0.76 -- 0.78
& 0.31 & 0.31 -- 0.32 \\
\hline
\end{tabular}
\end{table}

As described earlier, for each $(n,p)$, we first consider $100$ randomly chosen variables and hold them fixed across $R=100$ replicates of the data. 
Next, we obtain the average coverage and interval width of $95\%$ posterior credible intervals corresponding to the relevant $100 \times 100$ submatrix of $\Psi_0$. 
For all the simulation instances, we set the coverage-correction factor $\rho$ as recommended in Section \ref{subsec:cov_correction}. 
The results, provided in Table \ref{tab:coverage}, suggest that FABLE provides superior entrywise coverage when compared to MGSP, with comparable interval width. 
In all the cases, the average coverage of entrywise intervals obtained from MGSP falls short of the nominal value $0.95$, while FABLE meets the nominal coverage. 
Furthermore, the average entrywise coverage obtained from MGSP decreases when the number of dimensions increases for a fixed sample size, while the results of FABLE are not affected.

\subsection{Computational Efficiency}
\label{subsec:runtime}

In Figure \ref{fig:runtimePlots}, we compare FABLE with the other competitors in terms of obtaining posterior samples and point estimates of the covariance matrix. 
We consider the spike-and-slab setup as in Section \ref{subsec:preliminaries} with $\pi_0 = 0.5$.
We fix $n = 500$ and vary $p$ from $500$ to $4000$ in increments of $500$, obtaining runtime measurements across $R=20$ independent replicates for each choice of $p$. 
As before, we collect $1000$ Monte Carlo samples for FABLE. 
For MGSP, we collect $3000$ MCMC iterations and discard the first $1000$ iterates as burn-in.
The experiments were carried out on an M1 MacBook Pro with 32 GB of RAM (random-access memory). 
We compare FABLE with MGSP when obtaining samples of the factor loadings and the error variances, and with the other approaches ROTATE, HT, SCAD, and LW, when obtaining a point estimate of the covariance matrix.
In both cases, FABLE is faster than its competitors. 
When obtaining samples, FABLE is about $150-200$ times faster than MGSP, with this ratio remaining fairly constant as $p$ increases. 
When obtaining a point estimator of the covariance, FABLE is at least four times as fast as ROTATE.
Furthermore, obtaining posterior samples with FABLE is around $2.5$ times slower than obtaining the FABLE-point estimate, when averaged across replicates and dimensions.
We provide an analysis of the theoretical computational complexity of FABLE in Section \ref{suppsec:runtimeTheory} of the Supplementary Material.

\begin{figure}[]
\centering
\includegraphics[scale=0.35]{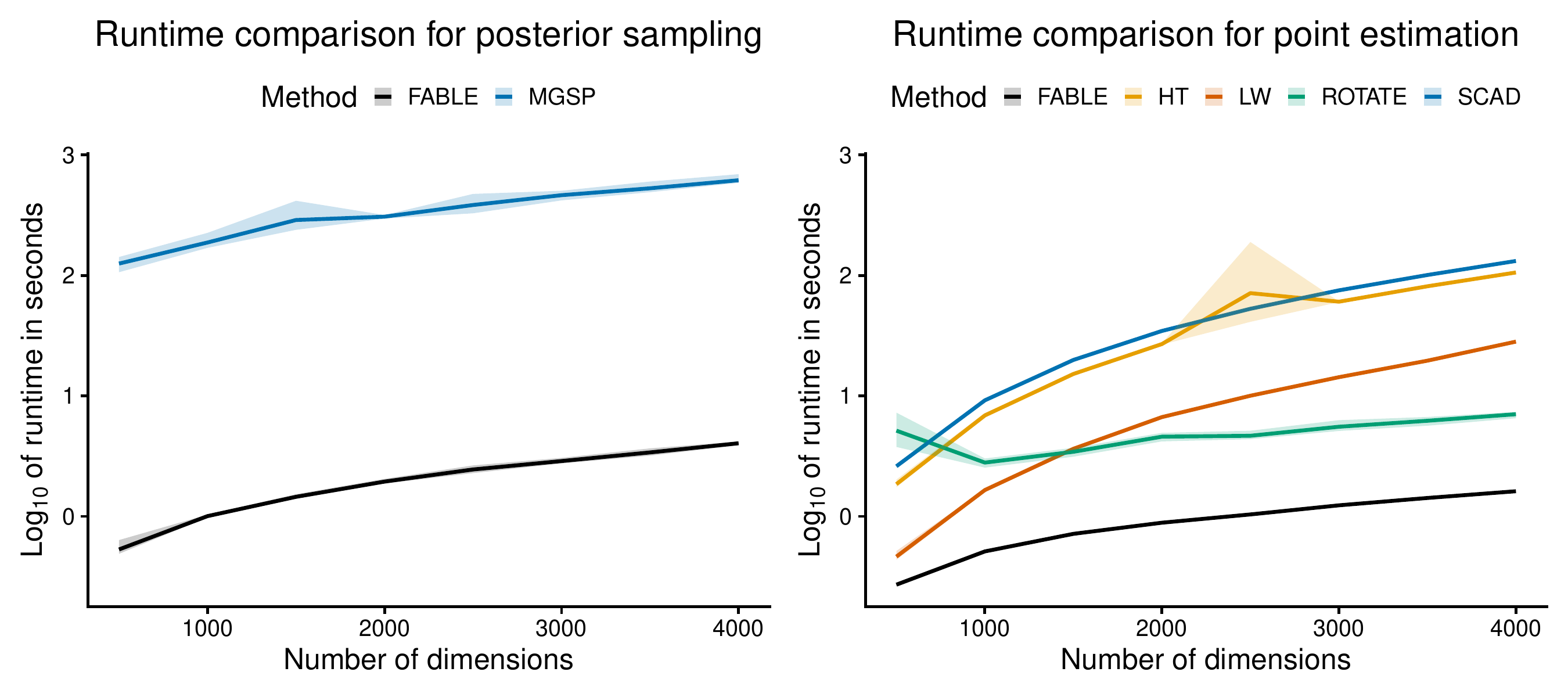}
\caption{Plots comparing the runtimes in seconds (in $\log_{10}$ scale) of FABLE with the competitors. 
The left panel compares FABLE with MGSP when obtaining posterior samples. 
The right panel compares FABLE, ROTATE, HT, SCAD, and LW, when obtaining a point estimate. 
Solid lines indicate average runtimes across replicates while shaded areas indicate minimum and maximum runtimes over replicates.}
\label{fig:runtimePlots}
\end{figure}

\section{Application}
\label{sec:application}







\subsection{Setup}
\label{subsec:appsetup}

We used FABLE to identify associations between gene expressions in a high-throughput sequencing immunocellular dataset. Identifying gene expression associations in varying immune cell populations and between multiple lineages is critical to understanding the inflammatory response to infectious and chronic diseases \citep{ota2021dynamic, saini2022gene}. 
We consider the GSE109125 dataset, which contains RNAseq data from 127 highly purified immune cell lineages, including adaptive and innate lymphocytes, myeloid cells, mast cells, and neutrophils. These data were profiled using the ImmGen ULI pipeline \citep{yoshida2019cis}. 

We first log-transform raw data counts $R_{ij}$ for the $i$th cell and $j$th gene as $y^{*}_{ij} = \log_2(R_{ij} + 1)$ and center the data; 
there are no cells with missing entries.
We then use the \texttt{genefilter} package in \texttt{R} \citep{gentleman2015genefilter, bioconductor2015} to filter relevant genes before fitting the latent factor model. 
In this case, we filter genes according to their variance and consider the top $10\%$ of genes with the highest variances.  
After carrying out appropriate pre-processing of the dataset, we obtain expression data $\mathbf{Y} \in \mathbb{R}^{n \times p}$ for $p = 5300$ genes measured on $n = 205$ cells. 

We compare results from applying the three methods FABLE, MGSP, and ROTATE on the data. 
All three approaches fit the Bayesian latent factor model as described in Section \ref{sec:method}.
The estimated rank when using the $\mbox{JIC}(k)$ criterion as highlighted in Section \ref{subsubsec:tunek} is $\widehat{k} = 30$.
For the other approaches MGSP and ROTATE, $k$ is estimated as  described in Section \ref{subsec:preliminaries}.
Fitting the ROTATE approach by gradually increasing \texttt{lambda0} as described in Section \ref{subsec:preliminaries} did not perform well for this application.
Instead, we obtain superior performance by starting with \texttt{lambda0 = 1} and gradually increasing it to \texttt{lambda0 = 5} and \texttt{lambda0 = 10}, and provide the results for all three choices.

To investigate the overall fit of the Bayesian factor model to the observed data, we carried out several posterior predictive checks.
We first obtain the average coverage of entrywise $95\%$ predictive intervals across all the entries of $\mathbf{Y}$.
The approaches FABLE, MGSP, and ROTATE (with all three choices of \texttt{lambda0}) each have average predictive coverages between $94\% - 96\%$, indicating appropriate calibration. 
Next, we also looked into the average proportion of variance explained by the factor model across all the genes, based on each implementation, which remained between $83\% - 85\%$ across the different approaches.
Both of these checks indicate that factor modeling is a good assumption for this dataset.
Lastly, we found the Gaussian assumption on the entries of $\mathbf{Y}$ to be well supported.
However, a natural direction for future research is to extend the FABLE methodology to directly model count data, such as the $R_{ij}$s themselves.


\subsection{Illustrating Blessing of Dimensions}
\label{subsec:appblessing}

We first highlight the blessing of dimensionality when implementing FABLE through the task of covariance submatrix estimation.
We denote the full covariance matrix by $\Psi \in \mathbb{R}^{p \times p}$ where $p = 5300$ and the covariance of any subset $\mathcal{I} \subset \{1, 2, \ldots, p\}$ of the variables by $\Psi_{\mathcal{I}} \in \mathbb{R}^{|\mathcal{I}| \times |\mathcal{I}|}$. 
That is, $\Psi_{\mathcal{I}}$ is  the submatrix of $\Psi$ corresponding to the indices in $\mathcal{I}$.
Without loss of generality, we assume that the variable indices are in descending order of their variance; that is, variable $1$ has the largest variance and variable $p$ has the smallest variance.
Our objective is to estimate the $100 \times 100$ covariance submatrix corresponding to the $100$ genes with the highest variability. 
Suppose the indices of these genes are $\mathcal{I}_0 = \{1,2,\ldots, 100\}$.
To estimate $\Psi_{\mathcal{I}_0}$ with  FABLE, ROTATE, or MGSP, we could adopt two schemes, described as follows: 

\begin{enumerate}
    \item[(i)] Consider gene expression data for variables with indices $\mathcal{I}_0$ and estimate $\Psi_{\mathcal{I}_0}$ only based on this data.
    \item[(ii)] For a given $p_S \geq 1$, let $\mathcal{A}(p_S) = \{|\mathcal{I}_0| + 1, \ldots, |\mathcal{I}_0| + p_S\} = \{100 + 1, \ldots, 100 + p_S\}$ denote the set of indices of genes with the next $p_S$ highest variances. 
    We estimate $\Psi_{\mathcal{I}_0 \cup \mathcal{A}(p_S)}$ based on data for the variables with indices $\mathcal{I}_0 \cup \mathcal{A}(p_S) = \{1, \ldots, 100 + p_S\}$, and extract the submatrix $\Psi_{\mathcal{I}_0}$ from $\Psi_{\mathcal{I}_0 \cup \mathcal{A}(p_S)}$.
\end{enumerate}



\noindent Scheme (i) is simply a special case of scheme (ii) with $p_S = 0$ and $\mathcal{A}(0) = \phi$.
We vary $p_S \in \{0\} \, \cup \, \{100j \, : \, j=1,\ldots,10\} \, \cup \, \{2000, 4000\}$ and extract the relevant estimate of the covariance between the genes with indices in $\mathcal{I}_0$, namely $\wh \Psi_{\mathcal{I}_0}(p_S)$, for each $p_S$. 

To obtain $\wh \Psi_{\mathcal{I}_0}(p_S)$ for a particular approach, we perform a random train-test split on $\mathbf{Y}$ with $50$ test samples and $n - 50 = 155$ training samples, and train our model on the $155 \times (100 + p_S)$ submatrix of $\mathbf{Y}$ as described earlier.
For each train-test split, we sort the variables according to their variance  with respect to the full $205 \times 5300$ data set.
We evaluate the performance of the estimate $\wh \Psi_{\mathcal{I}_0}(p_S)$ with the out-of-sample log-likelihood (OOSLL) of the $50 \times 100$ test data corresponding to the indices $\mathcal{I}_0$.
We carry out this analysis for $R = 10$ independent replicates of the train-test split, and compare the methods FABLE, MGSP, and ROTATE on the basis of average OOSLL across replicates.
The results are illustrated in Figure \ref{fig:OOSLLApplication}, with the OOSLL values available in Section \ref{suppsec:app} of the Supplementary Material.

\begin{figure}[]
    \centering
\includegraphics[scale=0.45]{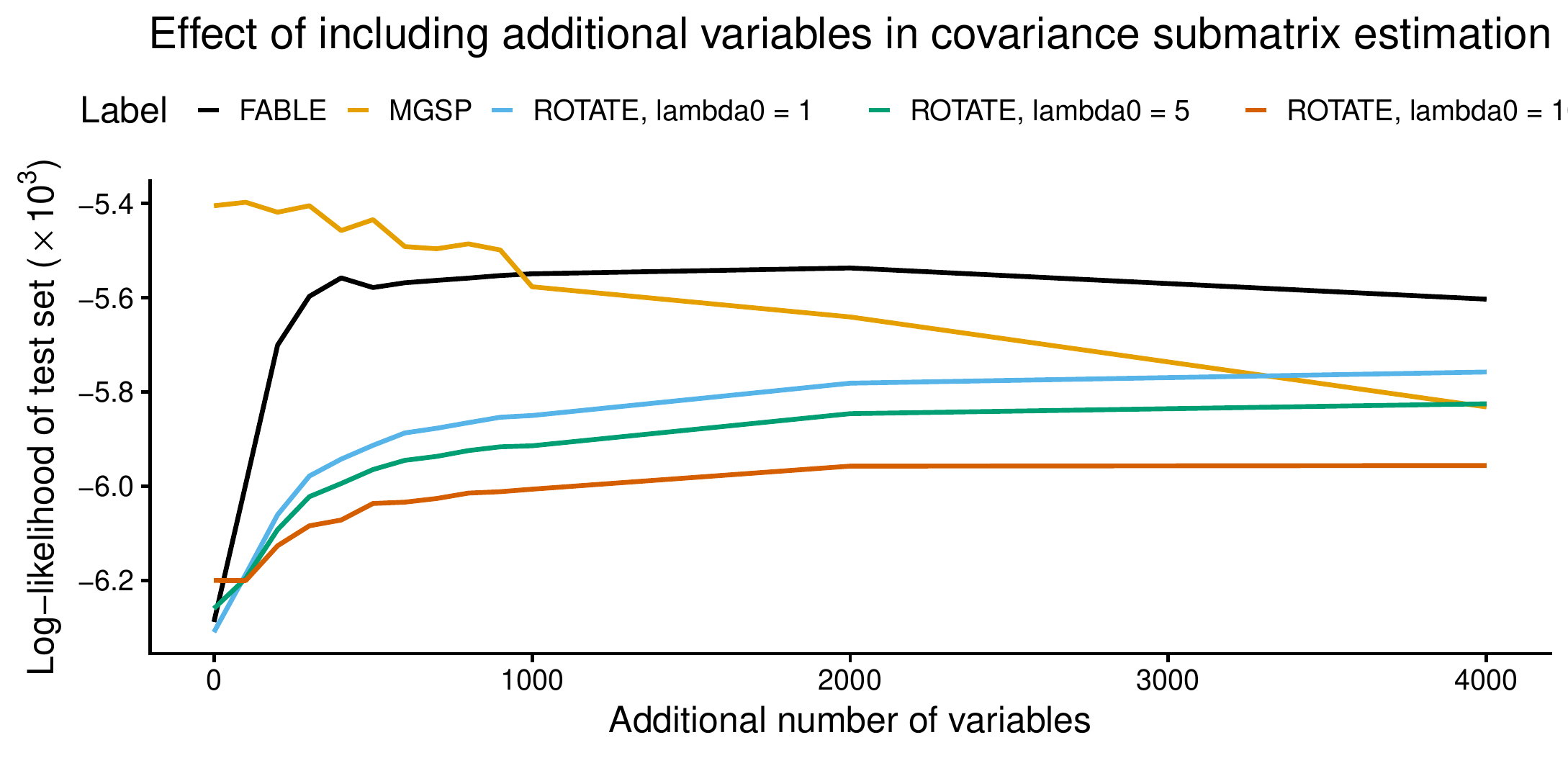}
    \caption{Figure showing log-likelihood of the test set (y-axis) averaged over $R = 10$ replicates of the train-test split when using additional genes (x-axis) to estimate the covariance of the relevant set of $100$ variables, with FABLE, MGSP, and ROTATE. The training set has $155$ cells and the test set has $50$ cells for each of the $R = 10$ train-test split replicates.
    Higher values of test log-likelihood are better.}
    \label{fig:OOSLLApplication}
\end{figure}


From Figure \ref{fig:OOSLLApplication}, it is clear that both FABLE and MGSP exhibit superior performance to ROTATE. 
MGSP performs the best for smaller values of $p_S$, 
along with exhibiting an essentially monotonically decreasing trend as extra genes are added, so that including additional genes does not improve inferences on the genes of interest by using these methods.
In sharp contrast, FABLE shows a rapid initial improvement in performance as extra genes are added, exhibiting a clear blessing of dimensionality. 
After a few hundred genes, the gain levels off, and eventually there is a modest decline, perhaps due to the need to add additional factors not related to the genes of interest when large numbers of additional genes are added.
Along with FABLE, all three variants of ROTATE also seem to benefit from considering additional variables when training the model, thus highlighting a ``blessing of dimensionality'' phenomenon for these approaches.

Thus, the proposed method shows promising results and compares favourably with two state-of-the-art approaches to Bayesian factor analysis.
The results obtained from FABLE come at a fraction of the computational budget compared to MGSP, due to its embarrassingly parallel sampling scheme requiring no MCMC. 
When fitted on the full dataset on an M1 MacBook Pro with $32$ GB of RAM, FABLE only took $1.1$ seconds to compute the posterior mean 
while running MGSP with $3000$ MCMC and $1000$ burn-in iterates took $\sim 27$ minutes. This translates to a speed-up of close to $1600$ times for FABLE. 

\subsection{Train-Test Split}
\label{subsec:apptraintest}

We also carry out a more straightforward train-test split exercise for the gene expression data application without subsetting the number of variables, 
thus working with all $p = 5300$ variables.
We first hold out 35 samples or equivalently a $35 \times 5300$ submatrix as the test set for OOSLL evaluation, and consider a sequence of $4$ training sample sizes $n_T \in \{110, 130, 150, 170\}$ among the rest of the samples.
For each such $n_T$, we fit FABLE, MGSP, and the three variants of ROTATE with \texttt{lambda0} in $\{1, 5, 10\}$ on the training set, obtain corresponding covariance matrix estimates, 
and then obtain the OOSLL of the $35$ held-out test samples with this estimate.
We carry out this analysis for $R = 10$ independent replicates of the train-test split and report average OOSLL for each $n_T$.
The results are available in Figure \ref{fig:trainTestApplication}, with the OOSLL values available in Section \ref{suppsec:app} of the Supplementary Material.
Overall, MGSP performed the best, followed by FABLE, with FABLE nearly outperforming all variants of the ROTATE approach with different values of the tuning parameter \texttt{lambda0}.
All approaches show a naturally increasing trend for the OOSLL as the training sample size increases.

\begin{figure}[]
    \centering
\includegraphics[scale=0.35]{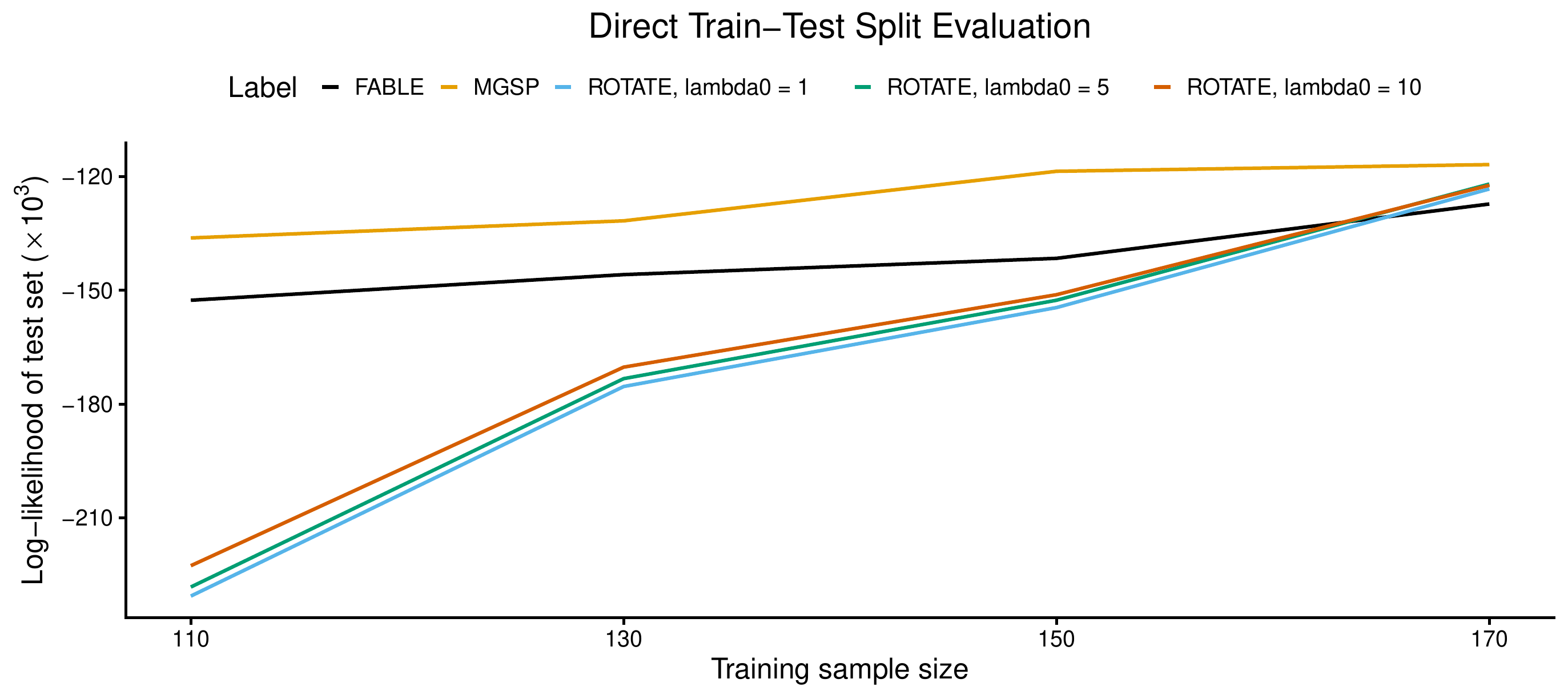}
    \caption{Out-of-sample log-likelihood (y-axis) averaged over $R = 10$ replicates for the direct train-test split exercise with gene expression data, using the FABLE, MGSP, and ROTATE approaches. 
    The training sample size is varied from $110$ to $170$ in increments of $20$. 
    Higher values of test log-likelihood are better.
    }
    \label{fig:trainTestApplication}
\end{figure}

\section{Discussion}
\label{sec:discussion}

In this paper, we develop a computationally scalable approach to fit high-dimensional Bayesian latent factor models with Gaussian data. Using a blessing of dimensions, the proposed approach bypasses the need for MCMC to provide FABLE-posterior samples of the covariance matrix for inference. Due to its embarrassingly parallel nature and reliance on independent samples, FABLE has immense computational benefits over current approaches dependent on MCMC.
The FABLE posterior enjoys desirable theoretical properties, such as consistency and asymptotically accurate UQ of credible intervals. 

This work has already led to several important extensions. \cite{mauri2025factor} extend the FABLE methodology to propose a scalable approach for fitting Bayesian multivariate logistic factor models. Their work is motivated by applications
to joint species distribution modeling in ecology. The number of dimensions (species) for these problems is on the order of $ p \sim 10^4-10^5$, rendering traditional MCMC-based approaches relying on data augmentation Gibbs samplers infeasible \citep{albert1993bayesian, polson2013bayesian}. 
Other important extensions of FABLE include inference on covariance structure in high-dimensional multi-view data \citep{mauri2025inference} and multi-study factor analysis \citep{mauri2025spectral}, with compelling applications in high-throughput multi-omics analyses.

Additional important modeling directions are to modify the FABLE inferential framework to allow more intricate hierarchical modeling. 
In Bayesian factor analysis, it is often of interest to (1) include covariate effects, (2) use more elaborate priors on the loadings to favor sparsity and adaptive selection of the number of factors, (3) consider more flexible latent factor distributions, and/or (4) allow nonlinear measurement structures. 
In exploratory analyses, we have observed good performance for ad hoc FABLE modifications by including covariates and for simulated data having non-Gaussian latent factor distributions. 
Problem (2) can potentially be addressed in the second stage via MCMC algorithms implemented in parallel for the different sparse regressions after inferring the latent factors using a sparse SVD in the first stage. 
Problem (4) may necessitate non-linear dimensionality reduction in the first stage. 
Obtaining concrete implementations with theoretical support for problems (1)-(4) is of future interest.

The FABLE approach crucially relies on the number of dimensions increasing to infinity as the number of samples grows. 
This is essential to ensure the accurate pre-estimation of the unobservable latent factors before leveraging this estimate to obtain exact Monte Carlo samples in an embarrassingly parallel manner.
For smaller numbers of samples and dimensions where such an assumption may not hold, we conjecture that instead of a point pre-estimate of the unknown latent factors, they can be sampled from a distribution with probabilistic uncertainty around this pre-estimate, up to possible rotational constraints.
As an extension of FABLE, we can then sample the latent factors, factor loadings, and error variances independently of one another from approximate posteriors while maintaining an embarrassingly parallel framework.
Just like the proposed FABLE procedure, this modification would also entirely bypass MCMC to obtain posterior samples for the covariance matrix.
We are actively working on this extension and have already obtained some promising results. 

The current work also hints at further investigation into the theoretical properties of Bayesian factor models. 
Although there are existing results focusing on posterior consistency with carefully chosen shrinkage priors on the factor loadings, these are primarily in the case where the idiosyncratic model errors are Gaussian. 
It would thus be interesting to explore how the blessing of dimensions influences both posterior consistency and UQ in different settings, such as when the data are non-Gaussian. 
Furthermore, there remains a clear need to extend existing theoretical results on Bayesian factor models to encompass UQ of the covariance matrix.
Two other future theoretical directions are of immediate interest. 
Firstly, it would be interesting to relax the condition $\sqrt{n}/p_n = o(1)$ when obtaining entrywise guarantees on UQ of the FABLE-posterior.
Secondly, we would like to extend our entrywise UQ result to guarantees that hold uniformly over the entries, thus ensuring joint frequentist coverage of entrywise credible intervals.
\section*{Acknowledgments}

This work was partially supported by Merck, the United States Office of Naval Research (ONR) under grant N00014-21-1-2510, the European Research Council (ERC) under grant 856506, and the National Institutes of Health (NIH) under grant R01ES035625. 
Code used to implement FABLE in the simulation experiments and the gene expression data application is available to download at \url{https://github.com/shounakch/FABLE}.
The authors would also like to thank Sunrit Chakraborty and Peter Dunson for helpful comments and suggestions.

\section*{Data Availability}

The gene expression dataset used in Section \ref{sec:application} of this paper is publicly available for download at \url{https://www.ncbi.nlm.nih.gov/geo/query/acc.cgi?acc=GSE109125} with filename \texttt{GSE109125\textunderscore Genes\textunderscore count\textunderscore table.tsv.gz}.


\bibliography{reference}
\bibliographystyle{apalike}

\newpage

\appendix

\section*{Supplementary Material}

Details on the content of this supplement are as follows.
\begin{enumerate}
    \item Section \ref{app:propIdentify} contains the proof of Proposition \ref{prop:mhat} in the main paper.
    \item Section \ref{proof:post-con} contains the proof of Theorem \ref{theorem:post-con} in the main paper.
    \item Section \ref{proof:bvm} contains the proof of Theorem \ref{theorem:bvm} in the main paper.
    \item Section \ref{proof:asymp-law} contains the proof of Theorem \ref{theorem:asymp-law} in the main paper.
    \item Section \ref{proof:lemmaPostCon} contains the proof of Proposition \ref{proposition:bod} and relevant lemmas used to prove Theorem \ref{theorem:post-con} in the main paper.
    \item Section \ref{proof:lemmaTheorem2and3} contains the proof of additional relevant lemmas used to prove Theorems \ref{theorem:bvm} and \ref{theorem:asymp-law} in the main paper.
    \item Section \ref{suppsec:invgamma} provides results for a sensitivity analysis of FABLE, assessing the impact of varying the inverse-gamma prior hyperparameters.
    \item Section \ref{suppsec:sim} contains additional simulation results for cases described in the main paper, tuning details for the methods HT, SCAD, and LW as implemented in \texttt{R}, as well as the derivation of proportion of variance explained in the simulation examples.
    \item Section \ref{suppsec:app} contains information regarding the gene data application described in the main paper.
    \item Section \ref{suppsec:algorithm} contains a clear description of the FABLE algorithm, along with a computationally convenient expression for the estimated frequentist coverage function $\wh{q}_{uv}(\rho)$ entirely in terms of $(b_{uv})_{1 \leq u, v \leq p}$, defined in \eqref{eq:b_uv}.
    \item Section \ref{suppsec:JICvsAICBIC} contains a comparison of the JIC with the AIC and BIC.
    \item Section \ref{suppsec:runtimeTheory} contains a theoretical analysis of the computational complexity for the FABLE algorithm.
\end{enumerate}
Equations throughout the Supplementary Material are numbered as (S1), (S2), etc.





\section{Proof of Proposition \ref{prop:mhat}}
\label{app:propIdentify}

\begin{proof}
    \begin{enumerate}
        \item[(i)]
        
        We first have $\widehat{\mathbf{M}}^{\top} \widehat{\mathbf{M}} = \widehat{C}^{-1} \mathbf{A}^{\top} \mathbf{A} (\widehat{C}^{\top})^{-1} = \widehat{C}^{-1} (n \widehat{C} \widehat{C}^{\top}) (\widehat{C}^{\top})^{-1} = n \mathbb{I}_k$. 
        Next, we have $\widehat{\mathbf{M}} \widehat{\mathbf{M}}^{\top} = \mathbf{A} (\widehat{C} \widehat{C}^{\top})^{-1} \mathbf{A}^{\top} = (UD/\sqrt{p}) (np D^{-2}) (DU^{\top} / \sqrt{p}) = nUU^{\top}$.
        \item[(ii)] 
        
        It is clear that the FABLE-posterior of $\widetilde{\sigma}_j^2$ only depends on $\widehat{\mathbf{M}}$ through $\widehat{\mathbf{M}}^{\top} \widehat{\mathbf{M}}$ and $\widehat{\mathbf{M}}\widehat{\mathbf{M}}^{\top}$ for all $1 \leq j \leq p$. 
        For $1 \leq u \leq p,$ a FABLE-posterior draw of $\wt{\lambda}_u$ may be represented as
        $$\widetilde{\lambda}_u = \mu_u + \rho \, \widetilde{\sigma}_u \mathbf{K}^{1/2} Z_u,$$
        where $Z_u \overset{ind}{\sim} N_k(0_k,\mathbb{I}_k)$ for $1 \leq u \leq p$ and $Z_u$ is independent of $\widetilde{\sigma}_u^2$. 
        This implies that for $1 \leq u, v \leq p$, $$\widetilde{\lambda}_u^{\top} \widetilde{\lambda}_v = \mu_u^{\top} \mu_v + \rho \, \widetilde{\sigma}_v \mu_u^{\top} \mathbf{K}^{1/2} Z_v + \rho \, \widetilde{\sigma}_u \mu_v^{\top} \mathbf{K}^{1/2} Z_u + \rho^2 \, \widetilde{\sigma}_u \widetilde{\sigma}_v Z_u^{\top} \mathbf{K} Z_v.$$
        We now consider each expression one-by-one. The first term is $\mu_u^{\top} \mu_v = y^{(u) \top} \widehat{\mathbf{M}} \mathbf{K}^{2} \widehat{\mathbf{M}}^{\top} y^{(v)} = (n + \tau^{-2})^{-2} y^{(u) \top} \widehat{\mathbf{M}}\widehat{\mathbf{M}}^{\top} y^{(v)}.$ 
        When $u \neq v$, the second and third terms have  Gaussian distributions with means $0$ and variances  $ \rho^2 \, \widetilde{\sigma}_u^2 \|\mu_v\|_2^2 / (n + \tau^{-2})$ and $\rho^2 \, \widetilde{\sigma}_v^2 \|\mu_u\|_2^2 / (n + \tau^{-2})$, respectively.
        This is because $\mathbf{K} = \left\{\widehat{\mathbf{M}}^{\top} \widehat{\mathbf{M}} + (\mathbb{I}_k / \tau^2)\right\}^{-1} = (n + \tau^{-2})^{-1} \mathbb{I}_k.$
        When $u=v$, both terms are equal and the sum has a Gaussian distribution with mean $0$ and variance $4 \widetilde{\sigma}_u^2 \|\mu_u\|_2^2$. The final term is equal to $(n + \tau^{-2})^{-1} \widetilde{\sigma}_u \widetilde{\sigma}_v Z_u^{\top} Z_v$. 
        Since $\|\mu_u\|_2^2$ only depends on $\wh{\mathbf{M}}\wh{\mathbf{M}}^\top$, all the terms depend on $\widehat{\mathbf{M}}$ only as hypothesized, implying the result for $\widetilde{L}$. 
        The result for $\widetilde{\Psi}$ is obtained by simply observing that $\widetilde{\Psi} = \widetilde{L} + \widetilde{\Sigma}.$
    \end{enumerate}
    This proves the desired result.
\end{proof}

\section{Proof of Theorem \ref{theorem:post-con}}
\label{proof:post-con}

    \subsection{Proof Strategy}

    For part (a), we first decompose $\wt L - L_0$ (where $L_0 = \Lambda_0\Lambda_0^\top$) into a number of terms.
    Among the terms in this decomposition, one term arises from the randomness in $M_0^{\top} M_0$ and another term arises from the cross terms involving the error matrix $E$.
    These two terms provide the final rate, while other terms dealing with shrinkage bias and the residual projection term are asymptotically negligible.
    For part (b), we show $\wt\sigma_j^2$ concentrates around $\sigma_{0j}^2$ by re-centering the inverse-gamma FABLE-posterior as a shifted Gamma variable and then applying a non-asymptotic tail bound.
    Part (c) then follows from parts (a) and (b) via the triangle inequality and an equivalence bound on $\|\Psi_0\|$.
    Proofs of relevant technical lemmas are available in Section \ref{suppsubsec:lemmathm1}.

    \subsection{Proof of part (a)}
    \begin{proof}
    We first start with the FABLE-posterior contraction of $\wt L = \wt \Lambda \wt \Lambda^\top$ to $L_0 = \Lambda_0 \Lambda_0^\top.$ 
    Let $G_0 = [\mu_1, \ldots, \mu_p]^\top = \sqrt{n} \, \Y^{\top} U / (n + \tau^{-2})$ and $G = G_0 G_0^{\top} = n\Y^{\T} UU^{\T} \Y / (n + \tau^{-2})^2.$  
    Write $\wt \Lambda = G_0 + \widetilde{E}$, where $\widetilde{E} = [\tilde{e}_1, \ldots, \tilde{e}_p]^{\top}$ with $$\tilde{e}_j \overset{ind}{\sim} N_k\left(0, \frac{\rho^2 \wt\sigma_{j}^2}{n + \tau^{-2}} \mathbb{I}_k \right) \text{ for $j = 1, \ldots, p$.}$$ 
    We have
    \begin{equation}
    \label{suppeq:lambdaMainSplit}
        \begin{split}
            \wt L - L_0 & = \wt{\Lambda} \wt{\Lambda}^\top - \Lambda_0\Lambda_0^\top \\ 
            & = \left[\frac{\Y^\top \wh{ \mathbf{M}}\widetilde{E}^\top + \widetilde{E}\wh{ \mathbf{M}}^\top \Y }{n+\tau^{-2}} + \widetilde{E}\widetilde{E}^\top \right] + \left(G - \Lambda_0\Lambda_0^{\top}\right) \\
            & = \text{(I) + (II), \quad say.}
        \end{split}
    \end{equation}
    We individually bound each of (I) and (II). 
    First, we decompose (II) as follows:
    \begin{align}
    \label{suppeq:lambdaIISplit}
        \begin{split}
            & G - \Lambda_0\Lambda_0^\top = \frac{n}{(n+\tau^{-2})^2}\left(\Y^\top \Y - \Y^\top U_{\perp}U_{\perp}^\top \Y\right) - \Lambda_0\Lambda_0^\top\\
            & =  \frac{-n}{(n+\tau^{-2})^2}\Y^\top U_{\perp}U_{\perp}^\top \Y + \frac{n}{(n+\tau^{-2})^2}\left(M_0\Lambda_0^\top + E\right)^\top \left(M_0\Lambda_0^\top + E\right) - \Lambda_0\Lambda_0^\top \\
            & = \frac{-n}{(n+\tau^{-2})^2}\Y^\top U_{\perp}U_{\perp}^\top \Y + \frac{n}{(n+\tau^{-2})^2}\Lambda_0(M_0^\top M_0-n\mathbb{I}_k)\Lambda_0^\top +  \\
            & \left(\frac{n^2}{(n+\tau^{-2})^2}-1\right) \Lambda_0\Lambda_0^\top + \frac{n}{(n+\tau^{-2})^2}(\Lambda_0 M_0^{\top} E + E^\top M_0\Lambda_0^\top + E^\top E) \\
            & = \text{(II.I) + (II.II) + (II.III) + (II.IV), \quad say.}
        \end{split}
    \end{align}
    To develop an upper bound of $\| \wt L - L_0 \|$, we aim to develop an upper bound for the spectral norm of (I) and each term in \eqref{suppeq:lambdaIISplit}, namely (II.I) -- (II.IV).
    We enumerate them as follows. 
    \begin{itemize}
        \item[(i)] First, we develop a probabilistic upper bound for $\wt{\sigma}_{j}^2,$ which will be used throughout this proof.
        Recall that $\|y^{(j)}\|_2^2 = \sum_{i=1}^n y_{ij}^2 \sim (\sigma_{0j}^2 + \|\lambda_{0j}\|_2^2) \, \chi_n^2.$ 
        From Lemma \ref{lemma:laurent-massart}, we have $\|y^{(j)}\|_2^2 \lesssim n$ with probability at least $1-o(1).$
    Thus, with probability at least $1-o(1)$ we have
    $$\frac{\gamma_0\delta_0^2 + \|y^{(j)}\|_2^2 -  (n+\tau^{-2}) \|\mu_j\|_2^2}{2} \lesssim n.$$
    As $$\wt{\sigma}_{j}^2 \sim \mbox{IG}\left(\frac{\gamma_0+n}{2}, \frac{\gamma_0\delta_0^2 + \|y^{(j)}\|_2^2 - (n+\tau^{-2}) \|\mu_j\|_2^2}{2}\right),$$
    the Gamma distribution tail bound in Theorem 5 of \cite{zhang2020non} implies there exists a constant $C'>0$ such that
\begin{equation}\label{ineq:tilde-E}
        \begin{split}
            \wt \Pi\left(\max_{1\leq j\leq p}\wt{\sigma}_{j}^2 \leq C'\right) = 1- o_{P_0}(1).
        \end{split}
    \end{equation}
    \item[(ii)] Now, we start with (II.III).
    Since $\|\Lambda_0\| \asymp \sqrt{p_n}$ from Assumption \ref{assumption:norm}, we have
\begin{equation}\label{ineq:evaluation-Lambda_0}
    \left\|\left\{\frac{n^2}{(n+\tau^{-2})^2}-1\right\}\Lambda_0\Lambda_0^\top\right\| = \left|
\frac{n^2 - (n+\tau^{-2})^2}{(n+\tau^{-2})^2}\right|\|\Lambda_0\|^2 \lesssim \dfrac{p_n}{n}.
    \end{equation}
    \item[(iii)] Now consider (II.II). 
    Recall $M_0$ is an $n$-by-$k$ matrix with i.i.d. Gaussian entries. 
    With probability at least $1-o(1)$, we have
    \begin{align*}
        \left\|\frac{n}{(n+\tau^{-2})^2}\Lambda_0(M_0^\top M_0-n\mathbb{I}_k)\Lambda_0^\top\right\| & \lesssim \frac{1}{n}\|\Lambda_0\|^2 \|M_0^\top M_0 - n\mathbb{I}_k\| & \\ 
        & \lesssim \frac{p_n \, \|M_0^\top M_0 - n \mathbb{I}_k\|}{n} & \text{[Assumption \ref{assumption:norm}]} \\
        & \lesssim \dfrac{p_n \sqrt{n \log n}}{n} & \text{[Lemma \ref{lemma:concentration-iid-Gaussian}]}\\
        & = p_n \sqrt{\dfrac{\log n}{n}}.
    \end{align*}
    \item[(iv)] \textcolor{black}{Next, we consider (II.IV):}
    \begin{equation*}
        \begin{split}
            & \left\|\frac{n}{(n+\tau^{-2})^2}(\Lambda_0 M_0^\top E + E^\top M_0\Lambda_0^\top + E^\top E)\right\| \lesssim \frac{1}{n}\left(2\left\|\Lambda_0 M_0^\top E\right\| + \left\|E^\top E\right\|\right).
        \end{split}
    \end{equation*}
    By Lemma \ref{lemma:spectral-norm-heteroskedastic-entries}, $\|E^\top E\| = \|E\|^2 \lesssim n+p_n+\sqrt{np_n}$ with probability at least $1- o(1)$. 
    \textcolor{black}{By Assumption \ref{assumption:norm} and Lemma \ref{lemma:spectral-norm-heteroskedastic-entries}, $\|\Lambda_0 M_0^{\top} E\| \leq \|\Lambda_0\|\,\|M_0\| \,  \|E\| \lesssim \sqrt{np_n} (\sqrt{n} + \sqrt{p_n})$ with probability at least $1-o(1)$.} 
    Thus, with probability at least $1-o(1)$, 
    \begin{align*}
        \left\|\frac{n}{(n+\tau^{-2})^2}(\Lambda_0^\top M_0 E + E^\top M_0\Lambda_0^\top + E^\top E)\right\| & \lesssim\frac{n+p_n + p_n\sqrt{n} + n \sqrt{p_n} + \sqrt{np_n}}{n}\\
        & \lesssim \dfrac{p_n \sqrt{n} + n \sqrt{p_n}}{n}.
    \end{align*}
    \item[(v)] \textcolor{black}{Now, we bound (II.I).}
    Since $U$ corresponds to the top $k$ left singular vectors of $\Y$, we have $\|\Y^\top U_{\perp}U_{\perp}^\top \Y\| = \|\Y^\top U_{\perp}\|^2 = s_{k+1}^2(\Y)$, where $s_{k+1}^2(\Y)$ is the $(k+1)$-th singular value of $\Y$. By Lemma \ref{lm:Eckart–Young}, we have
    \begin{align*}
        \|\Y^\top U_{\perp}U_{\perp}^\top \Y\| & = s_{k+1}^2(\Y)\\
        & = \min_{B: \text{rank}(B)\leq k}\|\Y-B\|^2 \\
        & \leq \|\Y - M_0\Lambda_0^\top\|^2 = \|E\|^2 \\
        & \lesssim n+p_n+\sqrt{np_n} & \textcolor{black}{\text{[Lemma \ref{lemma:spectral-norm-heteroskedastic-entries}]}}
    \end{align*}
    with probability at least $1-o(1)$.
    Thus, 
    $$\left\|\frac{-n}{(n+\tau^{-2})^2}\Y^\top U_{\perp}U_{\perp}^\top \Y\right\| \lesssim \frac{n+p_n + \sqrt{np_n}}{n}.$$
    \item[(vi)] \textcolor{black}{Finally, we consider (I):}
    \begin{equation*}
        \begin{split}
            \left\|\frac{\Y^\top \wh{ \mathbf{M}}\widetilde{E}^\top + \widetilde{E}\wh{ \mathbf{M}}^\top \Y}{n+\tau^{-2}} + \widetilde{E}\widetilde{E}^\top\right\|\lesssim \dfrac{2}{n}\left\|\Y^\top \wh{ \mathbf{M}}\widetilde{E}^\top\right\| + \|\widetilde{E}\widetilde{E}^\top\|.
        \end{split}
    \end{equation*}
    By Lemma \ref{lemma:spectral-norm-heteroskedastic-entries}, the following hold with probability at least $1-o_{P_0}(1)$:
    \begin{align*}     \|\widetilde{E}\widetilde{E}^\top\| & = \|\widetilde{E}\|^2 \lesssim \frac{p_n}{n} \rho^2 \left(\max_{1\leq j \leq p}\wt{\sigma}_{j}^2 \right) \lesssim \frac{p_n}{n}, \text{and} \\
    \frac{1}{n}\left\|\Y^\top \wh{ \mathbf{M}} \widetilde{E}^\top \right\| & \lesssim \frac{1}{n}\|\Y^\top \sqrt{n}U\| \left\|\widetilde{E}\right\| \lesssim \frac{\rho}{\sqrt{n}}\|\Y\| \sqrt{\frac{p_n}{n} \left(\max_{1\leq j\leq p}\wt{\sigma}^2_{j}\right)}\\
        & \lesssim \left(\frac{\sqrt{np_n} + \sqrt{n} + \sqrt{p_n}}{\sqrt{n}} \right) \sqrt{\frac{p_n}{n}} \lesssim \frac{p_n}{\sqrt{n}}.
    \end{align*}
    Thus, with probability at least $1-o_{P_0}(1)$, we have 
    $$\left\| \frac{\Y^\top \wh{ \mathbf{M}}\widetilde{E}^\top + \widetilde{E}\wh{ \mathbf{M}}^\top \Y}{(n+\tau^{-2})} + \widetilde{E}\widetilde{E}^\top\right\| \lesssim \dfrac{p_n}{\sqrt{n}}.$$
    \end{itemize}
    \textcolor{black}{Combining the previous steps (i)-(vi) along with $\|\Lambda_0\| \asymp \sqrt{p_n}$, we have obtained bounds for both (I) and (II). 
    Plugging the bounds into \eqref{suppeq:lambdaMainSplit}, we have
    $$\dfrac{\|\wt L - L_0\|}{\|L_0\|} \lesssim \dfrac{1}{p_n} \left( p_n \sqrt{\dfrac{{\log n}}{n}} + \dfrac{n\sqrt{p_n} + p_n \sqrt{n}}{n}\right) \lesssim \sqrt{\dfrac{\log n}{n}} + \dfrac{1}{\sqrt{p_n}},$$
    we have proved the desired result.
    }
\end{proof}
    
    \subsection{Proof of part (b)}
\begin{proof}
    We now show the contraction result for the FABLE-posterior of $\wt \Sigma$. 
    We start out by observing that $||\wt{\Sigma} - \Sigma_0|| = \underset{1 \leq j \leq p}{\max} |D_j|$, where $D_j = \wt{\sigma}_{j}^2 - \sigma_{0j}^2$ for $j=1,\ldots,p.$ 
    Let $\kappa_j = \wt{\sigma}_{j}^{-2}$, so that $\kappa_j \overset{ind}{\sim} G(\gamma_n / 2, \gamma_n \delta_j^2 / 2)$ under $\wt \Pi$. 
    Let 
$$\Delta_j = \dfrac{\gamma_n \delta_j^2}{2} \kappa_j - \dfrac{\gamma_n}{2}.$$
We can now express $D_j$ as
\begin{equation}
\label{suppeq:postConSigmaDj}
    D_j = \left(1 + \dfrac{2 \Delta_j}{\gamma_n}\right)^{-1} \, 
\left\{(\delta_j^2 - \sigma_{0j}^2) - \dfrac{2 \Delta_j}{\gamma_n}\sigma_{0j}^2\right\}.
\end{equation}
Using Lemma \ref{lemma:gamma-conc}, we have $\max_{1 \leq j \leq p}|\Delta_j| / \gamma_n \lesssim \{(\log p_n) / n\}^{1/3}$ with probability at least $1 - o_{P_0}(1).$ 
Thus, $\min_{1 \leq j \leq p}|1 + (2 \Delta_j / \gamma_n)| \gtrsim 1/2$ with probability at least $1-o_{P_0}(1).$ 
Therefore, with probability at least $1-o_{P_0}(1),$ \eqref{suppeq:postConSigmaDj} implies
\begin{equation}
\label{suppeq:DjBound}
    \max_{1 \leq j \leq p}|D_j| \lesssim 2 \max_{1 \leq j \leq p}|\delta_j^2 - \sigma_{0j}^2| + 4 \left(\max_{1 \leq j \leq p}\sigma_{0j}^2 \right) \max_{1 \leq j \leq p}\dfrac{|\Delta_j|}{\gamma_n}.
\end{equation}
Let $Q_j = \delta_j^2 - \sigma_{0j}^2$.
Using Lemma \ref{lemma:deltasq}, we can represent
$$Q_j = \dfrac{\sigma_{0j}^2}{n}\left\{\dfrac{Z_j}{\sigma_{0j}^2} - (n-k)\right\} - \dfrac{k\sigma_{0j}^2}{n} + F_j,$$
where $Z_j / \sigma_{0j}^2 \sim \chi_{n-k}^2 \equiv G\{(n-k)/2, 1/2\}$ and $\underset{1 \leq j \leq p_n}{\max}|F_j| \lesssim \dfrac{1}{\sqrt n} + \dfrac{1}{\sqrt{p_n}}$ with probability at least $1-o(1).$ \textcolor{black}{We now consider bounding the individual terms in $Q_j$.}
Using Lemma \ref{lemma:gamma-conc}, we obtain 
$$\max_{1 \leq j \leq p}\left|\dfrac{Z_j}{\sigma_{0j}^2} - (n-k)\right| \lesssim n \left(\dfrac{\log p_n}{n}\right)^{1/3}$$
with probability at least $1 - o(1)$. Thus, with probability at least $1-o(1)$, we get 
\begin{equation}
\label{suppeq:thm1Deltaj}
    \max_{1 \leq j \leq p}|Q_j| =  \max_{1 \leq j \leq p}|\delta_j^2 - \sigma_{0j}^2| \lesssim \left(\dfrac{\log p_n}{n}\right)^{1/3} + \dfrac{1}{\sqrt{p_n}}.
\end{equation}
We now bound the second term in \eqref{suppeq:DjBound}. 
From Lemma \ref{lemma:gamma-conc}, we have $\max_{1 \leq j \leq p} |\Delta_j| / \gamma_n \lesssim \left({\log p_n}/{n}\right)^{1/3}$ with probability at least $1-o_{P_0}(1)$. 
Combining the bounds for $Q_j$ and the above into \eqref{suppeq:DjBound}, we obtain $$\max_{1 \leq j \leq p}|D_j| = \|\wt \Sigma - \Sigma_0\| \lesssim \left(\dfrac{\log p_n}{n}\right)^{1/3} + \dfrac{1}{\sqrt{p_n}}.$$
This proves the result.
\end{proof}

\subsection{Proof of part (c)}
\begin{proof}
From Assumption \ref{assumption:norm}, we have $\|\Psi_0\| \asymp \|L_0\| \asymp p_n.$ By the triangle inequality, $\|\wt \Psi - \Psi_0\| \leq \|\wt L - L_0\| + \|\wt \Sigma - \Sigma_0\|$. Thus, under $\wt \Pi$, we can combine the bounds obtained from proofs of parts (a) and (b) to obtain
$$\dfrac{\|\wt{\Psi} - \Psi_0\|}{\|\Psi_0\|} \lesssim \dfrac{1}{\sqrt{p_n}} + \sqrt{\dfrac{\log n}{n}} + \dfrac{1}{p_n} \left[\left(\dfrac{\log p_n}{n}\right)^{1/3} +  \dfrac{1}{\sqrt{p_n}}\right] \lesssim \dfrac{1}{\sqrt{p_n}} + \sqrt{\dfrac{\log n}{n}}$$
with probability at least $1-o_{P_0}(1)$. 
This proves the result.
\end{proof}

\section{Proof of Theorem \ref{theorem:bvm}}
\label{proof:bvm}

\subsection{Proof Strategy}

We prove the result in two parts: first for $u \neq v$ and then for $u = v$.
We can approximate the entrywise distribution of the FABLE-posterior samples with a suitable Gaussian distribution by showing that the remainder of the terms concentrate/converge to $0$.
For $u = v$, we use the decomposition in \eqref{suppeq:postConSigmaDj} along with the Gaussian approximation for $u < v$ to establish the result. 
The decomposition in \eqref{suppeq:postConSigmaDj} provides a $\chi^2_n$ distribution which is further approximated by a Gaussian distribution using the CLT, and is independent of the Gaussian approximation from the proof for $u < v$, and thus the variances add up for the final result.

\subsection{Proof for $u \neq v$:}
    \begin{proof}
        \textcolor{black}{Under the FABLE-posterior, we can represent $\lambda_u = \mu_u + \rho \left(\wt \sigma_u \tilde e_u / \sqrt{n + \tau^{-2}}\right)$ with $\tilde{e}_{u}, \tilde{e}_{v} \overset{ind}{\sim} N_k(0, \mathbb{I}_k)$, so that}
        \begin{equation}
            \label{suppeq:thm2Luv}
            \wt L_{uv} = \wt \lambda_u^\top \wt \lambda_v = \mu_u^{\top} \mu_v + \dfrac{\rho \left(\wt{\sigma}_{v} \mu_u^{\top} \tilde{e}_{v} + \wt{\sigma}_{u} \mu_v^{\top} \tilde{e}_{u} \right)}{\sqrt{n + \tau^{-2}}} + \dfrac{\rho^2 \wt{\sigma}_{u} \wt{\sigma}_{v} \tilde{e}_{u}^{\top} \tilde{e}_{v}}{n + \tau^{-2}},
        \end{equation}
    First let $u \neq v$, so that $\wt \Psi_{uv} = \wt L_{uv}$.
    Since $\tilde e_u, \tilde e_v \overset{\text{ind}}{\sim} N_k(0, \mathbb{I}_k)$, we can express the second term in the RHS of \eqref{suppeq:thm2Luv} as
    \begin{align*}
        \rho\left(\wt \sigma_{v} \mu_u^{\top} \tilde{e}_{v} + \wt \sigma_{u} \mu_v^{\top} \tilde{e}_{u}\right) & = 
         l_{0, uv}(\rho) R_{uv} + d_{uv}(\rho) R_{uv},
    \end{align*}
    for $R_{uv} \sim N(0,1)$ such that $R_{uv}$ is independent of $\mathbf{Y}, \wt\sigma_{u}^2, $ and $\wt \sigma_{v}^2$, with $$l_{0,uv}^2(\rho) = \rho^2 \left(\sigma_{0v}^2 \|\lambda_{0u}\|_2^2 + \sigma_{0u}^2 \|\lambda_{0v}\|_2^2\right),  \text{ and}$$ $$d_{uv}(\rho) = \rho(\wt \sigma_{v}^2 ||\mu_u||_2^2 + \wt \sigma_{u}^2 ||\mu_v||_2^2)^{1/2} - l_{0, uv}(\rho).$$ 
    Therefore, we can express \eqref{suppeq:thm2Luv} as
    \begin{align}
    \label{suppeq:thm2main}
        & \sqrt{n}(\wt L_{uv} - 
        \mu_u^{\top} \mu_v) = \dfrac{\sqrt{n}}{\sqrt{n + \tau^{-2}}} l_{0, uv}(\rho) R_{uv} + d_{uv}^{*}(\rho),
    \end{align}
    where 
    $$d_{uv}^{*}(\rho) = \sqrt{\dfrac{n}{n+\tau^{-2}}}  d_{uv}(\rho) R_{uv}  + \sqrt{n} \left( \dfrac{\rho^2 \wt \sigma_{v} \wt\sigma_{u} \tilde{e}_{u}^{\top} \tilde{e}_{v}}{n + \tau^{-2}}\right).$$
    Due to FABLE-posterior concentration of $\wt\sigma_{j}^2$ to $\sigma_{0j}^2$ (from part (b) of Theorem \ref{theorem:post-con}) and convergence in probability of $||\mu_j||_2^2$ to $||\lambda_{0j}||_2^2$ from Lemma \ref{lemma:post-mean-conv}, we have FABLE-posterior concentration of $d_{uv}(\rho)$ around $0$ as $n \to \infty$ for any finite $\rho > 0$. 
    Furthermore, with probability at least $1-o_{P_0}(1)$, we have
    $$\sqrt{n} \left(\dfrac{\wt{\sigma}_{u} \wt{\sigma}_{v} \tilde{e}_{u}^{\top} \tilde{e}_{v}}{n + \tau^{-2}}\right) \lesssim \dfrac{1}{\sqrt{n}}.$$
    Thus, the FABLE-posterior of $d_{uv}^{*}(\rho)$ concentrates around $0$ asymptotically. As $n \to \infty$, Lemma \ref{lemma:bvm-approx} with \eqref{suppeq:thm2main} leads to
    $$\underset{x}{\sup} \, \left|\wt{\Pi}\left\{ \dfrac{\sqrt{n}(\wt L_{uv} - \mu_u^{\top} \mu_v)}{l_{0, uv}(\rho)} \leq   x\right\} - \Phi(x)\right| \overset{P_0}{\to} 0.$$


\subsection{Proof for $u=v$:}
For $u=v$, we first observe that $\sqrt{n}(\wt \Psi_{uu} - T_{uu}) = \sqrt{n} (\wt L_{uu} - \|\mu_u\|_2^2) + \sqrt{n}(\wt \sigma_{u}^2 - \delta_{u}^2).$ 
We can use the decomposition in \eqref{suppeq:postConSigmaDj} to express
\begin{equation}
\label{suppeq:thm2sigdecomp}
    \sqrt{n}(\wt \sigma_{u}^2 - \delta_u^2) = -\dfrac{2 \sqrt{n} \, \Delta_u}{\gamma_n}\sigma_{0u}^2 + \mathcal{G}_u,
\end{equation}
where
\begin{equation}
\label{suppeq:thm2Gj}
    \mathcal{G}_u = -\dfrac{2\sqrt{n}(\delta_u^2 - \sigma_{0u}^2) (\Delta_u / \gamma_n)}{1 + (2 \Delta_u / \gamma_n)} + \dfrac{4\sqrt{n} \sigma_{0u}^2 (\Delta_u^2 / \gamma_n^2)}{1 + (2 \Delta_u / \gamma_n)}.
\end{equation}
Letting $ \mathcal{N}_u = -\dfrac{2 \sqrt{n} \, \Delta_u}{\gamma_n} \sigma_{0u}^2,$ we have 
$\mathcal{N}_u / \sqrt{2 \sigma_{0u}^4} \implies N(0, 1)$, using the normal approximation to the Gamma distribution. 
We now consider bounding each of the terms on the RHS of \eqref{suppeq:thm2Gj}. 
The first term is bounded as
\begin{align*}
    \left|-\dfrac{2\sqrt{n}(\delta_u^2 - \sigma_{0u}^2) (\Delta_u / \gamma_n)}{1 + (2 \Delta_u / \gamma_n)}\right| & \asymp |\mathcal{N}_u| \, |\delta_u^2 - \sigma_{0u}^2| \\
    & \lesssim \left(\dfrac{\log p_n}{n}\right)^{1/3} + \dfrac{1}{\sqrt{p_n}}
\end{align*}
with probability at least $1 - o_{P_0}(1)$, by the result on $\mathcal{N}_u$ and the bound on $|\delta_u^2 - \sigma_{0u}^2|$ from \eqref{suppeq:thm1Deltaj}. 
The second term is analogously bounded:
\begin{align*}
    \left|\dfrac{4\sqrt{n} \sigma_{0u}^2 (\Delta_u^2 / \gamma_n^2)}{1 + (2 \Delta_u / \gamma_n)}\right| \asymp \dfrac{1}{\sqrt{n}}|\mathcal{N}_u|^2 \lesssim \dfrac{1}{\sqrt{n}},
\end{align*}
which provides $|\mathcal{G}_u| \lesssim o(1)$ with probability at least $1 - o_{P_0}(1)$.
Thus, we can combine \eqref{suppeq:thm2Luv} and \eqref{suppeq:thm2sigdecomp} to write
$$\sqrt{n}\left(\wt\Psi_{uu} - T_{uu}\right) = \left(2\rho \, \sigma_{0u} \|\lambda_{0u}\|_2 \right) R_{uu} + \mathcal{N}_u + \mathcal{G}_u,$$
as $l_{0,uu}^2(\rho) = 4 \rho^2 \sigma_{0u}^2 \|\lambda_{0u}\|_2^2$.
By construction, $R_{uu}$ is independent of $\mathcal{N}_u$ for $1 \leq u \leq p$. Combining all these observations and using Lemmas \ref{lemma:slutsky-extension} and \ref{lemma:bvm-approx}, we have the desired result for $u=v$, with $l_{0,uu}^2(\rho) = 2\sigma_{0u}^4 + 4\rho^2 \sigma_{0u}^2 \|\lambda_{0u}\|_2^2$ for $1 \leq u \leq p$.
\end{proof}

\section{Proof of Theorem \ref{theorem:asymp-law}} 
\label{proof:asymp-law}

\subsection{Proof Strategy}

    For $1 \leq u , v \leq p$, let 
\begin{align}
\label{suppeq:Suv}
    S_{uv} = \dfrac{n}{(n+\tau^{-2})^2}y^{(u) \top}UU^{\top} y^{(v)} - \lambda_{0u}^{\top} \lambda_{0v}.
\end{align}
For $u \neq v$, the proof will involve decomposing $S_{uv}$ into a number of components, with some components contributing to the asymptotic Gaussian law, and the rest converging to $0$ in probability. 
There are two components that (asymptotically) independently contribute to the limiting Gaussian distribution.
The first component involves the randomness of the true latent factor matrix $M_0$ itself.
The law of the second component depends on the error distributions $\epsilon^{(u)}$.
We will use different parts of Proposition \ref{proposition:bod} to establish the result, along with other Lemmas (highlighted in this document) and well-known results.
A key observation is that the bound on the spectral norm of $UU^\top - U_0 U_0^\top$ obtained from part (a) of Proposition \ref{proposition:bod} does not suffice to establish the UQ result, as $\sqrt{n}(n^{-1/2} + p_n^{-1/2})$ does not converge to $0$ as $n \to \infty$.
\textbf{Instead, we require a more refined leave-one-out argument} leading to part (c) of Proposition \ref{proposition:bod}, that helps bound some of the terms obtained from the decomposition of $S_{uv}$.
Some of the remainder terms are bounded using part (b) of Proposition \ref{proposition:bod}.

For $u = v$, the proof is considerably simpler.
We establish that the  estimator $T_{uu}$ is asymptotically equivalent to $(1/n) \|y^{(u)}\|_2^2 \sim (\|\lambda_{0u}\|_2^2 + \sigma_{0u}^2) \, \chi_n^2$, as one would expect, and thus a direct application of the CLT can be used to obtain the result.

\subsection{Proof for $u \neq v$:}
\begin{proof} 
We first assume $ u \neq v$. 
Since $y^{(u)} = M_0 \lambda_{0u} + \epsilon^{(u)}$, we can decompose $S_{uv}$ in \eqref{suppeq:Suv} as
$S_{uv} = D_{uv} + R_{uv}$ after some algebraic manipulation, where
\begin{align}
    D_{uv} & = \lambda_{0u}^{\top} \left(\dfrac{1}{n} M_0^{\top} U_0 U_0^{\top} M_0 - \mathbb{I}_k\right)\lambda_{0v} + \dfrac{1}{n}\left(\lambda_{0u}^{\top}M_0^{\top}U_0 U_0^{\top} \epsilon^{(v)} + \lambda_{0v}^{\top} M_0^{\top}U_0 U_0^{\top} \epsilon^{(u)}\right), \notag \\
    & = \text{(D.I) + (D.II), say.} \label{suppeq:thm3Duv}\\
    R_{uv} & = \, \Delta_n \lambda_{0u}^{\top} M_0^{\top}U_0 U_0^{\top} M_0 \lambda_{0v} \notag \\
    & + f_n \, \lambda_{0u}^{\top}M_0^{\top}(UU^{\top} - U_0U_0^{\top})M_0 \lambda_{0v} \notag \\
    & + \Delta_n \left[\lambda_{0u}^{\top} M_0^{\top}U_0 U_0^{\top} \epsilon^{(v)} + \lambda_{0v}^{\top} M_0^{\top} U_0 U_0^{\top} \epsilon^{(u)}\right] \notag \\
    & + f_n\left[\lambda_{0u}^{\top} M_0^{\top} (UU^{\top} - U_0 U_0^{\top})\epsilon^{(v)} + \lambda_{0v}^{\top} M_0^{\top}(UU^{\top} - U_0U_0^{\top})\epsilon^{(u)}\right] \notag \\
    & + f_n \, \epsilon^{(u) \top} UU^{\top} \epsilon^{(v)} \notag\\
    & = \text{(R.I) + (R.II) + (R.III) + (R.IV) + (R.V), say.} \label{suppeq:thm3Ruv}
\end{align}
with $f_n = n / (n + \tau^{-2})^2 = (1/n) + \Delta_n$ such that $|\Delta_n| \asymp (1/n^2).$ 
We will show that $\sqrt{n} R_{uv} = o_{P_0}(1)$ by individually bounding each of its terms.
We then derive an asymptotic Gaussian limit distribution of $\sqrt n D_{uv}$.
We now consider each of the terms in $\sqrt n R_{uv}$ as in \eqref{suppeq:thm3Ruv}.
\textcolor{black}{Throughout, we use the identity 
 \begin{equation}
    \label{suppeq:thm3projection}
    \Lambda_0 M_0^\top U_0 U_0^\top = V_0 D_0 U_0^\top U_0 U_0^\top = V_0 D_0 U_0^\top = \Lambda_0 M_0^\top
 \end{equation}
 along with $\|\lambda_{0u}\|_2 = \mathcal{O}(1) \text{ and } \|\lambda_{0v}\|_2 = \mathcal{O}(1)$, from Assumption \ref{assumption:norm}.
 We also remark that Assumptions \ref{assumption:dim} and \ref{assumption:uq} together imply that 
 $$\dfrac{\log n}{p_n} = \dfrac{\log n}{\sqrt{n}} \times \dfrac{\sqrt{n}}{p_n} = o(1).$$
 }
\begin{itemize}
    \item[(i)] First consider (R.V).
    We bound $\sqrt{n} f_n \epsilon^{(u) \top} UU^{\top} \epsilon^{(v)}$ as follows:
\begin{align*}
    \left|\sqrt{n}f_n \epsilon^{(u) \top } UU^{\top} \epsilon^{(v)}\right|
    & \leq \sqrt{n}f_n \left|\epsilon^{(u) \top} U_0 U_0^{\top} \epsilon^{(v)}\right| + \sqrt{n} f_n \left|\epsilon^{(u) \top} (UU^{\top} - U_0 U_0^{\top}) \, \epsilon^{(v)}\right|.
\end{align*}
The first term is $\sqrt{n} f_n \left|\epsilon^{(u)} U_0 U_0^{\top} \epsilon^{(v)} \right| \, = \mathcal{O}_{P_0}(\sqrt{n} f_n) = \mathcal{O}_{P_0}(1/\sqrt{n})$, since $U_0^{\top} \epsilon^{(u)} \sim N(0, \sigma_{0u}^2 \mathbb{I}_k)$ with finite $k$ and $f_n \asymp (1/n)$. 
The second term is handled by observing that
{\color{black}
\begin{align*}
    \sqrt{n} f_n & \left|\epsilon^{(u) \top} (UU^{\top} - U_0 U_0^{\top}) \epsilon^{(v)}\right|\\ 
    & \leq \sqrt{n} f_n \|\epsilon^{(v)}\|_2 \|(UU^{\top} - U_0 U_0^{\top})\epsilon^{(u)}\|_2 \\
    & \lesssim \|(UU^{\top} - U_0 U_0^{\top})\epsilon^{(u)}\|_2 & \textcolor{black}{\text{[Lemma \ref{lemma:laurent-massart}]}} \\
    & \lesssim \sqrt{\log n}\left(\dfrac{1}{\sqrt{n}} + \dfrac{1}{\sqrt{p_n}}\right) + \dfrac{\sqrt{n}}{p_n} \quad & \text{[Proposition \ref{proposition:bod}, (c)]}\\ 
    & = o(1) \quad & \text{[Assumption \ref{assumption:uq}]} &,
\end{align*}
with probability at least $1-o(1).$
}
Thus, 
we have $\sqrt{n} \left|f_n \epsilon^{(u)\top} UU^{\top} \epsilon^{(v)}\right| = o_{P_0}(1)$.
    \item[(ii)] Next, we consider (R.I):
    \begin{align*}
     \sqrt{n} \Delta_n \left|\lambda_{0u}^{\top} M_0^{\top} U_0 U_0^{\top} M_0 \lambda_{0v}\right| & = \sqrt{n} \Delta_n \left|\lambda_{0u}^\top M_0^{\top}M_0 \lambda_{0v}\right| & \textcolor{black}{\text{[Using \eqref{suppeq:thm3projection}]}} \\
&  \leq \sqrt{n} \Delta_n \|M_0\|^2 \|\lambda_{0u}\|_2 \|\lambda_{0v}\|_2 \\
& \lesssim \dfrac{1}{\sqrt{n}} & \textcolor{black}{\text{[Lemma \ref{lemma:laurent-massart}]}}\\ 
& = o(1),  
    \end{align*}
    with probability at least $1-o(1).$ 
    Thus, $\sqrt{n} \Delta_n \left|\lambda_{0u}^{\top} M_0^{\top} U_0 U_0^{\top} M_0 \lambda_{0v}\right| = o_{P_0}(1).$
    \item[(iii)] 
    {\color{black}
    Next, we consider (R.II):
    \begin{align*}
     & \sqrt{n} f_n \left|\lambda_{0u}^{\top} M_0^{\top}(UU^{\top} - U_0U_0^{\top})M_0 \lambda_{0v}\right|\\ 
     & \leq \sqrt{n}f_n \|M_0^{\top}(UU^{\top} - U_0U_0^{\top})M_0\| \|\lambda_{0u}\|_2 \|\lambda_{0v}\|_2\\
     & \lesssim \sqrt{n}f_n \|M_0^{\top}(UU^{\top} - U_0U_0^{\top})M_0\| \quad & \text{[Assumption \ref{assumption:norm}]}\\
     & \lesssim \dfrac{1}{\sqrt{n}} \left(1 + \dfrac{n}{p_n}\right) \quad & \text{[Proposition \ref{proposition:bod}, (b)]}\\
     & = \dfrac{1}{\sqrt{n}} + \dfrac{\sqrt{n}}{p_n} \\ 
     & = o(1), \quad & \text{[Assumption \ref{assumption:uq}]}
    \end{align*}
    with probability at least $1-o(1)$.
    Thus, $$\sqrt{n} f_n \left|\lambda_{0u}^{\top} M_0^{\top}(UU^{\top} - U_0U_0^{\top})M_0 \lambda_{0v}\right| = o_{P_0}(1).$$
    }
    \item[(iv)] Next, we consider (R.III):
    \begin{align*}
        & \sqrt{n} \Delta_n \left|\lambda_{0u}^{\top}  M_0^{\top} U_0 U_0^{\top} \epsilon^{(v)} + \lambda_{0v}^{\top}  M_0^{\top} U_0 U_0^{\top} \epsilon^{(u)}\right|\\ 
        = & \, \sqrt{n} \Delta_n \left|\lambda_{0u}^{\top} M_0^{\top} \epsilon^{(v)} +
        \lambda_{0v}^{\top} M_0^{\top} \epsilon^{(u)}\right| & \textcolor{black}{\text{[Using \eqref{suppeq:thm3projection}]}} \\
        \leq & \, \sqrt{n} \Delta_n (\|\lambda_{0u}\|_2 \|M_0\| \|\epsilon^{(v)}\|_2 + \|\lambda_{0v}\|_2 \|M_0\| \|\epsilon^{(u)}\|_2) \\
        \lesssim & \, \dfrac{1}{n^2}  (\sqrt{n})^3 & \textcolor{black}{\text{[Lemmas \ref{lemma:laurent-massart} and \ref{lemma:concentration-iid-Gaussian}]}} \\
        = & \, \dfrac{1}{\sqrt{n}},
    \end{align*}
    with probability at least $1-o(1).$
    Thus, 
    $$\sqrt{n} \Delta_n \left|\lambda_{0u}^{\top}  M_0^{\top} U_0 U_0^{\top} \epsilon^{(v)} + \lambda_{0v}^{\top}  M_0^{\top} U_0 U_0^{\top} \epsilon^{(u)}\right| = o_{P_0}(1).$$
    \item[(v)] Finally, 
    we consider (R.IV):
    {\color{black}
    \begin{align*}
        & \sqrt{n} f_n \left|\lambda_{0u}^{\top} M_0^{\top} (UU^{\top} - U_0 U_0^{\top})\epsilon^{(v)} +   \lambda_{0v}^{\top} M_0^{\top}(UU^{\top} - U_0U_0^{\top})\epsilon^{(u)}\right| \\
        \lesssim & \, \dfrac{1}{\sqrt{n}} \|M_0^{\top}\| \|(UU^{\top} - U_0U_0^{\top}) \epsilon^{(v)}\|_2\\
        \lesssim & \sqrt{\log n}\left(\dfrac{1}{\sqrt{n}} + \dfrac{1}{\sqrt{p_n}}\right) + \dfrac{\sqrt{n}}{p_n} \hspace{10em} \text{[Proposition \ref{proposition:bod}, (c)]}\\ 
        = & \, o(1) \hspace{22.35em} \text{[Assumption \ref{assumption:uq}]},
    \end{align*}
    with probability at least $1-o(1)$.
    Thus, 
    $$\sqrt{n} f_n \left|\lambda_{0u}^{\top} M_0^{\top} (UU^{\top} - U_0 U_0^{\top})\epsilon^{(v)} + \lambda_{0v}^{\top} M_0^{\top}(UU^{\top} - U_0U_0^{\top})\epsilon^{(u)}\right| = o_{P_0}(1).$$
    }
\end{itemize}

\noindent 
Putting steps (i)-(v) together implies
    $\sqrt{n} |R_{uv}| = o_{P_0}(1).$ 
    
    Next, we consider obtaining the limiting law of $\sqrt{n} D_{uv}$. 
    We first consider (D.I):
    \begin{align*}
        \sqrt{n} \lambda_{0u}^{\top} \left(\dfrac{1}{n} M_0^{\top} U_0 U_0^{\top} M_0 - \mathbb{I}_k\right) \lambda_{0v} & = \sqrt{n} \lambda_{0u}^{\top} \left(\dfrac{1}{n} M_0^{\top}  M_0 - \mathbb{I}_k\right) \lambda_{0v}\\
        & = \sqrt{n}(\overline{V}_n - \lambda_{0u}^{\top} \lambda_{0v})\\
        & {\implies} N(0, \xi_{0, uv}^2),
    \end{align*}
    as $n \to \infty$, using the central limit theorem (CLT), where $\overline{V}_n = (1/n) \sum_{i=1}^{n} V_i$ with $V_i := (\lambda_{0u}^{\top} \eta_{0i})(\lambda_{0v}^{\top} \eta_{0i})$ for $i=1,\ldots,n$, and 
    \begin{align*}
        \mathcal{\xi}_{0,uv}^2 = \mbox{var}\left[(\lambda_{0u}^{\top} \eta_{01})(\lambda_{0v}^{\top} \eta_{01})\right] & = (\lambda_{0u}^{\top} \lambda_{0v})^2 + ||\lambda_{0u}||_2^2 ||\lambda_{0v}||_2^2. 
    \end{align*}
    Next, we consider (D.II):
$$\sqrt{n}\left\{\dfrac{1}{n}\left(\lambda_{0u}^{\top}M_0^{\top}U_0 U_0^{\top} \epsilon^{(v)} + \lambda_{0v}^{\top} M_0^{\top}U_0 U_0^{\top} \epsilon^{(u)}\right)\right\} = \sqrt{n}\left\{\dfrac{1}{n}\left(\lambda_{0u}^{\top}M_0^{\top}\epsilon^{(v)} + \lambda_{0v}^{\top} M_0^{\top} \epsilon^{(u)}\right)\right\}.$$
    Let $$l_{uv}^2(M_0) = \sigma_{0v}^2\lambda_{0u}^{\top} \dfrac{M_0^{\top} M_0}{n} \lambda_{0u} + \sigma_{0u}^2\lambda_{0v}^{\top} \dfrac{M_0^{\top} M_0}{n} \lambda_{0v}.$$
    Since $\epsilon^{(u)}$ and $\epsilon^{(v)}$ are independent for $u \neq v$, we have
$$\left.\dfrac{\sqrt{n}}{n}\left(\lambda_{0u}^{\top}M_0^{\top}\epsilon^{(v)} + \lambda_{0v}^{\top} M_0^{\top} \epsilon^{(u)}\right) \, \right \vert M_0 \sim N(0, l_{uv}^2(M_0)) = l_{uv}(M_0) Z_{uv},$$
    where $Z_{uv} \sim N(0,1)$ and $Z_{uv}$ is independent of $M_0.$ Let
    $$l_{0, uv}^2 = \sigma_{0v}^2 ||\lambda_{0u}||_2^2 + \sigma_{0u}^2 ||\lambda_{0v}||_2^2$$
    be the limit (in probability) of $l_{uv}(M_0)$ as $n \to \infty$.
    Then, combining the expressions for (D.I) and (D.II) in \eqref{suppeq:thm3Duv}, we have
    \begin{align}
    \label{suppeq:thm3DuvDecomp}
        \sqrt{n} D_{uv} & = \sqrt{n} \lambda_{0u}^{\top} \left(\dfrac{1}{n} M_0^{\top} U_0 U_0^{\top} M_0 - \mathbb{I}_k\right) \lambda_{0v} + l_{uv}(M_0)Z_{uv} \notag\\
        & = \sqrt{n} \lambda_{0u}^{\top} \left(\dfrac{1}{n} M_0^{\top}  M_0 - \mathbb{I}_k\right) \lambda_{0v} + l_{0, uv}Z_{uv} + \{l_{uv}(M_0) - l_{0, uv}\}Z_{uv}.
    \end{align}
    Since $\|M_0\| \asymp \sqrt{n}$, we have $\left|l_{uv}(M_0) + l_{0, uv}\right| = \mathcal{O}_{P_0}(1)$ and $\left|l_{uv}(M_0) + l_{0,uv}\right| \geq l_{0,uv} \geq \sqrt{2c_1^2c_2} > 0$, from Assumptions \ref{assumption:norm} and \ref{assumption:var}. 
    Thus, the third term in the RHS of \eqref{suppeq:thm3DuvDecomp} can be bounded in probability by observing
    \begin{align*}
         \{l_{uv}(M_0) - l_{0, uv}\}Z_{uv} & = \dfrac{l_{uv}^2(M_0) - l_{0, uv}^2}{l_{uv}(M_0) + l_{0, uv}} Z_{uv}\\
         & = \mathcal{O}_{P_0}\left(\sqrt{\dfrac{\log n}{n}}\right),
    \end{align*}
    as $\left\|\dfrac{M_0^{\top}M_0}{n} - \mathbb{I}_k\right\| = \mathcal{O}_{P_0}\left(\sqrt{\dfrac{\log n}{n}}\right)$ from Lemma \ref{lemma:concentration-iid-Gaussian}.
Since $Z_{uv}$ is independent of $M_0$, Lemma \ref{lemma:slutsky-extension} immediately implies that
    $$\sqrt{n} D_{uv} \overset{d}{=} N(0, \xi_{0, uv}^2 + l_{0, uv}^2) + o_{P_0}(1).$$
Let $\mathcal{S}_{0,uv}^2 = l_{0,uv}^2 + \xi_{0,uv}^2$. 
Since $\sqrt{n} R_{uv} = o_{P_0}(1)$, we now invoke Slutsky's theorem to obtain
    $$\sqrt{n} S_{uv} = \sqrt{n} D_{uv} + \sqrt{n} R_{uv} \overset{}{\implies} N(0, \mathcal{S}_{0, uv}^2),$$
which shows the desired result.
\end{proof}

\subsection{Proof for $u = v$:}
\begin{proof}
    {\color{black}
    We now prove the result for $u = v$. 
    Firstly, we can rewrite $\sqrt{n} S_{uu}$ as in \eqref{suppeq:Suv} as
    \begin{align*}
        \sqrt{n} S_{uu} &= \sqrt{n} \left[\dfrac{1}{n} y^{(u)\top} UU^\top y^{(u)} - \|\lambda_{0u}\|_2^2\right] + \sqrt{n} \,\Delta_n y^{(u)\top} UU^\top y^{(u)},
    \end{align*}
    where $\Delta_n = \dfrac{n}{(n+\tau^{-2})^2} - \dfrac{1}{n}$ so that $|\Delta_n| \asymp \dfrac{1}{n^2}$.
    Using Lemma \ref{lemma:laurent-massart}, the second term above can be bounded as $\sqrt{n}|\Delta_n y^{(u)\top} UU^\top y^{(u)}| \leq \sqrt{n}|\Delta_n| \|y^{(u)}\|_2^2 \|UU^\top\| \lesssim \sqrt{n}\dfrac{1}{n^2}(\sqrt{n})^2 = \dfrac{1}{\sqrt{n}},$
    with probability at least $1-o(1)$, and therefore
    \begin{align}
        \label{suppeq:SuvFinal}
        \sqrt{n} S_{uu} = \sqrt{n} \left[\dfrac{1}{n} y^{(u)\top} UU^\top y^{(u)} - \|\lambda_{0u}\|_2^2\right] + o_{P_0}(1).
    \end{align}
    Define $F_{uu} = \delta_{u}^2 - \sigma_{0u}^2$ and consider, from the definition of $\delta_u^2$ in \eqref{eq:hyper},
    \begin{align}
        \sqrt{n} F_{uu} 
        = & \, \sqrt{n} \left\{\dfrac{\gamma_0 \delta_0^2}{\gamma_n} + \dfrac{1}{\gamma_n}\left(\|y^{(u)}\|_2^2 - \dfrac{ny^{(u)\top} UU^\top y^{(u)}}{n+\tau^{-2}}\right) - \sigma_{0u}^2\right\} \notag\\
        = & \, \sqrt{n} \left\{\dfrac{1}{n} y^{(u)\top}(\mathbb{I}_n - UU^\top) y^{(u)} - \sigma_{0u}^2 \right\} + \notag\\
        & \sqrt{n} y^{(u)\top} (\mathbb{I}_n - UU^\top) y^{(u)}\left(\dfrac{1}{\gamma_n} - \dfrac{1}{n}\right) + \notag\\
        & \sqrt{n} \left\{\dfrac{\gamma_0 \delta_0^2}{\gamma_n} + \dfrac{\tau^{-2}}{\gamma_n(n+ \tau^{-2})} y^{(u)\top}UU^\top y^{(u)}\right\}. \label{suppeq:Duu}
    \end{align}
    The second term in \eqref{suppeq:Duu} can be bounded as
    \begin{align*}
        \left|\sqrt{n} y^{(u)\top} (\mathbb{I}_n - UU^\top) y^{(u)}\left(\dfrac{1}{\gamma_n} - \dfrac{1}{n}\right)\right| & \lesssim \sqrt{n} (\sqrt{n})^2 \dfrac{1}{n^2} = \dfrac{1}{\sqrt{n}},
    \end{align*}
    with probability at least $1-o(1)$. 
    The third term in \eqref{suppeq:Duu} can be bounded as 
    \begin{align*}
        \left|\sqrt{n} \left\{\dfrac{\gamma_0 \delta_0^2}{\gamma_n} + \dfrac{\tau^{-2}}{\gamma_n(n+ \tau^{-2})} y^{(u)\top}UU^\top y^{(u)}\right\}\right| & \lesssim\dfrac{1}{\sqrt{n}} + \dfrac{\sqrt{n} (\sqrt{n})^2}{n \times n} \lesssim \dfrac{1}{\sqrt{n}},
    \end{align*}
    with probability at least $1-o(1).$
    Combining these results into \eqref{suppeq:Duu}, we have
    \begin{align}
        \label{suppeq:DuuFinal}
        \sqrt{n} F_{uu} = \sqrt{n} \left\{\dfrac{1}{n} y^{(u)\top}(\mathbb{I}_n - UU^\top) y^{(u)} - \sigma_{0u}^2 \right\} + o_{P_0}(1).
    \end{align}
    Combining \eqref{suppeq:SuvFinal} and \eqref{suppeq:DuuFinal}, we have
    \begin{align*}
        \sqrt{n}(S_{uu} + F_{uu}) & = \sqrt{n}\left\{\dfrac{1}{n} \|y^{(u)}\|_2^2 - \left(\|\lambda_{0u}\|_2^2 + \sigma_{0u}^2 \right)\right\} + o_{P_0}(1),
    \end{align*}
    and the desired result is obtained by observing that (i) $T_{uu} - (\|\lambda_{0u}\|_2^2 + \sigma_{0u}^2) = S_{uu} + F_{uu}$ and (ii) $\|y^{(u)}\|_2^2 \sim (\|\lambda_{0u}\|_2^2 + \sigma_{0u}^2) \, \chi_n^2$ and invoking the Gaussian approximation of the $\chi_n^2$ distribution via the CLT.
    }
\end{proof}

\section{Proofs of Proposition \ref{proposition:bod} and Related Lemmas for Theorem \ref{theorem:post-con}}
\label{proof:lemmaPostCon}

\subsection{Proof of Proposition \ref{proposition:bod}}
\label{proofSubSec:bod}

\begin{proof}
{\color{black}
    We will use the statistician-friendly Davis-Kahan theorem (henceforth referred to as the SFDKT), available as Theorem 2 of \cite{yu2015useful}, throughout the proof. 
    We restate the result below for convenience:
    
    \vspace{1em}
    
    \noindent \textbf{[Theorem 2 in \cite{yu2015useful}.]}
Let $\Sigma, \widehat{\Sigma} \in \mathbb{R}^{p \times p}$ be symmetric, with eigenvalues
$\lambda_1 \ge \cdots \ge \lambda_p$ and
$\widehat{\lambda}_1 \ge \cdots \ge \widehat{\lambda}_p$, respectively.
Fix $1 \le r \le s \le p$ and assume that
\[
\min\{\lambda_{r-1}-\lambda_r,\,
      \lambda_s-\lambda_{s+1}\}>0,
\]
where $\lambda_0:=\infty$ and $\lambda_{p+1}:=-\infty$.

Let $d=s-r+1$, and define
\[
V=(v_r,\ldots,v_s)\in\mathbb{R}^{p\times d},
\qquad
\widehat V=(\widehat v_r,\ldots,\widehat v_s)\in\mathbb{R}^{p\times d},
\]
where
\[
\Sigma v_j=\lambda_jv_j,
\qquad
\widehat\Sigma\,\widehat v_j=\widehat\lambda_j\widehat v_j,
\qquad
j=r,\ldots,s.
\]

Then
\[
\|\sin\Theta(\widehat V,V)\|_\text{F}
\le
\frac{
2\min\!\left\{
\sqrt{d}\,\|\widehat\Sigma-\Sigma\|_{},
\;
\|\widehat\Sigma-\Sigma\|_\text{F}
\right\}
}{
\min\{\lambda_{r-1}-\lambda_r,\,
      \lambda_s-\lambda_{s+1}\}
}.
\]

Moreover, there exists an orthogonal matrix
$\widehat O\in\mathbb{R}^{d\times d}$ such that
\[
\|\widehat V\widehat O-V\|_\text{F}
\le
\frac{
2^{3/2}\min\!\left\{
\sqrt{d}\,\|\widehat\Sigma-\Sigma\|_{},
\;
\|\widehat\Sigma-\Sigma\|_\text{F}
\right\}
}{
\min\{\lambda_{r-1}-\lambda_r,\,
      \lambda_s-\lambda_{s+1}\}
}.
\]
    We also use Lemma 1 of \cite{cai2018rate}, showing the equivalence of $\sin \Theta$ and subspace distances between projection operators (upto universal constants); that is
    $$\| \sin\Theta(U_1, U_2) \| \leq \| U_1U_1^\top - U_2 U_2^\top \|\leq 2\| \sin\Theta(U_1, U_2) \| $$
    for any two left subspace singular matrices $U_1, U_2.$
    We also use the fact that $\|A\|\leq \|A\|_\text{F}$ for any matrix $A$, where $\|A\|_\text{F}$ denotes the Frobenius norm of $A$.
    \begin{enumerate}
        \item[(a)] \, From the SVD of $\Y$ in \eqref{eq:SVD}, we have
        $$\Y \Y^\top = UD^2 U^\top + U_{\perp} D_{\perp}^2 U_{\perp}^{\top}.$$
        Recall that $X_0 := M_0 \Lambda_0^\top$ has the SVD $X_0= U_0 D_0 V_0^\top$. 
        Thus, we have $X_0 X_0^\top = U_0 D_0^2 U_0^\top$, 
        with eigengap $\Delta_n \asymp (\sqrt{np_n})^2 = np_n$
        with probability at least $1-o(1),$ from Assumption \ref{assumption:norm} and Lemma \ref{lemma:matrix_normal_norm}.
        Thus, we have
        $$\|UU^\top - U_0 U_0^\top\| \leq 2 \|\sin\Theta(U, U_0)\| \leq 2 \|\sin \Theta(U, U_0)\|_\text{F} \lesssim \dfrac{\|\Y \Y^\top - X_0 X_0^\top\|}{\Delta_n}$$
        with probability at least $1-o(1)$, using the SFDKT, as $k = \mathcal{O}(1)$. 
        To bound the numerator, note that $\Y = X_0 + E$ and thus
        \begin{align}
        \label{suppeq:eigenGapArgument}
        \begin{split}
            \|\Y \Y^\top - X_0 X_0^\top\| & = \|X_0 E^\top + E X_0^\top + EE^\top\|\\
            & \lesssim 2 \|X_0 E^\top \| + \|EE^\top \|\\
            & \lesssim \sqrt{np_n}(\sqrt{n} + \sqrt{p_n}) + (\sqrt{n} + \sqrt{p_n})^2 \quad [\text{Lemma \ref{lemma:spectral-norm-heteroskedastic-entries}}]\\
            & \lesssim \sqrt{np_n} (\sqrt{n} + \sqrt{p_n})
        \end{split}
        \end{align}
        with probability at least $1-o(1)$. Thus, we have
        \begin{align*}
            \|UU^\top - U_0 U_0^\top\| & \lesssim \dfrac{\|\Y \Y^\top - X_0 X_0^\top\|}{\Delta_n}\\
            & \lesssim \dfrac{\sqrt{np_n}(\sqrt{n} + \sqrt{p_n})}{np_n}\\
            & = \dfrac{1}{\sqrt{n}} + \dfrac{1}{\sqrt{p_n}}
        \end{align*}
        with probability at least $1-o(1)$, proving the claim.

        \item[(b)] \, We start with $\|X_0^{\top}UU^{\top}X_0 - X_0^{\top}U_0 U_0^{\top}X_0\|$ and observe that
    \begin{align*}
        \|X_0^{\top}UU^{\top}X_0 - X_0^{\top}U_0 U_0^{\top}X_0 \| & = \|\Lambda_0 M_0^{\top} (UU^{\top} - U_0 U_0^{\top}) M_0 \Lambda_0^{\top} \|.
    \end{align*}
    Next, we remark that $\|AB\| \geq s_{\text{min}}(A) \|B\|$ for any matrices $A, B$ with $A$ having full column rank, with $s_{\min}(A)$ denoting the smallest positive singular value of $A$. 
    From Assumption \ref{assumption:norm}, $\Lambda_0$ has exactly $k$ positive singular values and thus has full column rank. 
    Thus, we have
    \begin{align*}
        & \|X_0^{\top}UU^{\top}X_0 - X_0^{\top}U_0 U_0^{\top}X_0 \|\\
        & = \|\Lambda_0 M_0^{\top} (UU^{\top} - U_0 U_0^{\top}) M_0 \Lambda_0^{\top} \|\\
        & \geq s_{\min}(\Lambda_0) \|M_0^{\top}(UU^{\top} - U_0 U_0^{\top}) M_0 \Lambda_0^\top \|\\
        & \geq  s_{\min}(\Lambda_0) \|\Lambda_0 M_0^\top (UU^\top - U_0 U_0^\top) M_0\| \quad \text{[since $\|A\| = \|A^\top\|$]}\\
        & \geq s_{\min}^2(\Lambda_0) \|M_0^\top (UU^\top - U_0 U_0^\top) M_0\|.
    \end{align*}
    Since the Schatten $\infty$-norm is identical to the spectral norm, Theorem 2 in \cite{luo2021schatten} implies
$$\|X_0^{\top}UU^{\top}X_0 - X_0^{\top}U_0 U_0^{\top}X_0\| = \|X_0^{\top}UU^{\top}X_0 - X_0^{\top}X_0\| = \|(\mathbb{I}_n - UU^{\top}) X_0\|^2 \leq 4\|E\|^2.$$
Thus with probability at least $1-o(1)$,
$$\|M_0^\top (UU^{\top} - U_0 U_0^{\top}) M_0 \| \, s_{\min}^2(\Lambda_0) \leq {4\|E\|^2}.$$ 
Under Assumption \ref{assumption:norm}, $s_{\min}(\Lambda_0) \equiv s_k(\Lambda_0) \asymp \|\Lambda_0\| \asymp \sqrt{p_n}$. 
Define $\sigma_{\text{sum}}^2 = \sum_{j=1}^{p} \sigma_{0j}^2$ and $\sigma_{\max}^2 = \max_{1 \leq j \leq p} \sigma_{0j}^2 = \mathcal{O}(1)$ by Assumption \ref{assumption:var}. 
Lemma \ref{lemma:spectral-norm-heteroskedastic-entries} implies $\|E\| \lesssim (\sigma_{\text{sum}} + \sqrt{n}\sigma_{\max})$ with probability at least $1-o(1)$. 
Since $\sigma_{\text{sum}} \leq \sqrt{p_n} \sigma_{\max} \lesssim \sqrt{p_n}$, one has $\| E \| \lesssim \sqrt{n} + \sqrt{p_n}$ with probability at least $1 - o(1)$.
Thus, with probability at least $1-o(1)$,
\begin{align*}
    \|M_0^\top (UU^{\top} - U_0 U_0^{\top}) M_0\| & \lesssim \dfrac{(\sqrt{n} + \sqrt{p_n})^2}{p_n}
    \lesssim 1 + \dfrac{n}{p_n},
\end{align*}
which proves the desired result.
    \item [(c)] \, Let $\mathcal{P} = UU^\top$, $\mathcal{P}_0 = U_0 U_0^\top$, and fix $1 \leq u \leq p_n.$ 
    Since $\mathcal{P}$ and $\epsilon^{(u)}$ are \textbf{dependent}, we will use a leave-one-out argument to show this result, uncoupling the influence of $\epsilon^{(u)}$ on a perturbed version of $\mathcal{P}$.
    Define $E^{(u)} \in \mathbb{R}^{n \times p_n}$ to have all columns identical to that of $E$, except the $u$th column, replaced by $\nu^{(u)}$ such that $\nu^{(u)}$ and $\epsilon^{(u)}$ are independent and $\nu^{(u)} \sim N_n(0, \sigma_{0u}^2 \mathbb{I}_n).$  
    Consider the perturbed copy
    $$\wt \Y = X_0 + \wt E$$
    and carry out its SVD analogous to that of $\Y$, given by
    $$\wt \Y = \wt U \wt D \wt V^{\top} + \wt U_{\perp} \wt D_{\perp} \wt V_{\perp}^{\top}.$$
    Let $\wt{\mathcal{P}} := \wt U \wt U^{ \top}.$ We start with
    \begin{align*}
        \|(UU^\top - U_0 U_0^\top) \epsilon^{(u)}\|_2 & = \| (\mathcal{P} - \mathcal{P}_0) \epsilon^{(u)}\|_2 \\
        & = \|(\mathcal{P} - \wt{\mathcal{P}})\epsilon^{(u)} + (\wt{\mathcal{P}} - \mathcal{P}_0)\epsilon^{(u)}\|_2\\
        & \leq \|(\mathcal{P}-\wt{\mathcal{P}})\epsilon^{(u)}\|_2 + \|(\wt{\mathcal{P}} - \mathcal{P}_0)\epsilon^{(u)}\|_2\\
        & = \text{(I) + (II)}, \text{ say.}
    \end{align*}
    First, consider bounding (I), for which we have $\|(\mathcal{P} - \wt{\mathcal{P}}) \epsilon^{(u)}\|_2 \leq \|\mathcal{P} - \wt{\mathcal{P}}\| \|\epsilon^{(u)}\|_2$, using the definition of the operator norm.
    Let us bound each of the terms individually. 
    \textcolor{black}{Following Lemma \ref{lemma:laurent-massart}}, we have $\underset{1 \leq j \leq p_n}{\max}\|\epsilon^{(j)}\|_2 \lesssim \sqrt{n}$ and $\underset{1 \leq j \leq p_n}{\max}\|y^{(j)}\|_2 \lesssim \sqrt{n}$ with probability at least $1-o(1).$
    We can bound
    \begin{align}
    \label{suppeq:dkt1}
        \|\mathcal{P} - \wt{\mathcal{P}}\| \leq 2 \, \|\sin\Theta(U, \wt U)\| \leq 2 \, \|\sin \Theta(U, \wt U)\|_\text{F} & \lesssim \dfrac{\|\Y\Y^\top  - \wt \Y \wt \Y^{\top}\|}{\Delta_{n1}},
    \end{align}
    where $\Delta_{n1}$ is the eigengap of $\Y \Y^{ \top}$.
    In \eqref{suppeq:dkt1}, the first inequality is due to Lemma 1 of \cite{cai2018rate}, while the third inequality is due to the SFDKT with $k = \mathcal{O}(1)$.
    To obtain the rate of $\Delta_{n1}$, we first note that 
    \begin{align*}
        \dfrac{1}{np_n}\|\Y \Y^\top - X_0 X_0^\top\| \lesssim \dfrac{1}{\sqrt{n}} + \dfrac{1}{\sqrt{p_n}}
    \end{align*}
    with probability at least $1-o(1)$, following the argument provided to derive \eqref{suppeq:eigenGapArgument}.
    Thus, we have
    $$\|\Y \Y^\top - X_0 X_0^\top\| = o_{P_0}(np_n).$$
    Let $\lambda_j(A)$ denote the $j$th eigenvalue of $A$ for a generic matrix $A$. 
    Since $\lambda_k(X_0 X_0^\top) \asymp (\sqrt{np_n})^2 = np_n$ and $\lambda_{k+1}(X_0 X_0^\top) = 0$, by Weyl's inequality (Theorem 4.3.1 in \cite{horn2012matrix}), we have
    \begin{align*}
        \lambda_k(\Y \Y^\top) & \asymp np_n\\
        \lambda_{k+1}(\Y \Y^\top) & = o_{P_0}(np_n),
    \end{align*}
    with probability at least $1-o(1)$. 
    Thus, the relevant eigengap is given by 
    \begin{equation}
        \label{suppeq:loo1denom}
        \Delta_{n1} \asymp  np_n,
    \end{equation}
    with probability at least $1-o(1).$
    The numerator of \eqref{suppeq:dkt1} is easily bounded by observing that
    \begin{align}
    \label{suppeq:loo1num}
        \|\Y\Y^\top  - \wt \Y \wt \Y^{\top}\| & = \|y^{(u)}y^{(u)\top} - \wt y^{(u)} \wt y^{(u)\top}\| \notag \\
        & \leq \|y^{(u)}\|_2^2 + \|\wt y^{(u)}\|_2^2\lesssim n
    \end{align}
    with probability at least $1-o(1)$, \textcolor{black}{invoking Lemma \ref{lemma:laurent-massart}}.
    The leave-one-out argument allows us to obtain a desirable bound on this perturbation.
    Combining the above bounds \eqref{suppeq:loo1num} and \eqref{suppeq:loo1denom} into \eqref{suppeq:dkt1}, we have
    \begin{align*}
        \|\mathcal{P} - \wt{\mathcal{P}}\| \lesssim \dfrac{n}{np_n} = \dfrac{1}{p_n}
    \end{align*}
    with probability at least $1-o(1).$ This implies that 
    \begin{align}
    \label{suppeq:loo1final}
    \text{(I)} \lesssim \dfrac{\sqrt{n}}{p_n}
    \end{align}
    with probability at least $1-o(1).$

    To bound (II), we first remark that the leave-one-out construction ensures that $\wt{\mathcal{P}} - \mathcal{P}_0$ is \textbf{now independent} of $\epsilon^{(u)} \sim N(0, \sigma_{0u}^2 \mathbb{I}_n)$.
    Thus, we have $$(\wt{\mathcal{P}} - \mathcal{P}_0) \, \epsilon^{(u)} \,\Big|\, \wt{\mathcal{P}}, \mathcal{P}_0 \;\sim\; N\left(0, \sigma_{0u}^2 (\wt{\mathcal{P}} - \mathcal{P}_0)^2\right),$$
    so that $E \left[\|(\wt{\mathcal{P}} - {\mathcal{P}}_0) \, \epsilon^{(u)}\|_2^2 \middle| \wt{\mathcal{P}}, {\mathcal{P}}_0 \right] = \sigma_{0u}^2 \|\wt{\mathcal{P}} - {\mathcal{P}}_0\|_\text{F}^2,$
    where $\|A\|_{\text{F}}$ is the Frobenius norm of a generic matrix $A$. 
    Using the Gaussian concentration inequality (Proposition 5.34 in \cite{vershynin2010introduction}), we have
    \begin{align*}
        P_0\left[(\text{II}) \geq \sigma_{0u} \|\wt{\mathcal{P}} - \mathcal{P}_0\|_\text{F} + t \middle| \wt{\mathcal{P}}, \mathcal{P}_0 \right] & \leq \exp\left(-\dfrac{ct^2}{\sigma_{0u}^2 \|\wt{\mathcal{P}} - \mathcal{P}_0\|^2}\right)\\
        & \leq \exp\left(-\dfrac{ct^2}{\sigma_{0u}^2 \|\wt{\mathcal{P}} - \mathcal{P}_0\|_\text{F}^2}\right),
    \end{align*}
    for a universal constant $c > 0$, since $\|A\| \leq \|A\|_\text{F}$ for any matrix $A$.
    Thus, with $t = \sigma_{0u} \, \|\wt{\mathcal{P}} - {\mathcal{P}}_0\|_\text{F} \, \sqrt{\dfrac{\log n}{c}}$ and for sufficiently large $n,$ we have
    $$P_0\left[\text{(II)} \geq  \dfrac{2 \sigma_{0u}}{\sqrt{c}} \sqrt{\log n} \, \|\wt{\mathcal{P}} - \mathcal{P}_0\|_\text{F} \middle| \wt{\mathcal{P}}, \mathcal{P}_0\right] \leq \dfrac{1}{n}.$$
    Since the bound on the conditional probability does not depend on $\wt{\mathcal{P}}, \mathcal{P}_0$, we have
    \begin{align}
    \label{suppeq:loo2bound}
        \text{(II)} & \lesssim \sqrt{\log n} \, \|\wt{\mathcal{P}} - \mathcal{P}_0\|_\text{F}
    \end{align}
    unconditionally, with (unconditional) probability at least $1-o(1)$. 
    
    We next develop a bound for $\| \wt{\mathcal{P}} - \mathcal{P}_0 \|_\text{F}$ in \eqref{suppeq:loo2bound} as follows.
    First, we observe that since $\text{rank}(A_1 - A_2) \leq \text{rank}(A_1) + \text{rank}(A_2)$ for any two matrices $A_1, A_2$, we have $\text{rank}(\wt{\mathcal{P}} - {\mathcal{P}}_0) \leq k + k = 2k = \mathcal{O}(1)$ from Assumption \ref{assumption:hyper}.
    We now use the standard bound $\|A\|_\text{F}^2 \leq \text{rank}(A) \cdot \|A\|^2$ for any matrix $A$, to obtain
    \begin{equation}
\label{suppeq:loo2frob}
\begin{split}
\| \wt{\mathcal{P}} - \mathcal{P}_0 \|_\text{F} & \leq \sqrt{2k} \| \wt{\mathcal{P}} - \mathcal{P}_0 \|\\
& \leq  2 \sqrt{2k} \|\sin \Theta(\wt{U}, U_0)\|\\
& \leq 2 \sqrt{2k} \|\sin \Theta(\wt{U}, U_0)\|_\text{F}\\
& \leq \dfrac{4 \sqrt{2} \, k\, \|\wt \Y \wt \Y^\top - X_0 X_0^\top\|}{\Delta_{n2}}.
\end{split}
\end{equation}
    In \eqref{suppeq:loo2frob}, the second inequality above is from Lemma 1 in \cite{cai2018rate}, the third is from $\|A\| \leq \|A\|_\text{F}$ for any matrix $A$, and the fourth inequality is due to the SFDKT. 
    To bound the numerator in \eqref{suppeq:loo2frob}, we follow steps analogous to the proof of part (a) of Proposition \ref{proposition:bod}
    to obtain 
    $$4 \sqrt{2} \, k\, \|\wt \Y \wt \Y^\top - X_0 X_0^\top\| \lesssim \sqrt{np_n}(\sqrt{n} + \sqrt{p_n})$$
    with probability at least $1-o(1)$, as $k = \mathcal{O}(1).$
    The denominator in \eqref{suppeq:loo2frob} is bounded easily by observing that $\Delta_{n2} \asymp (\sqrt{np_n})^2 = np_n$
    following Assumption \ref{assumption:norm} and Lemma \ref{lemma:matrix_normal_norm}. 
    Combining the bounds of the numerator and denominator in \eqref{suppeq:loo2frob}, we have
    \begin{align}
    \label{suppeq:loo2final}
        \text{(II)} \lesssim \sqrt{\log n} \left(\dfrac{1}{\sqrt{n}} + \dfrac{1}{\sqrt{p_n}}\right)
    \end{align}
    with probability at least $1-o(1)$. 
    Combining the bounds for (I) and (II) from \eqref{suppeq:loo1final} and \eqref{suppeq:loo2final}, respectively, we obtain the desired result.
    \end{enumerate}
    }
\end{proof}

\subsection{Relevant Lemmas for Theorem 1}
\label{suppsubsec:lemmathm1}

    \begin{lemma}
\label{lemma:concentration-iid-Gaussian}
        Let $E_0 \in \mathbb{R}^{n_1 \times n_2}$ have i.i.d. standard Gaussian entries. Then for every $t>0$, one has
        $$\mathbb{P}\left(\|E_0\|\leq \sqrt{n_1}+\sqrt{n_2}+t\right) \geq 1 - 2\exp(-t^2/2).$$
        As a corollary, for the true factor matrix $M_0 \in \mathbb{R}^{n \times k}$ with $k = \mathcal{O}(1),$ we have
        $$\mathbb{P}\left(\left\|\dfrac{M_0^\top M_0}{n} - \mathbb{I}_k\right\| \leq 3\sqrt{2} \, \sqrt{\dfrac{\log n}{n}}\right) \geq 1 - \dfrac{2}{n}.$$
    \end{lemma}

\begin{proof}
    Refer to Corollary 5.35 and Lemma 5.36 in \cite{vershynin2010introduction}.
\end{proof}
    

    \begin{lemma}
    \label{lemma:spectral-norm-heteroskedastic-entries}
Suppose $E_0 \in \mathbb{R}^{n_1 \times n_2}$ has independent entries such that $E_{0,ij} = g_{ij} b_{ij}$ for $g_{ij} \overset{\text{iid}}{\sim} N(0,1)$ and $\{b_{ij} : 1 \leq i \leq n_1 , 1 \leq j \leq n_2 \}$ are fixed scalars. Let $\sigma_1 = \underset{i}{\max} \sqrt{\sum_j b_{ij}^2}$, $\sigma_2 = \underset{j}{\max} \sqrt{\sum_i b_{ij}^2},$ and $\sigma_{*} = \underset{i,j}{\max} |b_{ij}|.$ Then for every $\epsilon \in (0, 1/2]$ there exists a $c_{\epsilon}^{'} > 0$ such that for all $t \geq 0$,
$$\mathbb{P}\left\{\|E_0\|\geq (1 + \epsilon)(\sigma_1 + \sigma_2) + t \right\} \leq (n_1 \wedge n_2)\exp\{-t^2/(c_{\epsilon}^{'} \sigma_{*}^2)\}.$$
    \end{lemma}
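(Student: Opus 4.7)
The plan is to combine Gaussian concentration of measure with a sharp bound on $\mathbb{E}\|E_0\|$ for heteroskedastic Gaussian matrices. The concentration inequality supplies a subgaussian tail with variance proxy $\sigma_*^2$, while the expectation bound delivers the $(1+\epsilon)(\sigma_1+\sigma_2)$ shift; the additive logarithmic correction from the expectation bound will get absorbed into the $(n_1\wedge n_2)$ prefactor in the final tail.

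Writing $E_{0,ij}=b_{ij}g_{ij}$, the map $f:g\mapsto\|E_0(g)\|$ from $\mathbb{R}^{n_1n_2}$ to $\mathbb{R}$ is $\sigma_*$-Lipschitz, since
\begin{equation*}
|f(g)-f(g')|\le\|E_0(g)-E_0(g')\|\le\|E_0(g)-E_0(g')\|_F=\Bigl(\sum_{i,j}b_{ij}^2(g_{ij}-g'_{ij})^2\Bigr)^{1/2}\le\sigma_*\|g-g'\|_2.
\end{equation*}
The Borell--Sudakov--Tsirelson inequality then yields $\mathbb{P}\{\|E_0\|\ge\mathbb{E}\|E_0\|+s\}\le\exp(-s^2/(2\sigma_*^2))$ for every $s\ge 0$.

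Next I would invoke a Bandeira--van Handel-style nonasymptotic estimate: for any $\epsilon\in(0,1/2]$ there exists $C_\epsilon>0$ with $\mathbb{E}\|E_0\|\le(1+\epsilon/2)(\sigma_1+\sigma_2)+C_\epsilon\sigma_*\sqrt{\log(n_1\wedge n_2)}$, obtained for instance by applying that estimate to the symmetric dilation $\tilde E_0=\left(\begin{smallmatrix}0&E_0\\ E_0^\top&0\end{smallmatrix}\right)$, whose operator norm equals $\|E_0\|$ and whose row $\ell_2$-norms and entrywise maximum are controlled by $\sigma_1,\sigma_2,\sigma_*$. Setting $s=(1+\epsilon)(\sigma_1+\sigma_2)+t-\mathbb{E}\|E_0\|$ identifies the target event with $\{\|E_0\|\ge\mathbb{E}\|E_0\|+s\}$. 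In the regime $t\ge C_\epsilon\sigma_*\sqrt{\log(n_1\wedge n_2)}$ we have $s\ge t-C_\epsilon\sigma_*\sqrt{\log(n_1\wedge n_2)}\ge 0$, and the elementary inequality $(t-a)^2\ge t^2/(1+\delta)-a^2/\delta$ applied with $\delta=C_\epsilon^2/2$ converts the concentration bound into
\begin{equation*}
\exp\bigl(-s^2/(2\sigma_*^2)\bigr)\le(n_1\wedge n_2)\exp\bigl(-t^2/(c'_\epsilon\sigma_*^2)\bigr),\qquad c'_\epsilon:=2+C_\epsilon^2.
\end{equation*}
For $t<C_\epsilon\sigma_*\sqrt{\log(n_1\wedge n_2)}$ the right-hand side of the claim already exceeds $1$ after enlarging $c'_\epsilon$ if needed, so the inequality is trivial in that regime.

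The main obstacle is the sharp expectation bound with the $(1+\epsilon)$ prefactor on $\sigma_1+\sigma_2$; naive $\epsilon$-net arguments or the elementary Chevet--Slepian comparison only produce absolute multiplicative constants that cannot be pushed arbitrarily close to $1$. One therefore has to rely on a refined random matrix estimate whose additive $\sigma_*\sqrt{\log(n_1\wedge n_2)}$ correction is precisely what the Gaussian concentration step can absorb into the $(n_1\wedge n_2)$ prefactor, making this the only nontrivial ingredient of the proof.
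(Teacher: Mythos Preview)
Your proposal is correct and aligns with the paper's approach: the paper simply cites Corollary~3.11 of Bandeira--van Handel (2016), and what you have written is precisely the Gaussian-concentration-plus-sharp-expectation-bound argument by which that corollary is derived from their main theorem. In effect you have unpacked the reference rather than taken a different route.
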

    \begin{proof}
         Refer to Corollary 3.11 in \cite{bandeira2016sharp}.
    \end{proof}
    \begin{lemma}\label{lm:Eckart–Young}
        For any matrix $Z \in \mathbb{R}^{n_1 \times n_2}$, let its $l$th largest singular value be $s_l(Z)$ for $l=1,\ldots,n_1\wedge n_2$. Then for all $1 \leq l < n_1 \wedge n_2,$
        $$s_{l+1}(Z) = \min_{\text{rank}(M)\leq l}\|Z-M\|.$$
    \end{lemma}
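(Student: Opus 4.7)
The plan is to establish the two inequalities separately: an easy upper bound via explicit construction from the SVD of $Z$, and a lower bound via a dimension-counting argument on subspaces extracted from the SVD.

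For the upper bound, I would begin by writing $Z = \sum_{i=1}^{r} s_i(Z)\, u_i v_i^{\top}$ for the SVD of $Z$, with $r = n_1 \wedge n_2$ and orthonormal systems $\{u_i\}, \{v_i\}$. The natural candidate is the truncated SVD $M_\star = \sum_{i=1}^{l} s_i(Z)\, u_i v_i^{\top}$, which manifestly has rank at most $l$. Then $Z - M_\star = \sum_{i=l+1}^{r} s_i(Z)\, u_i v_i^{\top}$ is itself in SVD form with singular values $s_{l+1}(Z) \geq \cdots \geq s_r(Z)$, giving $\|Z - M_\star\| = s_{l+1}(Z)$, and hence $\min_{\mathrm{rank}(M)\le l}\|Z - M\| \le s_{l+1}(Z)$.

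For the lower bound, I would take an arbitrary $M \in \mathbb{R}^{n_1 \times n_2}$ with $\mathrm{rank}(M) \le l$ and produce a unit vector $x$ on which $(Z-M)x$ is large. By rank-nullity, $\dim(\ker M) \ge n_2 - l$. Let $V_{l+1} = \mathrm{span}\{v_1, \ldots, v_{l+1}\}$, which has dimension $l+1$. Since $\dim(\ker M) + \dim(V_{l+1}) \ge n_2 + 1 > n_2$, the two subspaces intersect nontrivially, so I can choose a unit vector $x \in \ker M \cap V_{l+1}$. Writing $x = \sum_{i=1}^{l+1} \alpha_i v_i$ with $\sum_i \alpha_i^2 = 1$, one has $Zx = \sum_{i=1}^{l+1} \alpha_i s_i(Z)\, u_i$ by orthonormality, and therefore
\[
\|(Z-M)x\|_2^2 = \|Zx\|_2^2 = \sum_{i=1}^{l+1} \alpha_i^2\, s_i(Z)^2 \ge s_{l+1}(Z)^2,
\]
since $s_i(Z) \ge s_{l+1}(Z)$ for all $i \le l+1$. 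Hence $\|Z - M\| \ge s_{l+1}(Z)$, and taking the infimum over $M$ closes the inequality.

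There is no real obstacle here: the result is classical (Eckart--Young--Mirsky for the spectral norm), and both directions are short. The only subtle point is keeping the dimension bookkeeping straight in the lower-bound argument, where one must verify that the intersection $\ker M \cap V_{l+1}$ is nonempty; this is handled cleanly by the rank-nullity inequality as above. No additional facts beyond the existence of the SVD and rank-nullity are needed.
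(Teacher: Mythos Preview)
Your argument is correct and is the standard self-contained proof of the Eckart--Young theorem in the spectral norm: truncated SVD gives the upper bound, and a rank--nullity intersection argument gives the lower bound. The only trivial edge case you do not spell out is $l = n_1 \wedge n_2$, where $s_{l+1}(Z)=0$ and $M=Z$ attains the minimum; this is immediate.

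By contrast, the paper does not prove this lemma at all --- it simply cites the original Eckart--Young reference. So your approach is genuinely different in that it is fully self-contained, whereas the paper treats the result as a known black box. The benefit of your write-up is that a reader need not chase the citation; the paper's benefit is brevity, which is appropriate given this is a classical result.
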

    \begin{proof}
        Refer to \cite{eckart1936approximation}.
    \end{proof}

    {\color{black}
    \begin{lemma}\label{lemma:laurent-massart}
        Suppose Assumptions \ref{assumption:dim}-\ref{assumption:hyper} hold.
        For all $1 \leq j \leq p_n$, we have $\|y^{(j)}\|_2^2 \sim (\|\lambda_{0j}\|_2^2 + \sigma_{0j}^2) \, \chi_n^2$ and $\|\epsilon^{(j)}\|_2^2 \sim \sigma_{0j}^2 \, \chi_n^2$, and $$\max\left(\underset{1 \leq j \leq p_n}{\max} \|y^{(j)}\|_2^2, \underset{1 \leq j \leq p_n}{\max} \|\epsilon^{(j)}\|_2^2\right) \lesssim n$$
        with probability at least $1-o(1).$
        \begin{proof}
            The first two statements are immediate from their definitions.
            For the proof of the $\max$ bound, refer to the tail inequality bound of the $\chi^2$ distribution following Lemma 1 of \cite{laurent2000adaptive}.
        \end{proof}
    \end{lemma}
    }
    
{\color{black}
    \begin{lemma}
\label{lemma:post-mean-conv}

Suppose Assumptions \ref{assumption:dim}--\ref{assumption:hyper} hold. Then, as \(n\to\infty\),
\[
\max_{1\leq u,v\leq p_n}
\left|
\frac{1}{n}y^{(u)\top}UU^\top y^{(v)}
-
\lambda_{0u}^{\top}\lambda_{0v}
\right|
=o_{P_0}(1),
\]
and
\[
\max_{1\leq u\leq p_n}
\left|
\frac{1}{n}\|(I_n-UU^\top)y^{(u)}\|_2^2
-
\sigma_{0u}^2
\right|
=o_{P_0}(1).
\]

\end{lemma}
        \begin{proof}
            
            We first note the identity
        \begin{align*}
            & \dfrac{1}{n} y^{(u) \T} UU^{\top} y^{(v)} - \lambda_{0u}^{\top} \lambda_{0v}\\ 
            = & \, \lambda_{0u}^{\top} \left(\dfrac{M_0^{\top} M_0}{n} - \mathbb{I}_k\right) \lambda_{0v} + \dfrac{1}{n} \left[\lambda_{0u}^{\top} M_0^{\T} \epsilon^{(v)} 
            + \lambda_{0v}^{\top} M_0^{\T} \epsilon^{(u)}\right] \, \\
            & +\dfrac{1}{n} \epsilon^{(u) \top} U_0 U_0^{\top} \epsilon^{(v)} + \dfrac{1}{n} y^{(u) \top} (UU^{\top} - U_0 U_0^{\top}) y^{(v)}.
        \end{align*}
        We consider bounding the maximum of the absolute value of each term on the right hand side to bound the maximum of the left hand side over $u,v$.
        The first term is bounded as
        \begin{align*}
            & \max_{u,v} \left|\lambda_{0u}^{\top} \left(\dfrac{M_0^{\top} M_0}{n} - \mathbb{I}_k\right) \lambda_{0v}\right| \\
            & \leq \max_{u,v} \|\lambda_{0u}\|_2 \|\lambda_{0v}\|_2 \left\| \dfrac{M_0^{\top} M_0}{n} - \mathbb{I}_k \right\| \\
            & \lesssim \sqrt{\dfrac{\log n}{n}}, & \text{[Assumption \ref{assumption:norm} and Lemma \ref{lemma:concentration-iid-Gaussian}]}
        \end{align*}
        with probability at least $1-o(1)$.

        \noindent The second term is bounded as follows. 
        First observe that $$\dfrac{1}{n} \, \lambda_{0u}^{\top} M_0^{\top} \epsilon^{(v)} \mid M_0 \sim N\left(0, \dfrac{1}{n^2}\sigma_{0v}^2 \lambda_{0u}^{\top} M_0^{\top} M_0 \lambda_{0u}\right) \overset{d}{=} \dfrac{\sigma_{0v}}{\sqrt{n}} \sqrt{\lambda_{0u}^{\top} \dfrac{M_0^{\top} M_0}{n} \lambda_{0u}} \, \mathcal{Z}_{uv}$$ with $\mathcal{Z}_{uv} \sim N(0,1)$, for any $1 \leq u,v \leq p_n.$
        The Gaussian tail concentration bound combined with a straightforward union bound yields $\underset{u,v}{\max} |\mathcal{Z}_{uv}| \lesssim \sqrt{\log p_n}$ with probability at least $1-o(1)$.
        Since $\|M_0^\top M_0 / n\| \leq 2$ with probability at least $1-o(1)$ from Lemma \ref{lemma:concentration-iid-Gaussian} and $\underset{u}{\max} \|\lambda_{0u}\|_2 = \mathcal{O}(1)$ from Assumption \ref{assumption:norm}, we have
        $$\max_{u,v} \left|\dfrac{1}{n} \, \lambda_{0u}^{\top} M_0^{\top} \epsilon^{(v)}\right| \lesssim \sqrt{\dfrac{\log p_n}{n}} = o(1) \quad \text{[Assumption \ref{assumption:dim}]}$$
        with probability at least $1-o(1).$

        \noindent Now consider the third term. 
        We first observe that for any $1 \leq u \leq p_n$, we have $U_0^\top \epsilon^{(u)} \mid U_0 \sim N(0, \sigma_{0u}^2 U_0^\top U_0) \equiv N(0, \sigma_{0u}^2 \mathbb{I}_k)$, so that $U_0^\top \epsilon^{(u)} \sim  N(0, \sigma_{0u}^2 \mathbb{I}_k)$ unconditionally. 
        Since $k = \mathcal{O}(1)$ from Assumption \ref{assumption:hyper}, $\underset{u}{\max}\|U_0^\top \epsilon^{(u)}\|_2 \lesssim \sqrt{\log p_n}$ with probability at least $1-o(1)$, following another use of the Gaussian concentration inequality and the union bound.
        Thus,
        \begin{align*}
            & \max_{u,v}\left|\dfrac{1}{n}\epsilon^{(u)\top} U_0 U_0^\top \epsilon^{(v)}\right| \\
            & \leq \max_{u,v} \left[\dfrac{1}{n} \|U_0^\top \epsilon^{(u)}\|_2 \|U_0^\top \epsilon^{(v)}\|_2 \right] \\
            & \lesssim \dfrac{\log p_n}{n},
        \end{align*}
        with probability at least $1-o(1).$

        \noindent To bound the final term, we proceed as follows:
        \begin{align*}
            & \max_{u,v} \left|  \dfrac{1}{n} y^{(u) \top} (UU^{\top} - U_0 U_0^{\top}) y^{(v)}\right| \\
            & \leq \dfrac{1}{n} \left(\max_u \|y^{(u)}\|_2^2 \right) \|UU^\top - U_0 U_0^\top\| \\
            & \lesssim \|UU^\top - U_0 U_0^\top \| & \text{[Lemma \ref{lemma:laurent-massart}]}\\
            & \lesssim \dfrac{1}{\sqrt{n}} + \dfrac{1}{\sqrt{p_n}} & \text{[Part (a) of Proposition \ref{proposition:bod}]}
        \end{align*}
        with probability at least $1-o(1).$
        Combining the bounds above, we obtain
        $$\max_{1\leq u,v\leq p_n}
\left|
\frac{1}{n}y^{(u)\top}UU^\top y^{(v)}
-
\lambda_{0u}^{\top}\lambda_{0v}
\right| \lesssim \sqrt{\dfrac{\log(n \vee p_n)}{n}} + \dfrac{1}{\sqrt{p_n}}$$
with probability at least $1-o(1),$ proving our claim.
        

        To derive the second result, we first observe that for any fixed $1 \leq u \leq p_n$, we have $ \|y^{(u)}\|_2^2 \sim (\|\lambda_{0u}\|_2^2 + \sigma_{0u}^2) \, \chi_n^2$. 
        Thus, an application of the chi-square concentration inequality (Theorem 5 in \cite{zhang2020non}) yields
        $$\max_{u}\left\|\dfrac{1}{n}\|y^{(u)}\|_2^2 - (\|\lambda_{0u}\|_2^2 + \sigma_{0u}^2)\right| \lesssim \sqrt{\dfrac{\log p_n}{n}}$$
        with probability at least $1-o(1)$, since $\underset{1 \leq u \leq p_n}{\max} (\|\lambda_{0u}\|_2^2 + \sigma_{0u}^2) = \mathcal{O}(1)$ by Assumptions \ref{assumption:norm} and \ref{assumption:var}.
        Thus, we can write
        \begin{align*}
            & \max_{u}
\left|
\frac{1}{n}\|(I_n-UU^\top)y^{(u)}\|_2^2
-
\sigma_{0u}^2
\right| \\
& \leq \max_{u}\left\|\dfrac{1}{n}\|y^{(u)}\|_2^2 - (\|\lambda_{0u}\|_2^2 + \sigma_{0u}^2)\right| + \max_{u}
\left|
\frac{1}{n}y^{(u)\top}UU^\top y^{(u)}
-
\|\lambda_{0u}\|_2^2
\right| \\
& \leq \max_{u}\left\|\dfrac{1}{n}\|y^{(u)}\|_2^2 - (\|\lambda_{0u}\|_2^2 + \sigma_{0u}^2)\right| + \max_{u,v}
\left|
\frac{1}{n}y^{(u)\top}UU^\top y^{(v)}
-
\lambda_{0u}^\top \lambda_{0v}
\right| \\
& \lesssim \sqrt{\dfrac{\log(n \vee p_n)}{n}} + \dfrac{1}{\sqrt{p_n}}
\end{align*}
with probability at least $1-o(1),$ proving the claim.

        \end{proof}
        }
    
    \begin{lemma}
        \label{lemma:Bmat}
        Let $B = (b_{uv})_{1 \leq u,v \leq p}$ denote the matrix of coefficients as in Section \ref{subsec:cov_correction}. Then, $\|B\|_{\infty} = \mathcal{O}_{P_0}(1)$ under Assumptions \ref{assumption:dim}-\ref{assumption:hyper}.
    \end{lemma}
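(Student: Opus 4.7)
The plan is to reduce $\|B\|_\infty = \max_{1\le u,v\le p}|b_{uv}|$ to uniform control of two quantities: an upper bound on $\max_j \|\mu_j\|_2$ and a lower bound on $\min_j \mathcal{V}_j$. First I would apply Cauchy--Schwarz to bound the numerator of $b_{uv}^2 - 1$ by $2\|\mu_u\|_2^2\|\mu_v\|_2^2$, and the AM--GM inequality to bound the denominator from below by $2\mathcal{V}_u\mathcal{V}_v\|\mu_u\|_2\|\mu_v\|_2$. This yields
\[
b_{uv}^2 \le 1 + \frac{\|\mu_u\|_2 \|\mu_v\|_2}{\mathcal{V}_u \mathcal{V}_v}
\quad\text{for } u\ne v, \qquad
b_{uu}^2 \le 1 + \frac{\|\mu_u\|_2^2}{2\mathcal{V}_u^2},
\]
so it suffices to show that $\max_j \|\mu_j\|_2 = O_{P_0}(1)$ and $\min_j \mathcal{V}_j^2 \ge c > 0$ with probability going to one.

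For the upper bound, I would use the identity $\mu_j = \sqrt{n}\,U^\top y^{(j)}/(n+\widehat{\tau}^{-2})$ to get the trivial bound $\|\mu_j\|_2 \le \|y^{(j)}\|_2/\sqrt{n}$. Because $\|y^{(j)}\|_2^2/(\sigma_{0j}^2 + \|\lambda_{0j}\|_2^2) \sim \chi_n^2$ and $\sigma_{0j}^2 + \|\lambda_{0j}\|_2^2$ is uniformly bounded above by Assumptions \ref{assumption:norm} and \ref{assumption:var}, a standard Laurent--Massart tail bound combined with a union bound over $j$ (which is affordable since $\log p_n / n = o(1)$ by Assumption \ref{assumption:dim}) gives $\max_j \|y^{(j)}\|_2^2/n = O_{P_0}(1)$ and hence $\max_j \|\mu_j\|_2 = O_{P_0}(1)$.

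The lower bound on $\min_j \mathcal{V}_j^2$ is the main obstacle, since $\mathbb{I}_n - UU^\top$ depends on all the columns of $\mathbf{Y}$ simultaneously. The idea is to replace it by $\mathbb{I}_n - U_0U_0^\top$ using Proposition \ref{proposition:bod} and then exploit an exact distributional identity. Specifically, since the column space of $M_0$ lies in the range of $U_0$, one has $(\mathbb{I}_n - U_0U_0^\top)M_0 = 0$, whence
\[
(\mathbb{I}_n - U_0U_0^\top)\,y^{(j)} = (\mathbb{I}_n - U_0U_0^\top)\,\epsilon^{(j)},
\]
and conditionally on $M_0$ this has squared norm $\sigma_{0j}^2 \chi_{n-k}^2$. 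Using the left tail bound on $\chi_{n-k}^2$ together with a union bound over $j$ under $\log p_n / n = o(1)$, together with the uniform lower bound $\sigma_{0j}^2 \ge c_2$ from Assumption \ref{assumption:var}, gives $\min_j \|(\mathbb{I}_n - U_0U_0^\top)\epsilon^{(j)}\|_2^2/n \ge c_2/2$ with high probability.

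Finally, I would transfer this lower bound to $\mathcal{V}_j^2$ by writing
\[
(\mathbb{I}_n - UU^\top)y^{(j)} = (\mathbb{I}_n - U_0U_0^\top)\epsilon^{(j)} + (U_0U_0^\top - UU^\top)y^{(j)},
\]
and using $\|a+b\|_2 \ge \|a\|_2 - \|b\|_2$. Proposition \ref{proposition:bod} gives $\|U_0U_0^\top - UU^\top\| = O_{P_0}(1/n + 1/p_n)$, and combined with $\max_j \|y^{(j)}\|_2 = O_{P_0}(\sqrt{n})$ from the first step, this makes the second term $o_{P_0}(\sqrt{n})$ uniformly in $j$. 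Hence $\min_j \mathcal{V}_j \ge \sqrt{c_2/2} - o_{P_0}(1)$, which is bounded below by a positive constant with probability tending to one. Combining the two uniform bounds with the initial reduction yields $\|B\|_\infty = O_{P_0}(1)$.
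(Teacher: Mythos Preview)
Your argument is correct and in fact more careful than the paper's own proof. The paper proceeds by invoking Lemma~\ref{lemma:post-mean-conv} to assert that $\|\mu_u\|_2^2$ is close to $\|\lambda_{0u}\|_2^2$ and $\mathcal{V}_u^2$ is close to $\sigma_{0u}^2$, and then bounds the denominator $\mathcal{V}_u^2\|\mu_v\|_2^2+\mathcal{V}_v^2\|\mu_u\|_2^2$ from below by a constant using \emph{both} $\mathcal{V}_j^2\ge c_2/2$ and a lower bound $\|\mu_j\|_2^2\ge c_1/2$. Your AM--GM step is a genuine simplification: by reducing $b_{uv}^2-1$ to $\|\mu_u\|_2\|\mu_v\|_2/(\mathcal{V}_u\mathcal{V}_v)$ you only need an upper bound on $\|\mu_j\|_2$ and a lower bound on $\mathcal{V}_j$, sidestepping the lower bound on $\|\mu_j\|_2$ entirely. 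The core device for controlling $\mathcal{V}_j$ from below---replacing $UU^\top$ by $U_0U_0^\top$ via Proposition~\ref{proposition:bod} and then using that $(\mathbb{I}_n-U_0U_0^\top)y^{(j)}=(\mathbb{I}_n-U_0U_0^\top)\epsilon^{(j)}$ has an exact $\sigma_{0j}^2\chi_{n-k}^2$ law---is the same idea that underlies the paper's Lemma~\ref{lemma:post-mean-conv} and Lemma~\ref{lemma:deltasq}. Where you go further is in being explicit about uniformity in $j$: the paper's Lemma~\ref{lemma:post-mean-conv} is stated pointwise, and its proof of Lemma~\ref{lemma:Bmat} does not spell out the union bound over $1\le j\le p$ needed to control $\|B\|_\infty$, whereas your use of Laurent--Massart tail bounds together with $\log p_n/n=o(1)$ makes this step rigorous.
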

    \begin{proof}
        First, consider $u,v$ such that $u \neq v$.
        Following from Assumptions \ref{assumption:norm} and \ref{assumption:var}, we have $\underset{u}{\min} \,\|\lambda_{0u}\|_2 \geq c_1 > 0$ and $\underset{u}{\min} \,\sigma_{0u}^2 \geq c_2 > 0$ for some constants $c_1, c_2$, for all $1 \leq u \leq p$. 
        Following from Lemma \ref{lemma:post-mean-conv}, we have for sufficiently large $n$ with probability at least $1-o(1)$: (1) $\underset{u}{\min} \, \|\mu_u\|_2^2 > c_1^2/2$ and $\underset{u}{\min} \, \mathcal{V}_u^2 > c_2 / 2,$ (2) $\underset{u,v}{\max} \, \left|\lambda_{0u}^{\top} \lambda_{0v}\right| \leq \underset{u,v}{\max} \, \|\lambda_{0u}\|_2 \|\lambda_{0v}\|_2 \leq k \|\Lambda_0\|_{\infty}^2$, and (3) $\underset{u}{\max} \|\mu_u\|_2^2 \leq 2k \|\Lambda_0\|_{\infty}^2$ along with $\underset{u,v}{\max} \, |\mu_u^\top \mu_v| \leq 2k \|\Lambda_0\|_{\infty}^2$. 
        This implies that with probability at least $1-o(1)$,
    $$\max_{u \neq v} b_{uv}^2 \leq 1 + \dfrac{16 \,k^2\|\Lambda_0\|_{\infty}^4 }{c_1^2 c_2} < \infty.$$
    The result is derived analogously for $\underset{u}{\max} \, b_{uu}^2$, proving the claim.
    \end{proof}

\begin{lemma}
    \label{lemma:deltasq}
    Suppose Assumptions \ref{assumption:dim}-\ref{assumption:hyper} hold. For each $j=1,\ldots,p_n$, we have
$$\delta_j^2 = \dfrac{Z_j}{n} + F_j,$$
where $Z_j / \sigma_{0j}^2 \sim \chi_{n-k}^2$ and
$\underset{1 \leq j \leq p_n}{\max}|F_j| \lesssim \dfrac{1}{\sqrt n} + \dfrac{1}{\sqrt{p_n}}$
with probability at least $1 - o(1)$.
\end{lemma}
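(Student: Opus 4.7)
The plan is to write $\delta_j^2$ explicitly using \eqref{eq:hyper}, identify the leading term as a projection-residual quadratic form, and collect everything else into $F_j$. Recall that under the canonical choice $\widehat{\mathbf{M}} = \sqrt{n} U$, Proposition \ref{prop:mhat} gives $\mathbf{K}^{-1} = (n + \tau^{-2}) \mathbb{I}_k$, so that
\begin{equation*}
\gamma_n \delta_j^2 = \gamma_0 \delta_0^2 + \|y^{(j)}\|_2^2 - \frac{n}{n + \tau^{-2}} \|U^{\top} y^{(j)}\|_2^2.
\end{equation*}
I would then define $Z_j = y^{(j)\top}(\mathbb{I}_n - U_0 U_0^{\top}) y^{(j)}$. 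Since $M_0 \Lambda_0^{\top} = U_0 D_0 V_0^{\top}$ and $\Lambda_0$ has full column rank $k$ by Assumption \ref{assumption:norm}, the column space of $U_0$ equals the column space of $M_0$, hence $(\mathbb{I}_n - U_0 U_0^{\top}) M_0 = 0$ and $(\mathbb{I}_n - U_0 U_0^{\top}) y^{(j)} = (\mathbb{I}_n - U_0 U_0^{\top}) \epsilon^{(j)}$. Conditioning on $M_0$, the operator $\mathbb{I}_n - U_0 U_0^{\top}$ is a deterministic orthogonal projection of rank $n - k$, while $\epsilon^{(j)} \sim N_n(0, \sigma_{0j}^2 \mathbb{I}_n)$ is independent of $M_0$; therefore $Z_j / \sigma_{0j}^2 \sim \chi_{n-k}^2$ conditionally, and thus marginally.

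Next, I would set $F_j = \delta_j^2 - Z_j / n$ and split the difference into three pieces: (i) the $\gamma_0$-perturbation terms $\gamma_0 \delta_0^2 / \gamma_n - \gamma_0 \|y^{(j)}\|_2^2 / (n \gamma_n)$; (ii) the subspace-mismatch term $y^{(j)\top}(U_0 U_0^{\top} - U U^{\top}) y^{(j)} / n$; and (iii) a scaling correction $\|U^{\top} y^{(j)}\|_2^2 \cdot \bigl[1/n - n / (\gamma_n (n + \tau^{-2}))\bigr]$, which collapses to $[\gamma_0 n + \gamma_n \tau^{-2}] / [n \gamma_n (n + \tau^{-2})] = O(1/n^2)$. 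For each piece I would bound uniformly in $j$ using $\max_j \|y^{(j)}\|_2^2 \lesssim n$, which follows from $\|y^{(j)}\|_2^2 \sim (\|\lambda_{0j}\|_2^2 + \sigma_{0j}^2) \chi_n^2$, the boundedness of these scalars under Assumptions \ref{assumption:norm} and \ref{assumption:var}, a standard sub-exponential tail bound on $\chi_n^2$, and a union bound over $p_n$ values exploiting $\log p_n = o(n)$ from Assumption \ref{assumption:dim}. Pieces (i) and (iii) then contribute $O(1/n)$ uniformly. For piece (ii), the operator-norm bound gives $|y^{(j)\top}(U_0 U_0^{\top} - U U^{\top}) y^{(j)}| / n \leq \|y^{(j)}\|_2^2 \, \|U U^{\top} - U_0 U_0^{\top}\| / n$, and Proposition \ref{proposition:bod} yields $\|U U^{\top} - U_0 U_0^{\top}\| \lesssim 1/n + 1/p_n$ with probability at least $1 - o(1)$, producing the claimed $1/n + 1/p_n$ rate uniformly.

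The main obstacle is the subspace-mismatch term in (ii): it is the only piece that carries the $1/p_n$ contribution and requires the blessing-of-dimensionality bound of Proposition \ref{proposition:bod}, and it is also the piece for which uniformity across $j$ could fail if $\|y^{(j)}\|_2^2$ were not controlled uniformly. Everything else is elementary algebra once the $\|y^{(j)}\|_2^2 \lesssim n$ uniform bound is in hand.
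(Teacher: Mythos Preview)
Your proposal is correct and follows essentially the same approach as the paper: you define $Z_j = y^{(j)\top}(\mathbb{I}_n - U_0 U_0^{\top})y^{(j)}$, obtain the $\chi^2_{n-k}$ law by the projection/independence argument, and bound the remainder via the uniform bound $\max_j \|y^{(j)}\|_2^2 \lesssim n$ together with Proposition~\ref{proposition:bod}. The only cosmetic difference is that the paper packages your pieces (ii) and (iii) into a single operator-norm bound on $\|U_0 U_0^{\top} - nUU^{\top}/(n+\tau^{-2})\|$ rather than separating the subspace-mismatch and scaling corrections, but the ingredients and rates are identical.
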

\begin{proof}
Suppose the SVD of $X_0 = M_0 \Lambda_0^{\top} = U_0 D_0 V_0^{\top}$. Let $P_{U_0} = U_0 (U_0^{\top} U_0)^{-1} U_0^{\top} = U_0 U_0^{\top}$ denote the projection matrix onto the column space of $U_0$. We can express $\delta_j^2$ as
$\delta_j^2 = \dfrac{Z_j}{n} + F_j,$
where we have
\begin{align*}
    Z_j & = y^{(j) \top} \left(\mathbb{I}_n - P_{U_0}\right) y^{(j)},\\
    F_j & = \dfrac{\gamma_0 \delta_0^2}{\gamma_n} + \dfrac{1}{n} y^{(j) \top} \left(P_{U_0} - \dfrac{\widehat{\mathbf{M}}\widehat{\mathbf{M}}^{\top}}{n + \tau^{-2}}\right) y^{(j)} - \dfrac{\gamma_0}{n\gamma_n} y^{(j) \top} \left(\mathbb{I}_n - \dfrac{\widehat{\M}\widehat{\M}^{\top}}{n + \tau^{-2}}\right) y^{(j)}.
\end{align*}
Letting $\Y = M_0 \Lambda_0^{\top} + E$ where $E = [\epsilon^{(1)}\ldots \epsilon^{(p)}]$, we have $(\mathbb{I}_n - P_{U_0}) \Y = (\mathbb{I}_n - P_{U_0})E$, implying $(\mathbb{I}_n - P_{U_0}) y^{(j)} = (\mathbb{I}_n - P_{U_0}) \epsilon^{(j)} \sim N_n(0, \sigma_{0j}^2(\mathbb{I}_n - P_{U_0}))$. Since $\mathbb{I}_n - P_{U_0}$ is idempotent (i.e., $(\mathbb{I}_n - P_{U_0})^2 = \mathbb{I}_n - P_{U_0}$), we have $$\dfrac{Z_j}{\sigma_{0j}^2} = \dfrac{\|\left(\mathbb{I}_n - P_{U_0}\right) \, y^{(j)}\|_2^2}{\sigma_{0j}^2} \sim \chi_{\mbox{tr}(\mathbb{I}_n - P_{U_0})}^2 \equiv \chi_{n-k}^2.$$
We now obtain the stated probabilistic upper bound on $|F_j|.$ We observe
$$\max_{1 \leq j \leq p} |F_j| \leq \dfrac{\gamma_0 \delta_0^2}{\gamma_n} + \dfrac{1}{n} \left\|P_{U_0} - \dfrac{nUU^{\top}}{n + \tau^{-2}}\right\| \max_{1 \leq j \leq p} ||y^{(j)}||_2^2 + \dfrac{\gamma_0}{n \gamma_n} \left\|\mathbb{I}_n - \dfrac{nUU^{\top}}{n + \tau^{-2}}\right\| \max_{1 \leq j \leq p} ||y^{(j)}||_2^2,$$
as $\widehat{\M}\widehat{\M}^{\top} = nUU^{\top}$. 
Next, we have $\max_{1 \leq j \leq p}||y^{(j)}||_2^2 \lesssim n$ with probability at least $1 - o(1)$, \textcolor{black}{following Lemma \ref{lemma:laurent-massart}}. 
We start with
\begin{align*}
    G_{1} & :=  \left\|P_{U_0} - \dfrac{nUU^{\top}}{n + \tau^{-2}}\right\|\\
    & \leq  \left\|U_0 U_0^{\top} - UU^{\top}\right\| + \left\|UU^{\top} - \dfrac{nUU^{\top}}{n + \tau^{-2}}\right\|\\
    & \lesssim \dfrac{1}{\sqrt n} + \dfrac{1}{\sqrt{p_n}} + \dfrac{1}{n} \lesssim \dfrac{1}{\sqrt n} + \dfrac{1}{\sqrt{p_n}},
\end{align*}
with probability at least $1 - o(1)$. The second inequality is obtained using part (a) of Proposition \ref{proposition:bod}. 
Next, we work with
\begin{align*}
    G_{2} := & \, \left\|\mathbb{I}_n - \dfrac{nUU^{\top}}{n + \tau^{-2}}\right\|\\
    \leq & \, \|\mathbb{I}_n - UU^{\top}\| + \dfrac{\tau^{-2}}{n} \|UU^{\top}\|\\
    \lesssim & \, 1.
\end{align*}
Combining the bounds for $G_1$ and $G_2$, 
we have with probability at least $1-o(1),$
$\underset{1 \leq j \leq p}{\max}|F_j| \lesssim \dfrac{1}{\sqrt n} + \dfrac{1}{\sqrt{p_n}}.$ This completes the proof.
\end{proof}

\begin{lemma}
    \label{lemma:gamma-conc}
    Let $V_n \sim G(\gamma_n / 2, 1)$ such that $\gamma_n \asymp n$ and let $U_n = V_n - (\gamma_n / 2)$. For any $a_n$ satisfying $a_n \to 0$ and $\sqrt{n} a_n \to \infty$, there exists $c > 0$ such that 
$$P(|U_n| \geq \gamma_n a_n) \lesssim \exp(-c\,n a_n^2).$$
\end{lemma}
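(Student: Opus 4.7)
The plan is to reduce the claim to a standard Bernstein-type concentration inequality for the Gamma distribution and then verify that the specified deviation $\gamma_n a_n$ falls in the sub-Gaussian regime of the tail bound. Since $V_n \sim G(\gamma_n/2, 1)$ has mean $\gamma_n/2$ and variance $\gamma_n/2 \asymp n$, the random variable $U_n$ is a centered Gamma with variance of order $n$, and the bound $\exp(-n a_n^2)$ corresponds exactly to the Gaussian tail one would expect if $U_n$ were replaced by a Gaussian of matching variance.

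First I would invoke the Chernoff bound with the Gamma moment generating function: for $\lambda \in (0,1)$,
\begin{equation*}
P(U_n \geq \gamma_n a_n) \leq \exp\bigl\{-\lambda \gamma_n a_n\bigr\} \cdot \mathbb{E}\,e^{\lambda U_n} = \exp\Bigl\{-\tfrac{\gamma_n}{2}\bigl[\,2\lambda a_n + \lambda + \log(1-\lambda)\,\bigr]\Bigr\}.
\end{equation*}
Optimizing over $\lambda$ yields $\lambda^* = 2a_n/(1+2a_n)$, which is admissible since $a_n \to 0$. Substituting and Taylor-expanding $\log(1-\lambda^*)$ to second order in $a_n$ gives $2\lambda^* a_n + \lambda^* + \log(1-\lambda^*) = 2 a_n^2 + o(a_n^2)$, so that for all sufficiently large $n$,
\begin{equation*}
P(U_n \geq \gamma_n a_n) \leq \exp(-c\,\gamma_n a_n^2) \leq \exp(-c' n a_n^2)
\end{equation*}
for some constants $c, c' > 0$, using $\gamma_n \asymp n$. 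For the lower tail $P(U_n \leq -\gamma_n a_n)$, I would carry out the symmetric Chernoff computation with $\lambda < 0$; the optimization yields $\lambda^* = -2a_n/(1-2a_n)$ and the same second-order expansion produces an identical bound $\exp(-c' n a_n^2)$. Combining the two tails gives the claimed inequality.

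The assumption $\sqrt{n} a_n \to \infty$ plays no role in the derivation itself but ensures that the exponent $-n a_n^2 \to -\infty$, so that the stated bound is informative. The assumption $a_n \to 0$ is used to justify the second-order Taylor expansion and to keep the optimizer $\lambda^*$ safely inside the domain $(-\infty,1)$ where the MGF is finite. I do not anticipate any serious obstacle; the main care is in tracking that the $o(a_n^2)$ remainder in the Taylor expansion is uniform in $n$ for all sufficiently large $n$, which is immediate since $\lambda^* \to 0$. As a shortcut, one could instead appeal directly to the classical Gamma concentration inequality $P(|V_n - \gamma_n/2| \geq t) \leq 2 \exp\{-t^2/(\gamma_n + 2t)\}$ (e.g., Lemma 1 of \cite{laurent2000adaptive}), plug in $t = \gamma_n a_n$, and use $a_n \to 0$ to absorb the $2t$ term, producing the bound in one step.
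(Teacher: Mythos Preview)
Your argument is correct. The Chernoff computation with the Gamma moment generating function, the optimizer $\lambda^* = 2a_n/(1+2a_n)$, and the second-order expansion $2a_n - \log(1+2a_n) = 2a_n^2 + O(a_n^3)$ are all accurate, and the lower-tail computation is indeed symmetric. Your observation about the roles of $a_n \to 0$ and $\sqrt{n}\,a_n \to \infty$ is also right.

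The paper, by contrast, does not carry out any calculation: its entire proof is a one-line citation of the Gamma tail bound in Theorem~5 of \cite{zhang2020non}. So your approach is strictly more self-contained, deriving the inequality from first principles rather than invoking a black-box concentration result. Your suggested shortcut---plugging $t = \gamma_n a_n$ into a Laurent--Massart-type bound and absorbing the linear term using $a_n \to 0$---is essentially the same strategy as the paper's, just with a different reference for the underlying Gamma/chi-squared tail inequality. Either route is acceptable; the Chernoff derivation has the advantage of not depending on locating the precise statement in an external source.
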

\begin{proof}
    Immediate from Theorem 5 in \cite{zhang2020non}.
\end{proof}

\begin{lemma}
\label{lemma:matrix_normal_norm}
    Let $M_0 \in \mathbb{R}^{n \times k}$ be a matrix of iid $N(0,1)$ entries with $n > k$ and let $s_{min}(M_0)$ and $s_{max}(M_0)$ be the smallest and largest singular values of $M_0$, respectively. Then, 
\begin{align*}
    \sqrt{n} - \sqrt{k} \lesssim s_{min}(M_0) \leq s_{max}(M_0) \equiv \|M_0\| & \lesssim \sqrt{n} + \sqrt{k}
\end{align*}
with probability at least $1-(1/n).$ 
In particular, if $k = o(n)$, we have $\|M_0\| \asymp \sqrt{n}$ with probability at least $1-(1/n).$ 
\end{lemma}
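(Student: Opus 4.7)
The plan is to invoke a standard Gaussian concentration inequality for the extreme singular values of a matrix with iid standard normal entries. The cited Corollary 5.35 of \cite{vershynin2010introduction}, which also underlies Lemma \ref{lemma:concentration-iid-Gaussian} in the paper, in fact states a two-sided bound: for every $t>0$,
\begin{equation*}
\sqrt{n}-\sqrt{k}-t \;\leq\; s_{\min}(M_0) \;\leq\; s_{\max}(M_0) \;\leq\; \sqrt{n}+\sqrt{k}+t
\end{equation*}
with probability at least $1-2\exp(-t^2/2)$. (A slightly sharper form replacing $\sqrt{k}$ by $\sqrt{k-1}$, as in the lemma's statement, is available from the classical Davidson--Szarek inequality, but the difference $\sqrt{k}-\sqrt{k-1}=o(1)$ is absorbed into the $\lesssim$ notation.)

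To turn this into a statement that holds with probability $1-o(1)$, I would set $t=t_n$ to be any slowly growing sequence with $t_n\to\infty$ but $t_n=o(\sqrt{n}+\sqrt{k})$, for instance $t_n=\log n$. Then $2\exp(-t_n^2/2)=o(1)$, while $t_n$ can be absorbed into the implicit constant in $\lesssim$ since $t_n$ is dominated by $\sqrt{n}+\sqrt{k}$. This immediately yields both $s_{\max}(M_0)=\|M_0\|\lesssim \sqrt{n}+\sqrt{k}$ and $s_{\min}(M_0)\gtrsim \sqrt{n}-\sqrt{k-1}$ with the claimed probability.

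For the second conclusion, when $k=o(n)$ the two-sided bound becomes $s_{\min}(M_0)\geq \sqrt{n}\bigl(1-\sqrt{k/n}-t_n/\sqrt{n}\bigr)=\sqrt{n}\,(1-o(1))$ and $s_{\max}(M_0)\leq \sqrt{n}\,(1+o(1))$, so $\|M_0\|=s_{\max}(M_0)\asymp \sqrt{n}$ with probability at least $1-o(1)$. The whole argument is essentially a direct invocation of the Gaussian concentration inequality, with the only ``obstacle'' being the routine choice of the tail parameter $t_n$; there are no nontrivial steps beyond citing the relevant result and verifying that the prescribed regime makes the exceptional probability vanish.
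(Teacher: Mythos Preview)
Your proposal is correct and follows essentially the same approach as the paper: the paper's own proof simply cites Sections 1.1 and 1.3 of \cite{vershynin2011spectral}, which record precisely the Davidson--Szarek/Gaussian concentration bounds you invoke. Your explicit choice of a slowly growing $t_n$ merely fills in the routine step of turning the tail bound into a $1-o(1)$ statement.
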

\begin{proof}
    Refer to Corollary 5.35 in \cite{vershynin2010introduction} with $t = \sqrt{2 \log (2n)}.$
\end{proof}

\section{Proofs of Additional Related  Lemmas for Theorems \ref{theorem:bvm} and \ref{theorem:asymp-law}}
\label{proof:lemmaTheorem2and3}
\begin{lemma}
    \label{lemma:slutsky-extension}
    Suppose $X_n = Y_n + Z_n,$ where $Y_n \overset{}{\implies} N(0, \sigma_1^2)$ and $Z_n \sim N(0, \sigma_2^2),$ with $Y_n$ independent of $Z_n.$ Then,  $X_n \overset{}{\implies} N(0, \sigma_1^2 + \sigma_2^2).$ 
\end{lemma}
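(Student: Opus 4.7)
The plan is to use characteristic functions together with the independence hypothesis, since independence makes the joint distribution of $(Y_n, Z_n)$ completely determined by the two marginals and lets the characteristic function of the sum factor cleanly.

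First I would fix $t \in \mathbb{R}$ and write $\phi_{X_n}(t) = E\exp(i t X_n) = E\exp(i t Y_n) \cdot E\exp(i t Z_n) = \phi_{Y_n}(t) \, \phi_{Z_n}(t)$, using the independence of $Y_n$ and $Z_n$. Because $Y_n \implies N(0, \sigma_1^2)$, L\'evy's continuity theorem gives $\phi_{Y_n}(t) \to \exp(-t^2 \sigma_1^2 / 2)$ pointwise in $t$. Since $Z_n \sim N(0, \sigma_2^2)$ exactly for every $n$ (not merely in the limit), we have $\phi_{Z_n}(t) = \exp(-t^2 \sigma_2^2 / 2)$ identically in $n$. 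Multiplying these limits, $\phi_{X_n}(t) \to \exp\{-t^2(\sigma_1^2 + \sigma_2^2)/2\}$, which is the characteristic function of $N(0, \sigma_1^2 + \sigma_2^2)$.

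Finally, applying the converse direction of L\'evy's continuity theorem — pointwise convergence of characteristic functions to a function that is continuous at the origin implies convergence in distribution — yields $X_n \implies N(0, \sigma_1^2 + \sigma_2^2)$, as claimed. There is no real obstacle here: the only substantive ingredient is the factorisation of the characteristic function, which requires genuine independence rather than just marginal information, and this is precisely what the hypothesis supplies. (An alternative route would be to write $X_n = Y_n + Z$ with $Z \sim N(0,\sigma_2^2)$ independent of the whole sequence $(Y_n)$ and invoke the continuous mapping theorem for the addition map applied to the joint convergence $(Y_n, Z) \implies (Y, Z)$ with $Y \sim N(0, \sigma_1^2)$ independent of $Z$; the characteristic function route is the most direct.)
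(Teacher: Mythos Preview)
Your proof is correct and follows the same approach as the paper, which simply states that the result is immediate from considering the characteristic function of $X_n$ and taking limits. You have spelled out explicitly the factorisation via independence and the application of L\'evy's continuity theorem that the paper leaves implicit.
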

\begin{proof}
    It is immediate from considering the characteristic function of $X_n$ and taking limits.
    \textcolor{black}{We remark here that the same result is obtained even with relaxing $Z_n \implies N(0, \sigma_2^2)$.}
\end{proof}
\begin{lemma}
    \label{lemma:bvm-approx}
    Suppose in the setup of Theorem \ref{theorem:bvm}, we have $T_n = Z_n + Y_n$ for random variables $T_n, Z_n, Y_n$, such that the posterior of $Y_n$ concentrates around $0$ as $n\to \infty$ and the posterior of $Z_n$ is the $N(0,1)$ density. Then as $n \to \infty,$
    $$\sup_{x} \, \left|\wt{\Pi}(T_n \leq x) - \Phi(x) \right| \overset{P_0}{\to} 0.$$
\end{lemma}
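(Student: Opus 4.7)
The plan is to exploit the standard Slutsky-type sandwich argument at the CDF level, using the fact that the limiting $\Phi$ is Lipschitz so that small perturbations produce uniformly small shifts.

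First I would condition on $|Y_n|\le \epsilon$ for an arbitrary $\epsilon>0$. Since $T_n = Z_n + Y_n$, the event $\{T_n \le x\}\cap\{|Y_n|\le\epsilon\}$ is contained in $\{Z_n\le x+\epsilon\}$, and conversely $\{Z_n\le x-\epsilon\}\cap\{|Y_n|\le\epsilon\}$ is contained in $\{T_n\le x\}$. A union-bound over the complement $\{|Y_n|>\epsilon\}$ then yields, for every $x\in\mathbb{R}$,
\begin{equation*}
\Phi(x-\epsilon) - \Pi_C(|Y_n|>\epsilon) \;\le\; \Pi_C(T_n\le x) \;\le\; \Phi(x+\epsilon) + \Pi_C(|Y_n|>\epsilon),
\end{equation*}
where I used that $\Pi_C(Z_n\le y)=\Phi(y)$ by hypothesis.

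Next I would use the Lipschitz continuity of $\Phi$ with constant $1/\sqrt{2\pi}$ (the sup of the standard normal density) to bound $\sup_x |\Phi(x\pm\epsilon) - \Phi(x)| \le \epsilon/\sqrt{2\pi}$. Combining with the sandwich above gives
\begin{equation*}
\sup_{x\in\mathbb{R}} \bigl|\Pi_C(T_n \le x) - \Phi(x)\bigr| \;\le\; \frac{\epsilon}{\sqrt{2\pi}} + \Pi_C(|Y_n|>\epsilon).
\end{equation*}

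Finally, since the pseudo-posterior of $Y_n$ concentrates at $0$, we have $\Pi_C(|Y_n|>\epsilon)\overset{P_0}{\to}0$ for every fixed $\epsilon>0$. Thus for arbitrary $\delta>0$ and any fixed $\epsilon>0$, $P_0\{\sup_x|\Pi_C(T_n\le x)-\Phi(x)|> \epsilon/\sqrt{2\pi} + \delta\}\to 0$. Letting $\epsilon,\delta\downarrow 0$ along a suitable subsequence yields the desired conclusion. There is no genuine obstacle here — the only subtlety is to ensure the argument is uniform in $x$, which is handled automatically by the Lipschitz bound on $\Phi$, so the proof reduces to the two-line sandwich above.
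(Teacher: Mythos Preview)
Your proof is correct and follows essentially the same Slutsky-type sandwich argument as the paper: bound $\Pi_C(T_n\le x)$ between $\Phi(x-\epsilon)$ and $\Phi(x+\epsilon)$ up to a $\Pi_C(|Y_n|>\epsilon)$ term, then use Lipschitz continuity of $\Phi$ and concentration of $Y_n$. Your direct two-sided sandwich is in fact a bit cleaner than the paper's version, which splits into the cases $Y_n>0$ and $Y_n\le 0$ before reaching the same Lipschitz bound; the only cosmetic wrinkle is your final ``letting $\epsilon,\delta\downarrow 0$ along a subsequence'' phrasing, which is better stated as fixing $\eta>0$, choosing $\epsilon$ with $\epsilon/\sqrt{2\pi}<\eta/2$, and concluding directly.
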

\begin{proof}
{\color{black}
Define $\omega(\epsilon) := \underset{x}{\sup} \{\Phi(x+\epsilon) - \Phi(x-\epsilon)\}$ for $\epsilon > 0$.
Since $\Phi(\cdot)$ is Lipschitz-continuous, we have $\omega(\epsilon) \to 0$ as $\epsilon \to 0+$.
Next, we fix $\delta > 0, x \in \mathbb{R}$, and choose $\epsilon > 0$ such that $\omega(\epsilon) < \delta/2.$
We now observe that
$$\{Z_n \leq x-\epsilon\} \cap \{|Y_n| \leq \epsilon\} \subseteq \{Z_n + Y_n \leq x\} \subseteq \{Z_n \leq x+\epsilon\} \cup \{|Y_n| > \epsilon\}.$$
To see why, first note that the left inclusion is trivial, as $|Y_n| \leq \epsilon \implies Y_n \leq \epsilon$ and thus $Z_n \leq x - \epsilon \implies Z_n + Y_n \leq x-\epsilon + \epsilon = x.$
To see why the right inclusion holds, consider the complement $\{Z_n > x+\epsilon\} \cap \{|Y_n| \leq \epsilon\}$.
If $|Y_n| \leq \epsilon$, we have $Y_n \geq -\epsilon$ and thus $Z_n > x+\epsilon \implies Z_n + Y_n > x + \epsilon - \epsilon = x.$
Additionally, we make the following observations: (i) $\wt \Pi[A \cap B] = \wt \Pi(A) - \wt \Pi(A \cap B^c) \geq \wt \Pi(A) - \wt \Pi(B^c)$ and (ii) $\wt \Pi(A \cup B) \leq \wt \Pi(A) + \wt \Pi(B).$
Thus, we obtain
\begin{align*}
    & \wt \Pi(Z_n \leq x-\epsilon) - \wt \Pi(|Y_n| > \epsilon) \leq \wt \Pi(T_n \leq x) \leq \wt \Pi(Z_n \leq x+ \epsilon) + \wt \Pi(|Y_n| > \epsilon)\\
    \implies &  \Phi(x-\epsilon) - \wt \Pi(|Y_n| > \epsilon) \leq \wt \Pi(T_n \leq x) \leq \Phi(x + \epsilon) + \wt \Pi(|Y_n| > \epsilon).
\end{align*}
It follows that
$$\sup_x \left|\wt \Pi(T_n \leq x) - \Phi(x)\right| \leq   \sup_x \max \left\{\Phi(x+\epsilon) - \Phi(x), \Phi(x) - \Phi(x-\epsilon)\right\} + \wt \Pi(|Y_n| > \epsilon).$$
Since $\max\{A, B\} \leq A+B$ for $A, B \geq 0$, we have
$$\sup_x \left|\wt \Pi(T_n \leq x) - \Phi(x)\right| \leq \sup_x \{\Phi(x+\epsilon) - \Phi(x-\epsilon)\} + \wt \Pi(|Y_n|>\epsilon) = \omega(\epsilon) + \wt \Pi(|Y_n|>\epsilon).$$
Then, we have
\begin{align*}
    P_0\left[\sup_x \left|\wt{\Pi}(T_n \leq x) - \Phi(x)\right| > \delta\right] & \leq P_0\left[\omega(\epsilon) + \wt{\Pi}(|Y_n| > \epsilon) > \delta\right] \\
    & \leq P_0\left[\wt \Pi(|Y_n| > \epsilon) > \delta/2\right]\\
    & \to 0,
\end{align*}
as $n \to \infty$, since the concentration property of $Y_n$ implies that $\wt{\Pi}(|Y_n| > \epsilon) \overset{P_0}{\to} 0$ as $n \to \infty.$ 
Since $\delta > 0$ is arbitrary, this proves the claim.
}
\end{proof}

\section{Sensitivity to Inverse-Gamma Prior Hyperparameters}
\label{suppsec:invgamma}

\textcolor{black}{We explore the estimation accuracy of FABLE when estimating the covariance matrix, with loadings generated from a 85-15 spike-and-slab prior with $n=100$ and $k = 10$.
We let $n = 100$ and vary $p \in \{100, 500\}$, $\gamma_0 \in \{0.1, 0.5, 1, 10\}$, and $\delta_0^2 \in \{0.1, 0.5, 1, 10\}$.
The results for $p=100$ and $p=500$ are provided in Tables Tables \ref{supptab:invg1} and \ref{supptab:invg2}, respectively.
The results remain almost identical for the different values of $(\gamma_0, \delta_0^2)$, indicating the robustness of the FABLE procedure to the choice of inverse-gamma prior hyperparameters.}

\begin{table}[H]
\centering
\caption{\normalsize Estimation errors of FABLE for different values of inverse-gamma hyperparameters $(\gamma_0, \delta_0^2)$, with $n=100$, $p = 100$, and $85\%$ sparsity spike-and-slab factor loadings. The ``Mean'' and ``Range'' columns show the average and $2.5\% - 97.5\%$ quantiles across replicates, respectively.}
\label{supptab:invg1}
\normalsize
\begin{tabular}{ccccccccc}
\hline
 & \multicolumn{2}{c}{$\delta_0^2=0.1$} & \multicolumn{2}{c}{$\delta_0^2=0.5$} & \multicolumn{2}{c}{$\delta_0^2=1$} & \multicolumn{2}{c}{$\delta_0^2=10$} \\
\hline
$\gamma_0$ & Mean & Range & Mean & Range & Mean & Range & Mean & Range \\
\hline
0.1 & 0.93 & 0.84 -- 1.05 & 0.93 & 0.84 -- 1.05 & 0.93 & 0.84 -- 1.05 & 0.93 & 0.84 -- 1.06 \\
0.5 & 0.93 & 0.84 -- 1.05 & 0.93 & 0.84 -- 1.05 & 0.93 & 0.84 -- 1.05 & 0.94 & 0.84 -- 1.06 \\
1   & 0.93 & 0.84 -- 1.05 & 0.93 & 0.84 -- 1.05 & 0.93 & 0.84 -- 1.06 & 0.94 & 0.85 -- 1.06 \\
10  & 0.93 & 0.84 -- 1.06 & 0.94 & 0.84 -- 1.06 & 0.94 & 0.85 -- 1.06 & 1.01 & 0.92 -- 1.14 \\
\hline
\end{tabular}
\normalsize
\end{table}

\begin{table}[H]
\centering
\caption{\normalsize Estimation errors of FABLE for different values of inverse-gamma hyperparameters $(\gamma_0, \delta_0^2)$, with $n=100,$ $p = 500$, and $85\%$ sparsity spike-and-slab factor loadings. The ``Mean'' and ``Range'' columns show the average and $2.5\% - 97.5\%$ quantiles across replicates, respectively.}
\label{supptab:invg2}
\normalsize
\begin{tabular}{ccccccccc}
\hline
 & \multicolumn{2}{c}{$\delta_0^2=0.1$} & \multicolumn{2}{c}{$\delta_0^2=0.5$} & \multicolumn{2}{c}{$\delta_0^2=1$} & \multicolumn{2}{c}{$\delta_0^2=10$} \\
\hline
$\gamma_0$ & Mean & Range & Mean & Range & Mean & Range & Mean & Range \\
\hline
0.1 & 0.84 & 0.78 -- 0.90 & 0.84 & 0.78 -- 0.90 & 0.84 & 0.78 -- 0.90 & 0.84 & 0.78 -- 0.90 \\
0.5 & 0.84 & 0.78 -- 0.90 & 0.84 & 0.78 -- 0.90 & 0.84 & 0.78 -- 0.90 & 0.84 & 0.78 -- 0.90 \\
1   & 0.84 & 0.78 -- 0.90 & 0.84 & 0.78 -- 0.90 & 0.84 & 0.78 -- 0.90 & 0.84 & 0.78 -- 0.91 \\
10  & 0.85 & 0.79 -- 0.89 & 0.84 & 0.79 -- 0.89 & 0.84 & 0.79 -- 0.89 & 0.83 & 0.77 -- 0.92 \\
\hline
\end{tabular}
\normalsize
\end{table}

\section{Additional Simulation Results and Tuning Details}
\label{suppsec:sim}


{\color{black}
First, we derive the values of $P_{av}$ for the spike-and-slab case as follows.
Since
$$\Lambda_{0,jl} \overset{\text{iid}}{\sim} \pi_0 \wt{\delta}_0 + (1-\pi_0) N(0, v_0^2), \quad \text{where } v_0^2 = 0.5^2,$$
we have $E^*\|\lambda_{0j}\|_2^2 = k E^*[\Lambda_{0j,1}^2] = k[(0 \times \pi_0) + (1-\pi_0) v_0^2] = k(1-\pi_0)v_0^2$.
By definition, $\bar{S}^2 = E^*[\sigma_{0j}^2] = 0.5 \times (5 + 0.5),$ as we generate $\sigma_{0j}^2 \overset{\text{iid}}{\sim} \mathcal{U}(0.5, 5).$ 
The signal-to-noise ratio (SNR) is given by $R_{av,j} = E^*\|\lambda_{0j}\|_2^2 / E^*(\sigma_{0j}^2) = k(1-\pi_0)v_0^2 / \bar{S}^2.$
Finally, we report $P_{av} = (1/p)\sum_{j=1}^{p} \left\{R_{av,j} / (1 + R_{av,j})\right\}.$ 
}

Next, we provide results for the additional simulation experiments as described in the main document.
\begin{enumerate}
    \item $\,$ Table \ref{supptab:esterror} shows relative errors for the competitors for different $(n,p)$ when the loadings are generated 
    \textcolor{black}{according to setting (ii) of the spike-and-slab setup in Section \ref{subsec:preliminaries}.}
ROTATE performs slightly better than FABLE for $n = 1000$ while FABLE performs slightly better than ROTATE for $n = 500$.
We do not provide the results for MGSP as it did not perform well. 
\textcolor{black}{For this case, $P_{av} = 12\%$.}
\item $\,$ Table \ref{supptab:smallnpsims} provides relative errors for the competitors when loadings are generated 
\textcolor{black}{according to setting (iii) of the spike-and-slab setup in Section \ref{subsec:preliminaries}.}
The ROTATE approach did not converge for the vast majority of replicates; we report summary statistics for the replicates where ROTATE did converge.
In these settings, the blessing of dimensionality does not hold, leading to  inferior performance of FABLE for most cases. \textcolor{black}{The values of $P_{av}$ for $\pi_0 = 0.5$ and $\pi_0 = 0.85$ are $P_{av} = 31\%$ and $P_{av} = 12\%$, respectively.}
\item $\,$ Table \ref{supptab:largeksims} provides relative errors for the setting (iv) in Section \ref{subsec:preliminaries} with $\pi_0 = 0.5$ and a larger $k = 50$. 
In this case, the LW approach performs the best overall across the choices of $(n, p)$.
For reasons of computational feasibility, we did not run MGSP.
\textcolor{black}{For this case, $P_{av} \approx 70\%$.}
\end{enumerate}


\begin{table}[H]
\centering
\caption{\normalsize Comparison of estimation error between multiple approaches, with $85\%$ sparsity in spike-and-slab factor loadings. The ``Mean'' and ``Range'' columns show the average and $2.5\%-97.5\%$ quantiles across replicates, respectively.}
\label{supptab:esterror}
\resizebox{\textwidth}{!}{%
\begin{tabular}{lcccccccc}
\hline
\multicolumn{1}{c}{$(n,p)$} & \multicolumn{2}{c}{$(500,1000)$} & \multicolumn{2}{c}{$(1000,1000)$} & \multicolumn{2}{c}{$(500,5000)$} & \multicolumn{2}{c}{$(1000,5000)$} \\
       Method & Mean & Range & Mean & Range & Mean & Range & Mean & Range \\
\hline
FABLE & 0.40 & 0.37 -- 0.45 
& 0.30 & 0.28 -- 0.34 
& 0.43 & 0.40 -- 0.47 
& 0.31 & 0.29 -- 0.35 \\
ROTATE   & 0.43 & 0.39 -- 0.48           & 0.28 & 0.25 -- 0.31           & 0.47 & 0.41 -- 0.54           & 0.30 & 0.27 -- 0.33           \\
HT       & 0.47    & 0.44 -- 0.52           & 0.32    & 0.28 -- 0.34           & 0.50    & 0.46 -- 0.55           & 0.34    & 0.30 -- 0.36            \\
SCAD     & 0.40    & 0.36 -- 0.44           & 0.40    & 0.37 -- 0.43           & 0.43    & 0.39 -- 0.47           & 0.42    & 0.39 -- 0.45           \\
LW       & 0.59    & 0.56 -- 0.62           & 0.45    & 0.42 -- 0.48           & 0.63    & 0.61 -- 0.66           & 0.48    & 0.45 -- 0.51           \\
\hline
\end{tabular}
}
\end{table}

\begin{table}[H]
\centering
\caption{\normalsize Comparison of estimation error between multiple approaches, with $50\%$ and $85\%$ sparsity in spike-and-slab factor loadings, with smaller $n, p$. 
The ``Mean'' and ``Range'' columns show the average and $2.5\%-97.5\%$ quantiles across replicates, respectively.}
\label{supptab:smallnpsims}
\resizebox{\textwidth}{!}{%
\begin{tabular}{lcccccccc}
\hline
\multicolumn{1}{c}{} & \multicolumn{4}{c}{$\pi_0 = 0.50$} & \multicolumn{4}{c}{$\pi_0 = 0.85$} \\
\cline{2-5} \cline{6-9}
\multicolumn{1}{c}{$(n,p)$} & \multicolumn{2}{c}{$(100,100)$} & \multicolumn{2}{c}{$(100,500)$} & \multicolumn{2}{c}{$(100,100)$} & \multicolumn{2}{c}{$(100,500)$} \\
\cline{2-3} \cline{4-5} \cline{6-7} \cline{8-9}
Method & Mean & Range & Mean & Range & Mean & Range & Mean & Range \\
\hline
FABLE & 0.70 & 0.54 -- 0.85 
& 0.67 & 0.56 -- 0.80 
& 0.93 & 0.84 -- 1.06 
& 0.84 & 0.78 -- 0.90 \\
MGSP  & 0.61 & 0.45 -- 0.87 
& 0.88 & 0.66 -- 1.18 
& 0.68 & 0.49 -- 0.87 
& 1.65 & 1.17 -- 2.31 \\ 
ROTATE   & 0.71 & 0.66 -- 0.76 
& NA & NA -- NA
& 0.61 & 0.57 -- 0.69 
& NA & NA -- NA \\
HT    & 0.68 & 0.52 -- 0.80 
& 0.82 & 0.68 -- 0.93 
& 0.60 & 0.53 -- 0.66 
& 0.82 & 0.77 -- 0.86 \\
SCAD  & 0.60 & 0.50 -- 0.69 
& 0.73 & 0.61 -- 0.81 
& 0.55 & 0.48 -- 0.61 
& 0.74 & 0.69 -- 0.80 \\
LW    & 0.55 & 0.47 -- 0.64 
& 0.63 & 0.57 -- 0.71 
& 0.55 & 0.49 -- 0.62 
& 0.75 & 0.70 -- 0.79 \\
\hline
\end{tabular}
}
\end{table}


\begin{table}[H]
\centering
\caption{\normalsize Comparison of estimation error between multiple approaches, with $50\%$ sparsity in spike-and-slab factor loadings and $k = 50$. 
The ``Mean'' and ``Range'' columns show the average and $2.5\%-97.5\%$ quantiles across replicates, respectively.}
\label{supptab:largeksims}
\resizebox{\textwidth}{!}{%
\begin{tabular}{lcccccccc}
\hline
\multicolumn{1}{c}{$(n,p)$} & \multicolumn{2}{c}{$(500,1000)$} & \multicolumn{2}{c}{$(1000,1000)$} & \multicolumn{2}{c}{$(500,5000)$} & \multicolumn{2}{c}{$(1000,5000)$} \\
       Method & Mean & Range & Mean & Range & Mean & Range & Mean & Range \\
\hline
FABLE & 0.50 & 0.44 -- 0.57 
& 0.35 & 0.32 -- 0.39 
& 0.59 & 0.54 -- 0.65 
& 0.41 & 0.38 -- 0.44 \\
ROTATE   & 0.59 & 0.57 -- 0.63           
& 0.41 & 0.39 -- 0.44           
& 0.55 & 0.52 -- 0.59           
& 0.38 & 0.36 -- 0.41           \\
HT       & 0.52    & 0.46 -- 0.59           & 0.34    & 0.31 -- 0.38           
& 0.62    & 0.57 -- 0.69           
& 0.41    & 0.38 -- 0.44            \\
SCAD     & 0.55    & 0.49 -- 0.61           & 0.37    & 0.33 -- 0.40           
& 0.65    & 0.60 -- 0.73           
& 0.44    & 0.41 -- 0.47           \\
LW       & 0.45    & 0.43 -- 0.48           & 0.33    & 0.31 -- 0.36           
& 0.52    & 0.50 -- 0.55           
& 0.39    & 0.37 -- 0.41           \\
\hline
\end{tabular}
}
\end{table}


Lastly, we present details regarding hyperparameter tuning of the methods HT, SCAD, and LW, as implemented in \texttt{R} with the \texttt{cvCovEst} package \citep{boileau2022cvcovest}.
The HT implementation \texttt{thresholdingEst} requires tuning of the thresholding parameter \texttt{gamma}, while the SCAD implementation \texttt{scadEst} requires tuning of the thresholding parameter \texttt{lambda}.
Both of these thresholding parameters are chosen via 5-fold cross-validation using the matrix Frobenius loss.
The SCAD implementation also requires a shape parameter; we fixed this at its default value $a = 3.7.$
{\small
\begin{verbatim}
    library(cvCovEst)
    thresholdingResults <- cvCovEst(
      dat = ...,
      estimators = c(thresholdingEst),
      estimator_params = list(
        thresholdingEst = list(gamma = seq(0.05, 0.5, length.out = 10))),
      cv_loss = cvMatrixFrobeniusLoss,
      cv_scheme = "v_fold",
      v_folds = 5
    )
    scadResults <- cvCovEst(
      dat = ...,
      estimators = c(scadEst),
      estimator_params = list(
        scadEst        = list(lambda = seq(0.01, 0.5, length.out = 10))),
      cv_loss = cvMatrixFrobeniusLoss,
      cv_scheme = "v_fold",
      v_folds = 5
    )
\end{verbatim}
}
The LW implementation via \texttt{linearShrinkLWEst} does not involve tuning any hyperparameters. 
Instead, its shrinkage hyperparameter is computed analytically from the data.
{\small
\begin{verbatim}
    linearShrinkLWEstResults <- linearShrinkLWEst(...)
\end{verbatim}
}

\section{Gene Data Application Results}
\label{suppsec:app}

{\color{black}
\begin{enumerate}
    \item \, Table \ref{supptab:appblessing} provides negative out-of-sample log-likelihood (OOSLL) values in $10^3$ for the covariance submatrix exercise illustrated in Section \ref{subsec:appblessing}.
    \item \, Table \ref{supptab:apptraintest} provides negative out-of-sample log-likelihood (OOSLL) values in $10^3$ for the direct train-test split exercise illustrated in Section \ref{subsec:apptraintest}.
\end{enumerate}
}

\begin{table}[H]
\centering
\caption{\normalsize Comparison of negative out-of-sample log likelihood ($\times 10^3$) across methods by number of additional variables (Dims). 
The training set consists of $155$ samples and the test set consists of $50$ samples.
Each cell shows Mean and Range ($2.5\% - 97.5\%$ quantiles). Smaller values are better.}
\label{supptab:appblessing}
\resizebox{\textwidth}{!}{%
\begin{tabular}{c|cc|cc|cc|cc|cc}
\toprule
Dims & \multicolumn{2}{c|}{FABLE} & \multicolumn{2}{c|}{MGSP} & \multicolumn{6}{c}{ROTATE} \\
 & & & & & \multicolumn{2}{c|}{\texttt{lambda0=1}} & \multicolumn{2}{c|}{\texttt{lambda0=5}} & \multicolumn{2}{c}{\texttt{lambda0=10}} \\
 & Mean & Range & Mean & Range & Mean & Range & Mean & Range & Mean & Range \\
\midrule
0    & 6.29 & 5.47--7.71 & 5.40 & 5.06--6.09 & 6.31 & 5.46--7.73 & 6.26 & 5.43--7.66 & 6.20 & 5.39--7.54 \\
100  & 6.00 & 5.27--7.34 & 5.40 & 5.06--6.11 & 6.19 & 5.38--7.55 & 6.20 & 5.37--7.58 & 6.20 & 5.36--7.63 \\
200  & 5.70 & 5.13--6.72 & 5.42 & 5.18--6.08 & 6.06 & 5.30--7.43 & 6.09 & 5.32--7.50 & 6.13 & 5.34--7.58 \\
300  & 5.60 & 5.09--6.50 & 5.40 & 5.07--6.03 & 5.98 & 5.25--7.29 & 6.02 & 5.27--7.38 & 6.08 & 5.31--7.49 \\
400  & 5.56 & 5.09--6.44 & 5.46 & 5.16--6.06 & 5.94 & 5.21--7.23 & 5.99 & 5.23--7.34 & 6.07 & 5.27--7.50 \\
500  & 5.58 & 5.18--6.42 & 5.43 & 5.11--6.06 & 5.91 & 5.19--7.19 & 5.96 & 5.21--7.30 & 6.04 & 5.26--7.43 \\
600  & 5.57 & 5.17--6.39 & 5.49 & 5.29--6.15 & 5.89 & 5.18--7.15 & 5.94 & 5.21--7.28 & 6.03 & 5.26--7.44 \\
700  & 5.56 & 5.18--6.38 & 5.50 & 5.24--6.19 & 5.88 & 5.17--7.15 & 5.94 & 5.19--7.25 & 6.03 & 5.24--7.40 \\
800  & 5.56 & 5.16--6.37 & 5.49 & 5.19--6.09 & 5.86 & 5.17--7.14 & 5.92 & 5.19--7.24 & 6.01 & 5.24--7.38 \\
900  & 5.55 & 5.16--6.34 & 5.50 & 5.22--6.10 & 5.85 & 5.17--7.12 & 5.92 & 5.20--7.25 & 6.01 & 5.24--7.41 \\
1000 & 5.55 & 5.15--6.33 & 5.58 & 5.33--6.07 & 5.85 & 5.16--7.12 & 5.91 & 5.17--7.27 & 6.01 & 5.23--7.42 \\
2000 & 5.54 & 5.18--6.26 & 5.64 & 5.40--6.18 & 5.78 & 5.13--7.00 & 5.85 & 5.17--7.10 & 5.96 & 5.23--7.23 \\
4000 & 5.60 & 5.28--6.23 & 5.83 & 5.66--6.21 & 5.76 & 5.10--6.96 & 5.82 & 5.13--7.09 & 5.96 & 5.22--7.29 \\
\hline
\end{tabular}%
}
\end{table}

\begin{table}[H]
\centering
\caption{\normalsize Comparison of negative out-of-sample log likelihood ($\times 10^3$) across methods by training set sizes ($n_{\mathrm{T}}$), evaluated on a test set of $35$ samples. 
Each cell shows Mean and Range (Lower--Upper). Smaller values are better.}
\label{supptab:apptraintest}
\resizebox{\textwidth}{!}{%
\begin{tabular}{c|cc|cc|cc|cc|cc}
\toprule
$n_{\text{T}}$ & \multicolumn{2}{c|}{FABLE} & \multicolumn{2}{c|}{MGSP} & \multicolumn{6}{c}{ROTATE} \\
& & & & & \multicolumn{2}{c|}{\texttt{lambda0=1}} & \multicolumn{2}{c|}{\texttt{lambda0=5}} & \multicolumn{2}{c}{\texttt{lambda0=10}} \\
& Mean & Range & Mean & Range & Mean & Range & Mean & Range & Mean & Range \\
\midrule
110 & 152.61 & 138.07--170.55 & 136.18 & 129.54--147.44 & 230.62 & 195.83--273.07 & 228.23 & 193.56--269.57 & 222.61 & 189.10--263.03 \\
130 & 145.86 & 131.29--162.20 & 131.68 & 121.98--152.25 & 175.36 & 139.47--213.48 & 173.28 & 137.92--210.97 & 170.25 & 135.43--207.49 \\
150 & 141.57 & 126.05--157.70 & 118.61 & 114.80--122.95 & 154.55 & 119.74--186.32 & 152.61 & 118.32--183.79 & 151.15 & 117.75--181.74 \\
170 & 127.24 & 109.65--144.21 & 116.85 & 110.62--126.17 & 123.27 & 103.13--140.77 & 121.97 & 101.82--139.23 & 122.31 & 101.82--139.18 \\
\hline
\end{tabular}%
}
\end{table}

\section{FABLE Algorithm}
\label{suppsec:algorithm}

\begin{algorithm}[H]
\captionsetup{labelfont={sc,bf}, labelsep=newline}
\fontsize{11}{15}\selectfont 
\setlength{\abovedisplayskip}{6pt} 
\setlength{\belowdisplayskip}{6pt} 
  \caption{Steps to obtain $N_0$ samples from the FABLE-posterior for the covariance matrix.}
  \label{algorithm:fable}
    \textbf{Input:} Data matrix $\Y \in \mathbb{R}^{n \times p}$, $N_0$ Monte Carlo (MC) samples with default $N_0 = 1000$, inverse-gamma hyperparameters $(\gamma_0, \delta_0^2)$ with default $\gamma_0 = \delta_0^2 = 1$, and the upper bound on the cumulative singular value proportion ${S}_0$ with default  \textcolor{black}{$S_0 = 0.75$}. Let $r = n \wedge p$.\\
    
    \textbf{Step 1:} Compute the SVD of $\Y$ as $\Y = U^* D^* V^{* \top}$ with $U \in \mathbb{R}^{n \times r}, D \in \mathbb{R}^{r \times r}, V \in \mathbb{R}^{p \times r}$, and $D = \mbox{diag}(s_1,\ldots,s_r)$ such that $s_1 \geq s_2 \geq \ldots \geq s_r \geq 0.$
    
    \noindent
    \textbf{Step 2:} 
    Let $\widehat{k} = \arg \min \, \{\mbox{JIC}(k^*) \, \vert \, k^* = 1, \ldots, \mathcal{K}_0\},$ where 
    $$\mathcal{K}_0 = \min\left\{\mathcal{K}: 1 \leq \mathcal{K} \leq r, \left(\sum_{j=1}^{\mathcal{K}} s_j\right)/\left(\sum_{j=1}^{r} s_j\right) \geq S_0 \right\}.$$

    \noindent
    \textbf{Step 3:} Let $U$ consist of the first $\widehat{k}$ columns of $U^*$. For $1 \leq j \leq p$, let $y^{(j)}$ denote the $j$th column of $\Y$, and obtain
$    \mathcal{L}_j^2  = \|U^{\top} y^{(j)}\|_2^2 / n$ and 
$        \mathcal{V}_j^2  =  \|(\mathbb{I}_n - UU^{\top}) y^{(j)}\|_2^2 / n$ for $1 \leq j \leq p.$

    \noindent
    \textbf{Step 4:} Estimate $\tau^2$ by $$\widehat{\tau}^2 = \dfrac{1}{p\widehat{k}} \sum_{j=1}^{p} \dfrac{\mathcal{L}_j^2}{\mathcal{V}_j^2}.$$

    \noindent
    \textbf{Step 5:} For $1 \leq j \leq p$, let $\mu_j = \sqrt{n} U^{\top} y^{(j)} / \left(n + \widehat{\tau}^{-2} \right).$
    For $1 \leq u, v \leq p$, let 
\begin{align*}
\begin{split}
b_{uv} & = \left\{\begin{array}{ll}
\left(1 + \dfrac{\|\mu_u\|_2^2 \|\mu_v\|_2^2 + (\mu_u^{\top} \mu_v)^2}{\mathcal{V}_u^2 \|\mu_v\|^2 + \mathcal{V}_v^2 \|\mu_u\|_2^2}\right)^{1/2}, & \text{if $u \neq v,$}\\
\left(1 + \dfrac{\|\mu_u\|_2^2}{2 \mathcal{V}_u^2}\right)^{1/2}, & \text{if $u = v,$}
\end{array}
\right.
\end{split}
\end{align*}
and let $\rho$ be the solution to \eqref{eq:rhoChoiceMean}; a convenient expression for $\wh{q}_{uv}(\rho)$ is provided in \eqref{suppeq:estCovConvenient}.
Other approaches to choose $\rho$ are highlighted in Section \ref{subsec:cov_correction}.

\textbf{Step 6:} Let $\gamma_n = \gamma_0 + n$. For $j=1, \ldots, p$, evaluate
$$\gamma_n \delta_j^2 = \gamma_0 \delta_0^2 + y^{(j) \T} \left(\mathbb{I}_n - \dfrac{nUU^{\T}}{n + \widehat{\tau}^{-2}}\right) y^{(j)}.$$

\textbf{Step 7:} For each $t = 1, \ldots, N_0$, independently sample $(\wt{\lambda}_j^{(t)}, \wt{\sigma}_{j}^{(t) 2})$ for $j=1,\ldots,p$ following
$$\wt{\sigma}_{j}^{(t) 2}  \sim \mbox{IG}\left(\dfrac{\gamma_n}{2}, \dfrac{\gamma_n \delta_j^2}{2}\right),\qquad 
    \wt{\lambda}_{j}^{(t)} \mid \wt{\sigma}_{j}^{(t) 2}  \sim N_k\left(\mu_j, \dfrac{\rho^2 \wt{\sigma}_{j}^{(t) 2}}{n + \widehat{\tau}^{-2}} \mathbb{I}_k \right).$$ 
Form $\wt{\Lambda}^{(t)} = \left[\wt{\lambda}_{1}^{(t)}, \ldots, \wt{\lambda}_{p}^{(t)}\right]^{\T}$ and $\wt{\Sigma}^{(t)} = \mbox{diag}\left(\wt{\sigma}_{1}^{(t)2}, \ldots, \wt{\sigma}_{p}^{(t)2}\right)$.\\

\textbf{Step 8:} 
For each $t=1,\ldots,N_0$, 
compute the $t$-th coverage-corrected sample as 
$$    \wt{\Psi}^{(t)} = \wt{\Lambda}^{(t)} \wt\Lambda^{(t) \top} + \wt\Sigma^{(t)}. $$

\textbf{Output:} The $N_0$ MC samples of the covariance matrix $\wt\Psi^{(1)}, \ldots, \wt\Psi^{(N_0)}$.\\

\end{algorithm}

\newpage

We remark here that \eqref{eq:rhoChoiceMean} may be conveniently rewritten entirely in terms of the coefficients $(b_{uv})_{1 \leq u, v \leq p}.$
From \eqref{eq:asympcoverage}, the estimated frequentist coverage $\wh{q}_{uv}(\rho)$ for entry $(u,v)$ as a function of $\rho$ is given by
\begin{align}
    \label{suppeq:estCovConvenient}
    \wh{q}_{uv}(\rho) = 2 \Phi\left\{z_{1 - (\alpha/2)} \mathcal{R}_{uv}(\rho)\right\} - 1,
\end{align}
where the ratio $\mathcal{R}_{uv}(\rho) = \wh{l}_{0,uv}(\rho) / \wh{S}_{0,uv}$ is given by 
\begin{align*}
\label{suppeq:covCorrectionRatioFunction}
\begin{split}
\mathcal{R}_{uv}(\rho) & = \left\{\begin{array}{ll}
\dfrac{\rho}{b_{uv}}, &  \text{if $u \neq v,$}\\ \\
\left\{\dfrac{1 + 4\rho^2(b_{uu}^2-1)}{\left(2b_{uu}^2 - 1 \right)^2}\right\}^{1/2}, & \text{if $u = v.$}
\end{array}
\right.
\end{split}
\end{align*}
It is straightforward to verify that $\mathcal{R}_{uv}(b_{uv}) = 1$ for all $1 \leq u, v \leq p.$

\section{Comparisons Between JIC, AIC, and BIC}
\label{suppsec:JICvsAICBIC}


Naive implementations of information-theoretic approaches utilizing the marginal likelihood, such as the Akaike information criterion (AIC) or the Bayesian information criterion (BIC) \citep{akaike1974new, akaike1987factor} scale poorly as $n, p$ increase. 
This is primarily due to the inversion of the $p \times p$ covariance $\Psi^{(k)} = \Lambda^{(k)} \Lambda^{(k)\top} + \Sigma$ with theoretical complexity $\mathcal{O}(p^3)$ for each candidate $k$ when evaluating the marginal likelihood. 
The complexity of this step is reduced to $\mathcal{O}(pk^2)$ with the Sherman-Morrison-Woodbury (SMW) formula \citep{sherman1950adjustment, woodbury1950inverting}
\begin{equation}
    \label{eq:smwfactor}
\left(\Lambda^{(k)}\Lambda^{(k)\top} + \Sigma \right)^{-1} = \Sigma^{-1} - \Sigma^{-1} \Lambda^{(k)}\left(\mathbb{I}_k + \Lambda^{(k)\top} \Sigma^{-1} \Lambda^{(k)}\right)^{-1} \Lambda^{(k)\top} \Sigma^{-1}.
\end{equation}
However, the numerical error associated with the use of the general SMW formula
\begin{equation}
\label{eq:smw}
    (\mathcal{A} + \mathcal{U}\mathcal{V}^\top)^{-1} = \mathcal{A}^{-1} - \mathcal{A}^{-1} \mathcal{U}\left(\mathbb{I} + \mathcal{V}^\top \mathcal{A}^{-1} \mathcal{U}\right)^{-1}\mathcal{V}^\top \mathcal{A}^{-1}
\end{equation}
can be indefinitely large if either $\mathcal{A}$ is ill-conditioned or the capacitance matrix $\mathcal{C} := \mathbb{I} + \mathcal{V}^\top \mathcal{A}^{-1} \mathcal{U}$ is ill-conditioned (or both).
In factor models, we have $\mathcal{A} = \Sigma$ and $\mathcal{C} \equiv \mathcal{C}^{(k)} = \mathbb{I}_k + \Lambda^{(k)\top} \Sigma^{-1} \Lambda^{(k)}$.
Thus, if some of the diagonal entries of $\Sigma$ are close to $0$, both $\Sigma$ and $\mathcal{C}^{(k)}$ could be ill-conditioned for inversion in \eqref{eq:smwfactor}.
Additionally, $\mathcal{C}^{(k)}$ could potentially be ill-conditioned if we set $\mathcal{U} = \mathcal{V} = \Lambda^{(k)}$ in \eqref{eq:smw}, 
since a naive choice of $\mathcal{U}$ and $\mathcal{V}$ when implementing the SMW formula may lead to an ill-conditioned $\mathcal{C}$ and numerical instability \citep{hager1989updating}.
Furthermore, \cite{ma2025note} 
demonstrate that when $\mathcal{C}$ is ill-conditioned, even using an approximate inverse of $\mathcal{C}$ in \eqref{eq:smw} can substantially amplify both the forward and backward numerical errors corresponding to matrix inversion with the SMW formula.

In contrast, since the JIC is based on the joint likelihood of $ (\mathbf{M}^{(k)}, \Lambda^{(k)}, \Sigma)$,
it avoids the inversion of $\Psi^{(k)}$ altogether.
As noted by \cite{chen2022determining}, neither AIC nor BIC can consistently estimate the true rank of the model from joint likelihoods, as the number of parameters diverges with increasing sample size.
The authors show that their proposed JIC approach alleviates such issues, ensuring both computational scalability and consistency of the estimate $\wh{k}.$
Additionally, as noted in Section \ref{subsubsec:tunek}, the JIC is applicable in those factor models where the marginal likelihood is not explicitly available, but the joint likelihood is. 
For instance, such situations arise with non-Gaussian errors or non-linear measurement structures.

Next, we carry out simulation studies to assess the performance of $\wh{k}$ with the JIC 
and compare with the AIC and BIC.
The results, available in Table \ref{supptab:kEst}, consist of the proportion of replicates for which $\wh{k}$ equals, is less than, and greater than the true rank $k_0$ of the true signal $M_0 \Lambda_0^\top$, with $\wh k$ estimated using the JIC, AIC, and BIC.
For small $n = 100$, the results can be substantially improved by relying on more principled estimators of the signal matrix instead of a direct spectral estimate.
All three penalties perform similarly across the vast majority of simulation scenarios, with the AIC providing slight improvements for some cases with small $n$.

\begin{table}[H]
\caption{\normalsize Proportion across replicates for which the estimate $\wh{k}$ equals, is less than, and is greater than the true rank $k_0$ for all the simulation cases considered.
BD and SS denote the block-diagonal and spike-and-slab loadings cases, respectively.}
\centering
\label{supptab:kEst}

\resizebox{\textwidth}{!}{%
\begin{tabular}{l l l l | r r r | r r r | r r r}
\hline
& & & &
\multicolumn{3}{c|}{Proportion (JIC)} &
\multicolumn{3}{c|}{Proportion (AIC)} &
\multicolumn{3}{c}{Proportion (BIC)} \\
Model & Sparsity & $k_0$ & $(n,p)$ &
$\wh{k}=k_0$ & $\wh{k}<k_0$ & $\wh{k}>k_0$ &
$\wh{k}=k_0$ & $\wh{k}<k_0$ & $\wh{k}>k_0$ &
$\wh{k}=k_0$ & $\wh{k}<k_0$ & $\wh{k}>k_0$ \\
\hline

\multirow{4}{*}{BD}
 & \multirow{4}{*}{85\%} & 10 & (500,1000)
 & 1.00 & 0.00 & 0.00
 & 1.00 & 0.00 & 0.00
 & 1.00 & 0.00 & 0.00 \\
 & & 10 & (500,5000)
 & 1.00 & 0.00 & 0.00
 & 1.00 & 0.00 & 0.00
 & 1.00 & 0.00 & 0.00 \\
 & & 10 & (1000,1000)
 & 1.00 & 0.00 & 0.00
 & 1.00 & 0.00 & 0.00
 & 1.00 & 0.00 & 0.00 \\
 & & 10 & (1000,5000)
 & 1.00 & 0.00 & 0.00
 & 1.00 & 0.00 & 0.00
 & 1.00 & 0.00 & 0.00 \\
\cline{1-13}

\multirow{16}{*}{SS}
 & \multirow{10}{*}{50\%}
 & 10 & (100,100)
 & 0.00 & 0.00 & 1.00
 & 0.14 & 0.26 & 0.60
 & 0.00 & 1.00 & 0.00 \\
 & & 10 & (100,500)
 & 0.30 & 0.70 & 0.00
 & 0.00 & 0.00 & 1.00
 & 0.00 & 0.00 & 1.00 \\
 & & 10 & (500,1000)
 & 1.00 & 0.00 & 0.00
 & 1.00 & 0.00 & 0.00
 & 1.00 & 0.00 & 0.00 \\
 & & 10 & (500,5000)
 & 1.00 & 0.00 & 0.00
 & 1.00 & 0.00 & 0.00
 & 1.00 & 0.00 & 0.00 \\
 & & 10 & (1000,1000)
 & 1.00 & 0.00 & 0.00
 & 1.00 & 0.00 & 0.00
 & 1.00 & 0.00 & 0.00 \\
 & & 10 & (1000,5000)
 & 1.00 & 0.00 & 0.00
 & 1.00 & 0.00 & 0.00
 & 1.00 & 0.00 & 0.00 \\
\cline{3-13}

 & & 50 & (500,1000)
 & 1.00 & 0.00 & 0.00
 & 1.00 & 0.00 & 0.00
 & 1.00 & 0.00 & 0.00 \\
 & & 50 & (500,5000)
 & 1.00 & 0.00 & 0.00
 & 1.00 & 0.00 & 0.00
 & 1.00 & 0.00 & 0.00 \\
 & & 50 & (1000,1000)
 & 1.00 & 0.00 & 0.00
 & 1.00 & 0.00 & 0.00
 & 1.00 & 0.00 & 0.00 \\
 & & 50 & (1000,5000)
 & 1.00 & 0.00 & 0.00
 & 1.00 & 0.00 & 0.00
 & 1.00 & 0.00 & 0.00 \\
\cline{2-13}

 & \multirow{6}{*}{85\%}
 & 10 & (100,100)
 & 0.00 & 0.98 & 0.02
 & 0.00 & 0.46 & 0.54
 & 0.00 & 0.98 & 0.02 \\
 & & 10 & (100,500)
 & 0.00 & 1.00 & 0.00
 & 0.00 & 0.00 & 1.00
 & 0.00 & 0.00 & 1.00 \\
 & & 10 & (500,1000)
 & 0.98 & 0.02 & 0.00
 & 1.00 & 0.00 & 0.00
 & 0.00 & 1.00 & 0.00 \\
 & & 10 & (500,5000)
 & 0.10 & 0.90 & 0.00
 & 1.00 & 0.00 & 0.00
 & 1.00 & 0.00 & 0.00 \\
 & & 10 & (1000,1000)
 & 1.00 & 0.00 & 0.00
 & 1.00 & 0.00 & 0.00
 & 1.00 & 0.00 & 0.00 \\
 & & 10 & (1000,5000)
 & 1.00 & 0.00 & 0.00
 & 1.00 & 0.00 & 0.00
 & 1.00 & 0.00 & 0.00 \\
\hline
\end{tabular}
}
\end{table}


\section{Theoretical Computational Complexity of FABLE}
\label{suppsec:runtimeTheory}


For the purposes of computing theoretical computational complexity of FABLE, we assume $k$ is fixed and $n, p$ vary. 
We first consider the order complexity when generating $N_0$ posterior samples of the covariance matrix. 
The first stage SVD of the $n \times p$ matrix $\mathbf{Y}$ has $\mathcal{O}(np \min(n,p))$ complexity \citep{vasudevan2017hierarchical}, which needs to be computed only once. 
Computing the $p$ hyperparameters $\mu_j$ and $\delta_j^2$ for $j=1,\ldots,p$ has $\mathcal{O}(np)$ complexity, while generating each Monte Carlo sample of $(\wt{\Lambda}, \wt{\Sigma})$ for $j = 1, \ldots, p$ has $\mathcal{O}(p)$ complexity since $(\wt{\lambda}_{j}, \wt{\sigma}_{j}^2)$ are independently generated for $j=1,\ldots,p$. 
Lastly, obtaining $\wt\Psi = \wt\Lambda \wt\Lambda^{\top} + \wt\Sigma$ has $\mathcal{O}(p^2)$ complexity for each sample, leading to $\mathcal{O}(p^2 N_0)$ complexity to obtain $N_0$ samples. 
Thus, obtaining $N_0$ posterior samples of the covariance matrix after coverage-correction has $\mathcal{O}\{p^2 N_0 + p N_0 + np \min(n,p)\} = \mathcal{O}\{p^2 N_0 + np \min(n,p)\}$ complexity. 

Next, we consider the order complexity to obtain the FABLE-posterior mean of the covariance. 
As before, we require computing the SVD with $\mathcal{O}\{np \min(n,p)\}$ complexity, while computing the hyperparameters $(\mu_j, \delta_j^2)$ for $j = 1,\ldots,p$ has $\mathcal{O}(np)$ complexity. 
Obtaining the matrix $G = G_0 G_0^{\T}$ from $G_0$ has $\mathcal{O}(p^2)$ complexity.
Thus, computing the FABLE-posterior mean of $\Psi$ has asymptotic complexity $\mathcal{O}\{p^2 + np \min(n,p)\}$. 

The runtime for both sampling and the FABLE-posterior mean computation can be substantially improved by computationally efficient implementations of the SVD step, such as the randomized SVD \citep{halko2011finding} when either $n$ or $p$ is very large. Additionally, since FABLE employs embarrassingly parallel sampling, its computational runtime would be further improved by parallelizing the computations across cores or utilizing graphics processing units (GPUs).


\end{document}